\documentclass[sigconf,balance=false]{acmart}
\usepackage{popets}

\setcopyright{popets}
\copyrightyear{YYYY}

\acmYear{YYYY}
\acmVolume{YYYY}
\acmNumber{X}
\acmDOI{XXXXXXX.XXXXXXX}
\acmISBN{}
\acmConference{Proceedings on Privacy Enhancing Technologies}
\settopmatter{printacmref=false,printccs=false,printfolios=true}


\usepackage[T1]{fontenc}
\usepackage{microtype} 

\usepackage[table]{xcolor}

\usepackage[symbol]{footmisc} 

\usepackage{amsmath,amsthm}
\usepackage{mathtools} 
\DeclareMathOperator*{\argmax}{arg\,max}

\usepackage{soul,bm} 
\usepackage{fancyvrb} 
\usepackage[most]{tcolorbox} 


\usepackage[font=small]{caption} 
\usepackage{subcaption,wrapfig} 
\usepackage{graphicx}

\usepackage{tabularx}
\newcolumntype{Y}{>{\centering\arraybackslash}X} 
\usepackage{booktabs} 
\usepackage{threeparttable} 
\usepackage{multirow} 
\usepackage{makecell} 
\usepackage{pifont}   

\usepackage{algorithmic}
\usepackage[ruled,vlined,linesnumbered]{algorithm2e} 

\usepackage{tikz}
\usetikzlibrary{fit,calc}

\colorlet{mypink}{magenta}
\colorlet{myblue}{cyan}


\usepackage{listings}
\definecolor{delim}{RGB}{20,105,176}
\definecolor{numb}{RGB}{106, 109, 32}
\definecolor{string}{rgb}{0.64,0.08,0.08}
\lstdefinelanguage{json}{
    numbers=left,
    numberstyle=\scriptsize\color{gray},
    backgroundcolor=\color{gray!5},
    showspaces=false,
    showtabs=false,
    breaklines=true,
    postbreak=\raisebox{0ex}[0ex][0ex]{\ensuremath{\color{gray}\hookrightarrow\space}},
    breakatwhitespace=true,
    basicstyle=\ttfamily\small,
    upquote=true,
    morestring=[b]",
    stringstyle=\color{string},
    literate=
     *{0}{{{\color{numb}0}}}{1}
      {1}{{{\color{numb}1}}}{1}
      {2}{{{\color{numb}2}}}{1}
      {3}{{{\color{numb}3}}}{1}
      {4}{{{\color{numb}4}}}{1}
      {5}{{{\color{numb}5}}}{1}
      {6}{{{\color{numb}6}}}{1}
      {7}{{{\color{numb}7}}}{1}
      {8}{{{\color{numb}8}}}{1}
      {9}{{{\color{numb}9}}}{1}
      {\{}{{{\color{delim}{\{}}}}{1}
      {\}}{{{\color{delim}{\}}}}}{1}
      {[}{{{\color{delim}{[}}}}{1}
      {]}{{{\color{delim}{]}}}}{1},
}


\newtheorem{definition}{\textbf{Definition}}
\newtheorem{theorem}{\textbf{Theorem}}

\newtheorem{proposition}{\textbf{Proposition}}
\newtheorem{example}{\textbf{Example}}


\newcommand{\eldp}{$\varepsilon$-LDP\xspace}



\newcommand{\mechanism}{\mathcal{M}}

\newcommand{\loss}{\mathcal{L}}


\begin{document}

\title{Quantifying Classifier Utility under Local Differential Privacy}


\author{Ye Zheng}
\affiliation{%
  \institution{Rochester Institute of Technology}
  \city{Rochester}
  \country{USA}}

\author{Yidan Hu}
\affiliation{%
  \institution{Rochester Institute of Technology}
  \city{Rochester}
  \country{USA}}



\begin{abstract}
Local differential privacy (LDP) offers rigorous, quantifiable privacy guarantees for personal data by introducing perturbations at the data source. 
Understanding how these perturbations affect classifier utility is crucial for both designers and users. 
However, a general theoretical framework for quantifying this impact is lacking and also challenging, especially for complex or black-box classifiers.

This paper presents a unified framework for theoretically quantifying classifier utility under LDP mechanisms. 
The key insight is that LDP perturbations are concentrated around the original data with a specific probability, 
allowing utility analysis to be reframed as robustness analysis within this concentrated region. 
Our framework thus connects the concentration properties of LDP mechanisms with the robustness of classifiers, 
treating LDP mechanisms as general distributional functions and classifiers as black boxes. 
This generality enables applicability to any LDP mechanism and classifier. 
A direct application of our utility quantification is guiding the selection of LDP mechanisms and privacy parameters for a given classifier. 
Notably, our analysis shows that a piecewise-based mechanism often yields better utility than alternatives in common scenarios.

Beyond the core framework, we introduce two novel refinement techniques that further improve utility quantification. 
We then present case studies illustrating utility quantification for various combinations of LDP mechanisms and classifiers. 
Results demonstrate that our theoretical quantification closely matches empirical observations, particularly when classifiers operate in lower-dimensional input spaces.
\end{abstract}

\keywords{local differential privacy, classifier utility, robustness analysis, utility quantification}

\maketitle

\section{Introduction}

\begin{figure}[t]
    \centering
    \vspace{0.5em}
    \includegraphics[width=\linewidth]{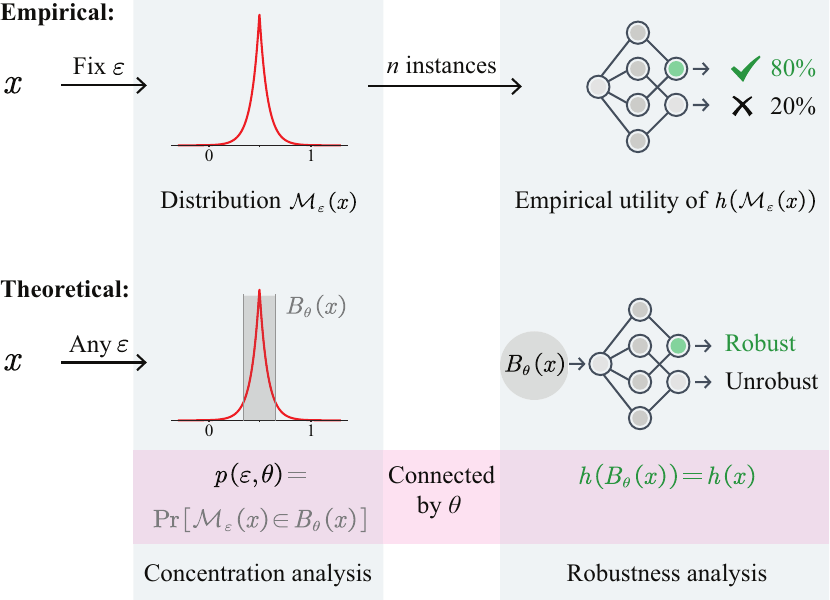}
    \caption{From empirical data utility to theoretical data utility. 
    This paper provides a theoretical quantification of classifier utility under LDP-perturbed inputs by connecting 
    the \emph{concentration analysis} of LDP mechanisms with the \emph{robustness analysis} of classifiers.}
    \label{fig:emperical_theoretical}
\end{figure}

\emph{Classifiers} are functions that map input data to class labels, playing a fundamental role
in industrial applications ranging from predictive modeling and data clustering to image recognition~\cite{doi:10.1080/01431160600746456,DBLP:journals/information/KowsariMHMBB19,DBLP:journals/csur/JainMF99}.
When deployed as services, classifiers require users to provide data inputs, which often contain sensitive information such as medical records or financial data, raising significant privacy concerns.
While users wish to benefit from classification services, they may be reluctant to reveal their actual data.
A common approach to address this concern is data perturbation, e.g. adding noise to numerical data or applying blur filters to images before 
sending them to the classifier.
Data perturbation remains a lightweight and intuitive solution for privacy-preserving classification~\cite{DBLP:conf/kdd/ZhangWZ05,DBLP:conf/icdm/ChenL05,DBLP:journals/widm/ZorarpaciO21}.

\emph{Local differential privacy} (LDP) is a widely accepted mathematical framework for privacy protection 
through controlled data perturbation.
It uses a parameter $\varepsilon$ (the privacy parameter) to quantify the level of privacy protection.
When data is processed through an LDP mechanism before being sent to a classifier, 
it receives provable privacy guarantees, meaning that any observer including the classifier cannot confidently infer the original sensitive data.
Researchers have developed numerous LDP mechanisms for various data domains~\cite{DBLP:journals/fttcs/DworkR14,DBLP:conf/nips/KairouzOV14,DBLP:conf/uss/WangBLJ17,DBLP:conf/icde/WangXYZHSS019,DBLP:conf/sigmod/Li0LLS20,DBLP:conf/uss/Gu0C0020},
each offering different perturbation strategies while maintaining formal privacy guarantees.

\emph{Data utility} is the most crucial metric when implementing LDP mechanisms.
While privacy protection is controlled by the privacy parameter $\varepsilon$,
data utility measures how well the perturbed data serves its intended application purpose.
For simple applications like summation queries, utility can be directly quantified through the theoretical mean squared error (MSE) of LDP mechanisms~\cite{DBLP:conf/uss/WangBLJ17,DBLP:conf/icde/WangXYZHSS019}.
For classification tasks, utility corresponds to classification accuracy, which cannot be analytically expressed using MSE. 
A trivial approach to evaluate classifier utility under an LDP mechanism with a given privacy parameter $\varepsilon$ is to repeatedly perturb the data and 
measure the proportion of correctly classified instances, as illustrated in Figure~\ref{fig:emperical_theoretical}.
While practical, this empirical method has significant limitations: it only applies to specific $\varepsilon$ values and perturbed data instances;
changing $\varepsilon$ requires time-consuming re-evaluation, and different instances lead to different results.
Meanwhile, it fails to establish relationships between utility and privacy parameters, making it inadequate for systematic comparison of LDP mechanisms.
Analytical utility quantification, by contrast, provides valuable guidance for classifier designers and users;
however, unlike the MSE for summation queries, we currently lack such a quantification framework for classifiers.

Quantifying the relationship between privacy and classifier utility presents significant analytical challenges.
These challenges stem from two aspects:
(i) LDP mechanisms inherently introduce perturbations across the entire data domain, potentially causing zero utility for classifiers.
(ii) Classifiers are often complex or even black-box functions, making it difficult to analyze their behavior under perturbed inputs.

This paper develops a framework to theoretically quantify classifier utility under LDP-perturbed inputs.
Our approach addresses the aforementioned challenges through two key insights:
(i) LDP mechanisms generate perturbed data that concentrates within a bounded region around the original data with a specific high probability, making extreme perturbations rare.
(ii) Within this concentration region, the utility analysis of a classifier can be reformulated as a robustness analysis problem.
Robustness refers to how well a classifier maintains correct predictions when inputs are subjected to perturbations.
We leverage established robustness analysis techniques to find the maximum permissible perturbation that preserves a classifier's output,
thereby maintaining utility. 
By combining the concentration analysis of LDP mechanisms with the robustness analysis of classifiers, 
we establish theoretical data utility quantification in relation to the privacy parameter $\varepsilon$, expressed as:

\vspace{0.5em}
\noindent\emph{Given classifier $h$, LDP mechanism $\mechanism_\varepsilon$, and raw data $x$:
with probability at least $p(\varepsilon,\theta)$, $h$ preserves its correct classification under input $\mechanism_\varepsilon(x)$.}
\vspace{0.5em}

\noindent This formulation provides a probabilistic guarantee for preserving the utility of classifier $h$ when using mechanism $\mechanism_\varepsilon$.
The parameter $\theta$ represents a concentration threshold that determines the probability guarantee $p(\varepsilon,\theta)$,
which directly depends on the privacy parameter $\varepsilon$.
Figure~\ref{fig:emperical_theoretical} illustrates this intuition.
By determining the maximum allowable perturbation $B_\theta(x)$ that a classifier $h$ can tolerate,
we can derive utility guarantees for $h$ under $\mechanism_\varepsilon$-perturbed inputs across different privacy parameters $\varepsilon$,
without needing to repeatedly sample from the mechanism.
Furthermore, if the distribution of raw data $x$ is known, our framework also provides average-case and worst-case utility quantifications.

Additionally, we introduce two refinement techniques to refine the utility quantification $p(\varepsilon,\theta)$ for high-dimensional classifiers.
First, we extend the widely used robustness notion of ``robustness radius'' to ``robustness hyperrectangle'', enabling a more precise analysis of classifier robustness.
Second, we adapt a relaxed privacy notion, \emph{Probably Approximately Correct} (PAC) privacy~\cite{DBLP:conf/crypto/XiaoD23}, as an alternative to $(\varepsilon,\delta)$-LDP.
Under this notion, we propose a novel privacy indicator and an extended Gaussian mechanism\footnote{
    It is well known that the Gaussian mechanism cannot satisfy pure (L)DP~\cite{DBLP:journals/fttcs/DworkR14};
    it must rely on a relaxed privacy notion such as $(\varepsilon,\delta)$-LDP or PAC LDP.
}
applicable for any $\varepsilon$.
The original Gaussian mechanism from Dwork~\cite{DBLP:journals/fttcs/DworkR14} is only valid for $\varepsilon \in [0, 1]$.
We improve the proof technique to extend it to $\varepsilon \in [0, \infty)$.    
These extensions allow the framework to provide more accurate utility quantification and incorporate more privacy-preserving mechanisms.

\textbf{Applications.}
The proposed utility quantification framework has direct applications in privacy-preserving classification systems:
(i) It enables systematic comparison of different LDP mechanisms for a given classifier by evaluating their probability guarantees $p(\varepsilon,\theta)$.
For a fixed $\varepsilon$, a mechanism with a larger $p(\varepsilon,\theta)$ provides stronger utility preservation.
(ii) It facilitates the selection of an appropriate privacy parameter $\varepsilon$ to meet specific utility requirements.
For example, given a utility threshold $p^*$, the framework identifies the $\varepsilon$ that satisfies $p(\varepsilon,\theta) \geq p^*$, 
achieving a precise privacy-utility balance when using the classifier.

To summarize, our contributions are as follows:
\begin{itemize}
    \item \emph{Quantification framework.} To the best of our knowledge, 
    this is the first framework that provides analytical utility quantification for classifiers under LDP-perturbed inputs. 
    This framework bridges the concentration analysis of LDP mechanisms with the robustness analysis of classifiers, enabling systematic evaluation of classifier utility.
    \item \emph{Refinement techniques.} We propose two refinement techniques to enhance utility quantification. 
    The first extends the robustness notion from ``robustness radius'' to ``robustness hyperrectangle,'' allowing for more precise robustness analysis. 
    The second adapts the PAC privacy notion, introducing a novel privacy indicator and an extended Gaussian mechanism applicable to any $\varepsilon$.
    \item \emph{Comparison and case studies.} We conduct case studies on typical classifiers, including logistic regression, random forests, and neural networks, 
    under input perturbations by various LDP mechanisms. 
    The results demonstrate that our theoretical utility quantification aligns closely with empirical observations, particularly in low-dimensional input spaces.
\end{itemize}

\textbf{Structure.} The main parts of this paper are organized as follows:
Section~\ref{sec:example} provides an illustrative example of utility quantification for a classifier under the Laplace mechanism.
Section~\ref{sec:framework} introduces the formal framework for utility quantification.
Section~\ref{sec:refined_bound} presents refinement techniques, then 
Section~\ref{sec:discussion} discusses related questions and
Section~\ref{sec:case_studies} conducts case studies.

\section{Preliminaries} \label{sec:preliminaries}

This section formulates the problem and reviews the definition of LDP as well as the concept of classifier robustness.

\subsection{Problem Formulation} 

Given a trained classifier $h$ that provides public services, users must submit their data $x$ as input to utilize the classifier.
However, users' data $x$ may contain sensitive information, and the third-party classifier $h$ may not be fully trusted to protect data privacy. 
To address this concern, users perturb their data $x$ using an LDP mechanism $\mechanism_\varepsilon$ before sending it to the classifier $h$.
The perturbation mechanism $\mechanism_\varepsilon$ acts as an interface between users and the classifier, forming a trust boundary by presenting the perturbed data $\mechanism_\varepsilon(x)$ to the classifier $h$.

While the perturbed data $\mechanism_\varepsilon(x)$ protects user privacy, it may degrade the utility of the classifier $h$, potentially leading to incorrect outputs where $h(\mechanism_\varepsilon(x)) \neq h(x)$.
This degradation is undesirable for both the classifier provider and the users.
Thus, a critical question arises:
\emph{How can classifier designers or users quantify the utility of the classifier $h$ under the perturbation of $\mechanism_\varepsilon$?}
This paper aims to answer this question by introducing a theoretical framework to quantify the utility of $h$ under the perturbation of $\mechanism_\varepsilon$ w.r.t. the privacy parameter $\varepsilon$.

\subsection{Local Differential Privacy}

\begin{definition}[\eldp~\cite{DBLP:journals/corr/DuchiWJ16}] \label{def:eldp}
    A perturbation mechanism $\mathcal{M}: \mathcal{X} \to \mathsf{Range}(\mechanism)$ satisfies $\varepsilon$-LDP
    if, for any two arbitrary $x_1$ and $x_2$, the probability ratio of producing the same $\tilde{x}$ is bounded as follows:
    \begin{equation*}
        \forall x_1, x_2 \in \mathcal{X}, \forall \tilde{x} \in \mathsf{Range}(\mechanism):\; \frac{\Pr[\mathcal{M}(x_1) = \tilde{x}]}{\Pr [\mathcal{M}(x_2) = \tilde{x}]} \leq e^{\varepsilon}.
    \end{equation*}
\end{definition} 

If $\mechanism(x)$ is continuous, the probability $\Pr[\cdot]$ is replaced by the probability density function (pdf).
Definition~\ref{def:eldp} quantifies the difficulty of distinguishing between $x_1$ and $x_2$ based on the observed $\tilde{x}$.
A smaller privacy parameter $\varepsilon \in [0, +\infty)$ indicates stronger privacy protection. 
For convenience, we use $\mechanism_\varepsilon$ to denote an $\varepsilon$-LDP mechanism $\mechanism$ when $\varepsilon$ is necessary.

Some LDP mechanisms, such as the Laplace and Gaussian mechanisms~\cite{DBLP:journals/fttcs/DworkR14}, were originally designed for query functions on raw data.
These mechanisms require the \emph{sensitivity} of the query function to determine the scale of perturbation.
The (global) sensitivity of a function $f: \mathcal{X} \to \mathsf{Range}(f)$ is defined as the maximum change in $f$ for any two data:
\begin{equation*}
    \Delta f = \max_{x_1, x_2 \in \mathcal{X}} \left|f(x_1) - f(x_2)\right|.
\end{equation*}
In the context of LDP, the privacy of the data $x$ itself must be ensured, as the perturbation is applied directly to the data.
Thus, here the query function $f$ is the identity function $f(x) = x$, whose sensitivity is the diameter of input space $\mathcal{X}$.
For classifiers, $\mathcal{X}$ is typically normalized to $\mathcal{X} = [0,1]$, resulting in a sensitivity of $1$.

\subsection{Classifier and Robustness}

\begin{definition}[Classifier]
    A classifier $h: \mathbb{R}^d \to \{1,2,\dots,K\}$ is a function that maps input data $x \in \mathbb{R}^d$ to a specific
    label $h(x) \in \{1,2,\dots,K\}$.
\end{definition}

Based on representation complexity, classifiers can be categorized as follows:
a \emph{closed-form classifier} is represented by a closed-form function, such as a linear classifier;
a \emph{non-closed-form classifier} lacks a closed-form representation, such as a neural network.
In both cases, if we know the classifier's structure and parameters, we call it a \emph{white-box classifier}; otherwise, it is a \emph{black-box classifier}.

\begin{definition}[Local robustness]
    Denote $B_{\theta}(x)$ as the $\ell_{\infty}$-norm ball centered at $x$ with radius $\theta$.\footnote{
        The $\ell_{\infty}$-norm is a common choice for measuring data perturbations and can be converted to other norms using norm inequalities.
        Here, $B_{\theta}(x)$ denotes the set of points $\tilde{x} \in \mathbb{R}^d$ where each dimension of $\tilde{x}$ is within a distance $\theta$ of the corresponding dimension of $x$, i.e. $B_{\theta}(x) \coloneq \{\tilde{x}: |\tilde{x}^{(i)} - x^{(i)}| \leq \theta\}$.
    }
    A classifier $h$ is $\theta$-locally-robust at $x$ if:
    \begin{equation*}
        \forall \tilde{x} \in B_{\theta}(x),\; h(\tilde{x}) = h(x).
    \end{equation*}
\end{definition}

Intuitively, local robustness means that the classifier $h$ is insensitive to small perturbations around the input $x$. 
The largest feasible $\theta$ is referred to as the \emph{robustness radius} of the classifier $h$ at $x$.
Robustness radius serves as a key metric for evaluating the classifier's stability to unseen data.

\section{Illustrating the Framework: An Example} \label{sec:example}

We illustrate our utility quantification framework with an intuitive example.
Specifically, we consider a classifier under the Laplace mechanism and present a utility quantification statement.

\begin{definition}[The Laplace mechanism~\cite{DBLP:journals/fttcs/DworkR14}]
    For any $x\in [0,1]$, 
    the Laplace mechanism $\mechanism_\varepsilon$ is defined as:
    $\mechanism_\varepsilon(x) = x + \xi$,
    where $\xi \sim Lap(1/\varepsilon)$ is a random variable drawn from Laplace distribution with mean $0$ and scale $1/\varepsilon$.
\end{definition}

In the Laplace mechanism, $\mechanism_\varepsilon(x)$ is unbounded due to the Lap\-lace distribution. 
Since the classifier is typically defined on a bounded domain, such as $[0,1]$, a common practice is to truncate the perturbed data to this range,
i.e. $\mechanism_\varepsilon(x)$ is truncated to $[0,1]$.\footnote{
    This truncation can be treated as a post-processing step by the users that does not leak privacy~\cite{DBLP:journals/fttcs/DworkR14,DBLP:conf/ndss/0001LLSL20,DBLP:conf/stoc/GhoshRS09}.  
    We will also examine mechanisms with bounded perturbations in Section~\ref{subsec:boundedness_ldp},
    where no truncation is needed.
    \ There are also truncated Laplace mechanisms~\cite{DBLP:journals/jcst/CroftSS22,DBLP:conf/aistats/GengDGK20}
    that directly sample from the truncated Laplace distribution.
    It is also applicable to our framework; we use the original Laplace mechanism on $[0,1]$ as an example for simplicity.
}

\subsection{Example} \label{subsec:example}

\begin{figure}[t]
    \centering
    \begin{subfigure}[b]{0.50\linewidth}
        \centering
        \includegraphics[width=0.98\textwidth]{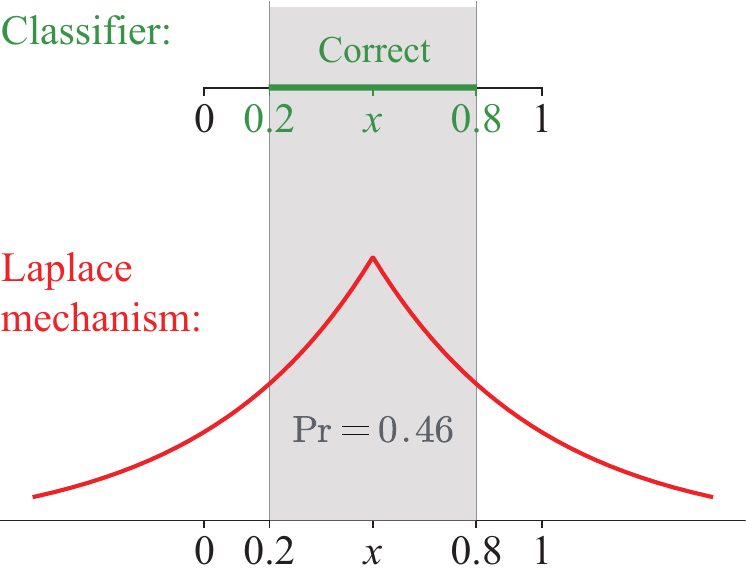}
        \caption{Example illustration.}
        \label{fig:example}
    \end{subfigure}
    \hfill
    \begin{subfigure}[b]{0.40\linewidth}
        \centering
        \includegraphics[width=0.98\textwidth]{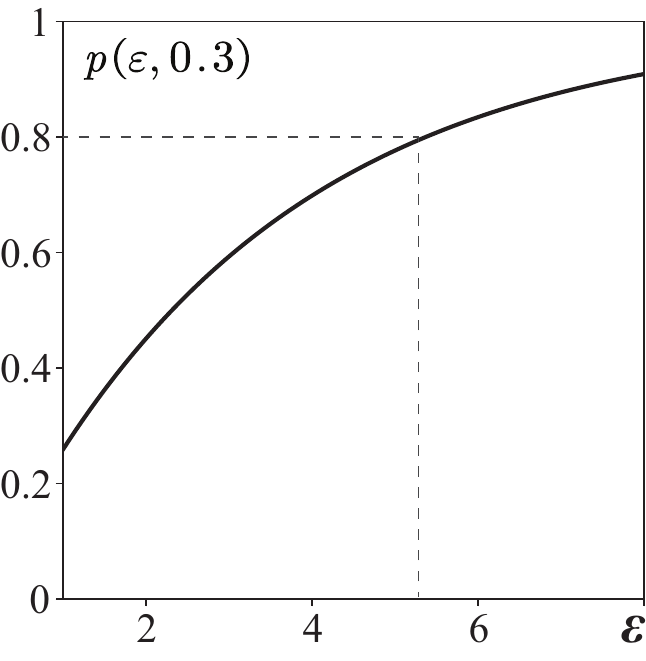}
        \caption{Classifier utility w.r.t. $\varepsilon$.}
        \label{fig:rho_epsilon_laplace}
    \end{subfigure}
    \caption{Illustration of the example and utility quantification of the classifier under the Laplace mechanism for $x=0.5$.}
\end{figure}

Consider a classifier $h:[0,1]\to \{1,2\}$ with one-dimensional input and two classes, defined as:
\begin{equation*}
    h(x) = 
    \begin{cases}
        1, & \text{if } x\in [0.2, 0.8], \\
        2, & \text{otherwise}.
    \end{cases}
\end{equation*}
Consider a specific sensitive data $x=0.5$, 
which will be perturbed using the Laplace mechanism $\mechanism_\varepsilon$ before submitting it to the classifier.
As a result, the classifier's output, $h(\mechanism_\varepsilon(0.5))$, becomes a random variable due to the randomness of $\mechanism_\varepsilon$.

To quantify the utility of $h(\mechanism_\varepsilon(0.5))$ with respect to $\varepsilon$, we analyze two key properties: 
the concentration of LDP mechanism $\mechanism_\varepsilon$ and the robustness of classifier $h$.

\textbf{Concentration analysis.}
Denote region $B_\theta(x)$ as the $\theta$-neighbor\-hood ball of $x$.
For instance, $B_{0.3}(0.5) \coloneq \{\tilde{x}: |\tilde{x}-0.5|\leq 0.3\}$.
Then the perturbed data $\mechanism_\varepsilon(0.5)$ falls within $B_{0.3}(0.5)$ with a probability $p(\varepsilon,\theta=0.3)$.
Specifically, denote $F_{\text{Lap}}(\theta)$ as the cumulative distribution function (CDF) of the Laplace distribution used in $\mechanism_\varepsilon$.
We have:
\begin{equation*}
    \begin{split}
        p(\varepsilon,\theta=0.3) &= \Pr[\mechanism_\varepsilon(x) \in B_{0.3}(x)] = F_{\text{Lap}}(0.3) - F_{\text{Lap}}(-0.3) \\
        &= 2F_{\text{Lap}}(0.3) - 1 = 1 - \exp(-0.3 \varepsilon).
    \end{split}
\end{equation*}
Therefore, we can state that: $\mechanism_\varepsilon(0.5)$ outputs a value within $B_{0.3}(0.5)$ with probability $1 - \exp(-0.3 \varepsilon)$.
For instance, if we set $\varepsilon = 2$, we have $p(2,0.3) = 0.46$.
Figure~\ref{fig:example} shows the illustration of this instance.

\textbf{Robustness analysis.}
Analyzing the utility of $h(\mechanism_\varepsilon(0.5))$ requires analyzing the robustness of $h$ under the input range $B_\theta(0.5)$.
We know that $h$ is robust under input range $B_{0.3}(0.5) = [0.2, 0.8]$ according to its definition.
Then, we can state that: $h$ preserves the correct classification under input range $B_{0.3}(0.5)$.

\textbf{Utility quantification.}
Combining the above, we can quantify the utility of $h(\mechanism_2(0.5))$:
\emph{with probability at least $p(2,0.3) = 0.46$, $h$ preserves the correct classification under input $\mechanism_2(0.5)$.}
Importantly, this utility quantification can be extended to any $\varepsilon$ using the closed-form expression of $p(\varepsilon,\theta)$.

\textbf{Application.}
For this classifier, the theoretical utility quantification $p(\varepsilon,0.3)$ can guide the choice of the privacy parameter $\varepsilon$ in the Laplace mechanism to achieve a desired utility level.
Figure~\ref{fig:rho_epsilon_laplace} shows $p(\varepsilon,0.3)$ as a function of $\varepsilon$.
For instance, to ensure the classifier preserves the correct classification with probability $\geq 0.8$, 
privacy parameter $\varepsilon$ should be set to at least $5.2$.

\subsection{Further Steps}
The above example has demonstrated an analytical approach to quantify the utility of a classifier under the Laplace mechanism.
However, this example is limited in scope as it focuses on a specific noise-adding mechanism and assumes a 
white-box classifier with a known robustness radius.
In practice, several challenges arise: 
(i) Diverse LDP mechanisms. Not all LDP mechanisms behave as noise-adding mechanisms, which means the concentration analysis may differ significantly.
(ii) Variety of classifiers. Classifiers come in various forms, and determining their robustness radius often requires different techniques.
(iii) Conservative utility quantification. The utility provided in this example is a lower bound under strong pure LDP constraints and a conservative robustness radius. 
It can be refined, especially for higher-dimensional classifiers.

The subsequent sections will address these challenges by introducing a general framework for utility quantification and refinement techniques. Specifically, we will cover the following aspects:
\begin{itemize}
    \item \emph{General LDP mechanisms.} We model an LDP mechanism as a general distribution function and derive its concentration analysis $p(\varepsilon,\theta)$. As examples, we explore commonly used continuous and discrete mechanisms.
    \item \emph{Unified method for finding robustness radius.} We treat classifiers as black-box functions and adapt a hypothesis testing framework to determine their robustness radius. This approach provides a unified method applicable to classifiers with different structures and access types (white-box or black-box).
    \item \emph{Refinement techniques.} We refine the utility quantification by incorporating the concepts of robustness hyperrectangles and PAC privacy.
\end{itemize}

\section{Formal Framework} \label{sec:framework}

This section presents the formal framework for quantifying classifier utility under input perturbations from various types of LDP mechanisms.

\subsection{Concentration Analysis of LDP} \label{subsec:boundedness_ldp}

In Definition~\ref{def:eldp} of LDP, the mechanism $\mechanism(x)$ represents a family of
probability distributions determined by the input $x$.
Thus, the perturbed output $\tilde{x} = \mechanism(x)$ is a random variable.
The concentration property characterizes how the probability mass of the perturbed output $\tilde{x}$ concentrates around the original input $x$, and can be stated as follows.

\begin{proposition}[Concentration property of LDP] \label{prop:concentration_ldp}
    Let $\mechanism(x)$ be an $\varepsilon$-LDP mechanism applied to data $x \in \mathbb{R}$,
    and let $F_\mechanism$ denote the CDF of the perturbed data $\mechanism(x)$.
    The concentration property of $\mechanism(x)$ is given by:
    \begin{equation*}
        \forall a < b: \Pr[a \leq \mathcal{M}(x) \leq b] = F_\mechanism(b) - F_\mechanism(a).
    \end{equation*}
\end{proposition}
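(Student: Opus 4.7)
The plan is to recognize that this proposition is essentially a restatement of the definition of the cumulative distribution function for a real-valued random variable, and the $\varepsilon$-LDP property plays no substantive role in the proof beyond guaranteeing that $\mechanism(x)$ is a well-defined random variable admitting a CDF $F_\mechanism$. The proof therefore reduces to invoking standard measure-theoretic properties of CDFs, and I expect no serious obstacle.

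First, I would fix the input $x \in \mathbb{R}$ and observe that, by Definition~\ref{def:eldp}, $\mechanism(x)$ is a random variable whose law is the conditional distribution of the mechanism's output given input $x$. This law induces the CDF $F_\mechanism(t) \coloneq \Pr[\mechanism(x) \leq t]$, which is monotonically non-decreasing and right-continuous.

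Next, for any $a < b$, I would decompose the event $\{\mechanism(x) \leq b\}$ as the disjoint union $\{\mechanism(x) < a\} \cup \{a \leq \mechanism(x) \leq b\}$. By finite additivity of probability,
$F_\mechanism(b) = \Pr[\mechanism(x) < a] + \Pr[a \leq \mechanism(x) \leq b]$,
so rearranging gives $\Pr[a \leq \mechanism(x) \leq b] = F_\mechanism(b) - \Pr[\mechanism(x) < a]$. The final step is to identify $\Pr[\mechanism(x) < a]$ with $F_\mechanism(a)$, which yields the claimed identity.

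The only mild obstacle is a notational one, and I would address it in a short remark rather than a full argument. The identification $\Pr[\mechanism(x) < a] = F_\mechanism(a)$ holds exactly when $F_\mechanism$ is left-continuous at $a$, i.e., when $a$ is not an atom of the distribution. This is automatic for the continuous LDP mechanisms that the paper targets (Laplace, Gaussian, and other mechanisms with a density), since there $\Pr[\mechanism(x) = a] = 0$. For discrete mechanisms, one should either interpret $F_\mechanism(a)$ as the left limit $\Pr[\mechanism(x) < a]$ or replace the closed interval $[a,b]$ with the half-open interval $(a,b]$; the statement itself remains valid once this convention is fixed. I would flag this as a routine clarification and conclude.
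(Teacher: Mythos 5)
Your proof is correct and takes essentially the same approach as the paper, which simply asserts that the proposition ``follows directly from the definition of the CDF'' without writing out the decomposition. Your remark about the atom at $a$ (i.e., that $\Pr[\mechanism(x) < a] = F_\mechanism(a)$ requires $\Pr[\mechanism(x)=a]=0$, or else the half-open interval $(a,b]$ should be used) is a legitimate clarification for the discrete mechanisms the paper later considers, but it does not change the substance of the argument.
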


This proposition follows directly from the definition of the CDF and provides a probabilistic characterization of the ``boundedness'' of the LDP mechanism $\mechanism(x)$.
As a special case, if $a = x - \theta$ and $b = x + \theta$, 
it becomes $\Pr[\mechanism(x) \in B_\theta(x)]$.
For practical mechanisms designed for better utility, $\mechanism(x)$ is typically concentrated around $x$.
This implies that $F_\mechanism(b) - F_\mechanism(a)$ captures most of the probability mass when $x \in [a, b]$.
We analyze the concentration properties of typical continuous and discrete LDP mechanisms in the following subsections.

\subsubsection{Continuous Mechanisms} \label{subsubsec:continuous_mechanism}
When the data domain is continuous, the LDP mechanism corresponds to a continuous distribution.
Based on whether the perturbation depends on the input $x$, continuous mechanisms are categorized into:
(i) Noise-adding mechanisms, which add noise independent of the input $x$.
(ii) Piecewise-based mechanisms, which sample outputs from input-dependent piecewise distributions.

\textbf{Noise-adding Mechanisms.}
These mechanisms achieve LDP by adding carefully designed noise to the input data.

\begin{definition}
    A noise-adding LDP mechanism $\mechanism:\mathcal{X} \to \mathsf{Range}(\mechanism)$ is defined as:
    $\mechanism(x) = x + \xi$, where $\xi$ is a random variable (noise) drawn from a symmetric distribution with mean $0$.
\end{definition}

Noise-adding mechanisms are input-independent, using the same noise distribution for all inputs.
Examples include the Laplace and Gaussian mechanisms~\cite{DBLP:journals/fttcs/DworkR14}.

\textbf{Concentration Analysis.}
For a noise-adding mechanism $\mechanism(x)$, 
the probability of outputting a perturbed value $\mechanism(x) \in B_\theta(x)$ is:
\begin{equation*}
    F_{\mechanism}(x+\theta) - F_{\mechanism}(x-\theta) = 2 \cdot F_{\mechanism}(\theta) - 1,
\end{equation*}
where $F_{\mechanism}$ is the CDF of the noise distribution.
The last equality holds because of the symmetry of the noise distribution, making the result independent of $x$.
For instance, the Laplace mechanism's concentration analysis is given by:
\begin{equation*}
    2 F_{\text{Lap}}(\theta) - 1 = 1 - \exp(-\theta \varepsilon).
\end{equation*}
In the previous section, we have seen that for $\varepsilon=2$ and $\theta=0.3$, the concentration probability is $0.46$.
This indicates that while the Laplace mechanism is defined on the whole real line, it is concentrated in a small region around $x$.

\textbf{Piecewise-based Mechanisms.}
These mechanisms achieve LDP by sampling outputs from carefully designed piecewise distributions that vary based on the input $x$.

\begin{definition}
    A piecewise-based LDP mechanism $\mechanism(x):\mathcal{X} \to \mathsf{Range}(\mechanism)$ is defined as:
    \begin{equation*}
        pdf[\mechanism(x)=\tilde{x}] =
        \begin{dcases}
            p_{\varepsilon} & \text{if} \ \tilde{x} \in [l_{x,\varepsilon}, r_{x,\varepsilon}], \\
            p_{\varepsilon} / \exp{(\varepsilon)} & \text{otherwise},
        \end{dcases}
    \end{equation*}
    where $p_{\varepsilon}$ is the sampling probability, and $[l_{x,\varepsilon}, r_{x,\varepsilon}]$ define the sampling interval based on $x$ and $\varepsilon$.
\end{definition}

Piecewise-based mechanisms are input-dependent, meaning the sampling distribution changes for different inputs $x$.
Examples include PM~\cite{DBLP:conf/icde/WangXYZHSS019} and SW~\cite{DBLP:conf/sigmod/Li0LLS20}, which use different $p_{\varepsilon}$ and intervals $[l_{x,\varepsilon}, r_{x,\varepsilon}]$.
New mechanisms can be designed by modifying these parameters.

\textbf{Concentration Analysis.}
For a piecewise-based mechanism $\mechanism(x)$, the probability of getting a perturbed value $\mechanism(x) \in B_\theta(x)$ depends on $\theta$.
The concentration analysis is given by:
\begin{equation*}
    \begin{cases}
        2\theta \cdot p_{\varepsilon} \quad \text{if } B_\theta(x) \subseteq [l_{x,\varepsilon}, r_{x,\varepsilon}], \\
        (r_{x,\varepsilon} - l_{x,\varepsilon}) p_{\varepsilon} + \left(2\theta - (r_{x,\varepsilon} - l_{x,\varepsilon})\right) \frac{p_{\varepsilon}}{\exp(\varepsilon)} \quad \text{otherwise}.
    \end{cases}
\end{equation*}
This analysis is $x$-dependent due to the terms $l_{x,\varepsilon}$ and $r_{x,\varepsilon}$.
For example, in the PM mechanism with $p_{\varepsilon} = \exp(\varepsilon/2)$, if $B_\theta(x) \subseteq [l_{x,\varepsilon}, r_{x,\varepsilon}],$ the probability is $2\theta \cdot \exp(\varepsilon/2)$. 
Concretely, for $\varepsilon=2$ and $\theta=0.3$, the concentration probability $\Pr[\mechanism(x) \in B_\theta(x)]$ equals $0.86$,
indicating that the PM mechanism is more concentrated than the Laplace mechanism under these parameters.

\subsubsection{Discrete Mechanisms}
When the data domain is discrete and finite, the LDP mechanism corresponds to a discrete distribution.
Many real-world datasets are discrete, such as image pixel values (e.g. $256$ levels) or integer-valued attributes like age.

\begin{definition}
    If $|\mathcal{X}| < \infty$, a discrete LDP mechanism $\mechanism(x):\mathcal{X} \to \mathcal{X}$ is defined as:
    \begin{equation*}
        \Pr[\mechanism(x) = \tilde{x}_i] = p_{\varepsilon}(\tilde{x}_i, x), \quad \tilde{x}_i \in \mathcal{X},
    \end{equation*}
    where $p_{\varepsilon}(\tilde{x}_i, x)$ is the probability of perturbing $x$ to $\tilde{x}_i$,
    determined by $\varepsilon$, $x$, and $\tilde{x}_i$.
\end{definition}

Examples of such mechanisms include the $k$-RR mechanism~\cite{DBLP:conf/nips/KairouzOV14}, where $p_{\varepsilon}(\tilde{x}_i, x)$ is uniform except for $\tilde{x}_i = x$, and the Exponential mechanism~\cite{DBLP:journals/fttcs/DworkR14}, where $p_{\varepsilon}(\tilde{x}_i, x)$ depends on the distance between $\tilde{x}_i$ and $x$.

\textbf{Concentration Analysis.}
For discrete mechanisms, the support is a finite set $\mathcal{X}$, and the CDF is defined as:
$F_{\mechanism}(\tilde{x}) = \sum_{\tilde{x}_i \leq \tilde{x}} p_{\varepsilon}(\tilde{x}_i, x)$.
The probability of $\mechanism(x)$ producing a value in $B_\theta(x)$ is:
\begin{equation*}
    F_{\mechanism}(x+\theta) - F_{\mechanism}(x-\theta) = \sum_{\tilde{x}_i \in [x-\theta, x+\theta]} p_{\varepsilon}(\tilde{x}_i, x).
\end{equation*}
For instance, in the $k$-RR mechanism:
\begin{equation*}
    p_{\varepsilon}(\tilde{x}_i, x) = \frac{\exp(\varepsilon)}{|\mathcal{X}| - 1 + \exp(\varepsilon)} \text{ if } \tilde{x}_i = x
    \text{ otherwise } \frac{1}{|\mathcal{X}| - 1 + \exp(\varepsilon)}.
\end{equation*}
Then, if $\mathcal{X}$ discretizes $[0,1]$ evenly, i.e. $\mathcal{X} = \{0, 1/|\mathcal{X}|, 2/|\mathcal{X}|, \ldots, 1\}$,
the concentration analysis becomes:
\begin{equation*}
    \sum_{\tilde{x}_i \in [x-\theta, x+\theta]} p_{\varepsilon}(\tilde{x}_i, x) = \frac{\exp(\varepsilon)}{|\mathcal{X}| - 1 + \exp(\varepsilon)} + \frac{2\theta |\mathcal{X}| - 1}{|\mathcal{X}| - 1 + \exp(\varepsilon)}.
\end{equation*}
When $|\mathcal{X}|=100$, $\varepsilon=2$, and $\theta=0.3$, the concentration probability of the $k$-RR mechanism is $0.63$.

\subsubsection{Comparison of Concentration Properties} \label{subsubsec:comparison_concentration}

Fixing $\varepsilon=2$ and $\theta=0.3$, the concentration probabilities of the mentioned three mechanisms are:
\begin{itemize}
    \item Laplace mechanism: $p(\varepsilon=2, \theta=0.3) = 0.46$;
    \item PM mechanism: $p(\varepsilon=2, \theta=0.3) = 0.86$;
    \item $k$-RR mechanism: $p(\varepsilon=2, \theta=0.3) = 0.63$.
\end{itemize}
These results highlight that different LDP mechanisms are differently concentrated under the same privacy parameter $\varepsilon$ and region $B_\theta(x)$. 
This variation is critical for analyzing classifier utility and selecting appropriate LDP mechanisms. 
\ For instance, if a classifier's robustness radius $\theta$ is $0.3$, and we set privacy level $\varepsilon=2$, 
the PM mechanism achieves the highest probability of $\mechanism(x) \in B_{0.3}(x)$, making it the best choice in this scenario.

\subsection{Robustness Analysis of Classifier} \label{subsec:robustness_classifier}

We analyze the robustness of classifiers under the concept of \emph{probabilistic robustness},
a probabilistic approximation of deterministic robustness.\footnote{
    If the classifier is white-box, deterministic robustness can be derived analytically or through verification methods.
    See Appendix~\ref{appendix:robustness_radius_white_box} for details.
    In the most ideal case, the deterministic robustness can be expressed as a closed-form function of arbitrary $x$.
    We omit such easier cases for generality.
}
Probabilistic robustness suffices for classifier utility analysis under LDP-perturbed inputs,
as the concentration properties of LDP mechanisms are also probabilistic.
Compared to deterministic robustness, it is computationally efficient and applicable to any classifier, including black-box classifiers,
thus suitable as a general framework.



\subsubsection{Probabilistic Robustness} \label{subsec:black_box}

Hypothesis testing provides an intuitive perspective for probabilistic robustness.
We can state two counter hypotheses: 
\begin{itemize}
    \item $H_0$: $h$ is robust under $B_\theta(x)$.
    \item $H_1$: $h$ is not robust under $B_\theta(x)$.
\end{itemize}
These hypotheses can be statistically tested by querying the classifier on random samples in $B_\theta(x)$.\footnote{
    Robustness analysis is a different scenario from \emph{using} the classifier, and does not leak users' privacy.
    Section~\ref{subsubsec:privacy_discussion} provides further discussions.
}
If robustness holds for almost all samples, $H_0$ is accepted with high confidence.
Formally, probabilistic robustness is defined as follows:

\begin{definition}[Probabilistic robustness~\cite{robust_analysis,DBLP:conf/pkdd/ZhangRF22}] \label{def:probabilistic_robustness}
    Given a classifier $h$, an input $x$, a radius $\theta$, and a tolerance $\tau >0$,
    $h$ is $\theta$-locally-robust at $x$ with probability at least $1-\omega$ if
    \begin{equation*}
        \Pr_{\tilde{x}\sim B_\theta(x)}\big[1 - \Pr[h(\tilde{x}) = h(x)] \leq \tau\big] \geq 1 - \omega,\footnote{
            The inner probability $\Pr[h(\tilde{x}) = h(x)]$ is taken over the uncertainty of robustness, refer to Appendix~\ref{appendix:definition_prob_robustness} for explanation.
        }
    \end{equation*}
    where $\tilde{x}$ is a random variable drawn from $B_\theta(x)$.
\end{definition}

This definition implies that the event $h(\tilde{x}) = h(x)$ is almost always true (within a tolerance $\tau$) with probability at least $1-\omega$.
Here, $1-\omega$ also corresponds to the confidence level (type I error) in the hypothesis testing.
For minimal $\tau$ and $\omega$, e.g. $\tau = 0.01$ and $\omega = 0.05$, $h$ is nearly $\theta$-locally robust at $x$.
Deterministic robustness is a special case with $\tau = 0$ and $\omega = 0$.

Probabilistic robustness is typically computed using the Hoeffding bound~\cite{robust_analysis,DBLP:conf/pkdd/ZhangRF22}, which provides a probabilistic guarantee for the mean of a random variable.
Let $Z$ be the indicator function of $h(\tilde{x}) = h(x)$, i.e. $Z = 1$ if robust and $Z = 0$ otherwise.
Then, the empirical mean $\hat{\mu}_Z$ (robust frequency) is calculated by sampling $n$ points, 
and the true robustness probability $\mu_Z$ can be derived using the Hoeffding bound.

\begin{theorem}[Hoeffding bound~\cite{Hoeffding_inequality}] \label{thm:hoeffding}
    For any $\tau \geq 0$ and $\omega > 0$,
    the probability that the empirical mean ($\hat{\mu}_Z$) is within $\tau$ of the true mean ($\mu_Z$) is bounded by:
    \begin{equation*}
        \Pr_{Z_i \sim P_Z}\big[|\hat{\mu}_Z - \mu_Z| \leq \tau\big] \geq 1 - \omega,
    \end{equation*}
    where $\omega = 2e^{-2n\tau^2}$ and $n$ is the number of samples.
    Equivalently, at least $n(\omega,\tau) = (\ln\frac{2}{\omega}) / (2\tau^2)$ samples are required to ensure the bound.
\end{theorem}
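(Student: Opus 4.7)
The plan is to apply the standard Chernoff--Hoeffding argument, exploiting that the samples $Z_1,\ldots,Z_n$ are i.i.d.\ bounded random variables ($Z_i \in \{0,1\}$ in the context of Section~\ref{subsec:black_box}) with common mean $\mu_Z$ and empirical mean $\hat\mu_Z = (1/n)\sum_i Z_i$. First I would reduce the two-sided tail to a one-sided tail via $\Pr[|\hat\mu_Z - \mu_Z| \geq \tau] \leq \Pr[\hat\mu_Z - \mu_Z \geq \tau] + \Pr[\mu_Z - \hat\mu_Z \geq \tau]$ and handle the two summands symmetrically. For the upper tail, introduce a free parameter $s > 0$ and apply Markov's inequality to the nonnegative random variable $\exp\bigl(s(\hat\mu_Z - \mu_Z)\bigr)$, which gives $\Pr[\hat\mu_Z - \mu_Z \geq \tau] \leq e^{-s\tau}\,\mathbb{E}\bigl[\exp(s(\hat\mu_Z - \mu_Z))\bigr]$.

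Next I would use independence of the $Z_i$ to factor the moment generating function as $\mathbb{E}[\exp(s(\hat\mu_Z - \mu_Z))] = \prod_{i=1}^{n} \mathbb{E}\bigl[\exp\bigl((s/n)(Z_i - \mu_Z)\bigr)\bigr]$, and control each factor by Hoeffding's lemma: for a zero-mean random variable $Y$ supported on an interval of length $L$, $\mathbb{E}[e^{tY}] \leq \exp(t^2 L^2 / 8)$. Since $Z_i - \mu_Z$ lies in an interval of length $1$, each factor is at most $\exp(s^2/(8n^2))$, so the product is at most $\exp(s^2/(8n))$. Combining with the Markov step yields $\Pr[\hat\mu_Z - \mu_Z \geq \tau] \leq \exp(-s\tau + s^2/(8n))$, and minimizing the right-hand side over $s > 0$ at $s = 4n\tau$ produces $\exp(-2n\tau^2)$. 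Doubling for the symmetric lower tail gives $\Pr[|\hat\mu_Z - \mu_Z| \geq \tau] \leq 2e^{-2n\tau^2}$, and taking complements yields the claimed bound with $\omega = 2e^{-2n\tau^2}$. Inverting this relation, i.e.\ solving for $n$, produces $n(\omega,\tau) = \ln(2/\omega)/(2\tau^2)$.

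The main obstacle will be justifying Hoeffding's lemma itself, since the Markov and independence steps are entirely routine. The standard route is to define the cumulant generating function $\psi(t) = \log \mathbb{E}[e^{tY}]$ and observe that $\psi(0) = \psi'(0) = 0$, while $\psi''(t)$ equals the variance of $Y$ under a tilted distribution supported on the same interval of length $L$, which is bounded by $L^2/4$ (Popoviciu's inequality). Integrating twice gives $\psi(t) \leq t^2 L^2 / 8$, which is the desired MGF bound. Since this is the only nontrivial ingredient and it is classical, I would present it as a named lemma with a citation rather than rederiving it inline, keeping the proof focused on the Chernoff-style assembly described above.
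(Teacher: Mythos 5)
Your derivation is the standard Chernoff--Hoeffding argument (two-sided union bound, Markov on the exponential moment, independence to factor the MGF, Hoeffding's lemma with interval length $L=1$, optimization at $s=4n\tau$), and all the arithmetic checks out, including the inversion $n(\omega,\tau)=\ln(2/\omega)/(2\tau^2)$. The paper does not prove Theorem~\ref{thm:hoeffding} at all --- it imports it as a classical cited result --- so your proof is correct and coincides with the textbook argument the citation points to; nothing further is needed.
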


The number of samples $n$ balances utility guarantees and computational cost.
A smaller $\omega$ (more deterministic guarantee) requires more samples $n$.
Given $\tau$ and $\omega$, we check if the empirical mean $\hat{\mu}_Z$ on $n(\omega, \tau/2)$ samples 
satisfies $\hat{\mu}_Z \in [1-\tau/2, 1]$.
If this is true, by the Hoeffding bound, $|\mu_Z - \hat{\mu}_Z| \leq \tau/2$ holds, which implies $\mu_Z \in [1-\tau, 1]$.
Thus, Definition~\ref{def:probabilistic_robustness} holds, and we can claim that $h$ is robust under $B_\theta(x)$ with probability at least $1 - \tau$,
and our confidence level is at least $1 - \omega$.

\textbf{Independent of the classifier.}
The above method treats the classifier as a black-box function, making it applicable to any classifier $h$ 
without requiring knowledge of its internal structure or parameters.

\subsubsection{Procedure for Finding $\theta$} \label{subsubsec:procedure_for_finding_theta_star}

\begin{figure}[t]
    \centering
    \includegraphics[width=0.87\linewidth]{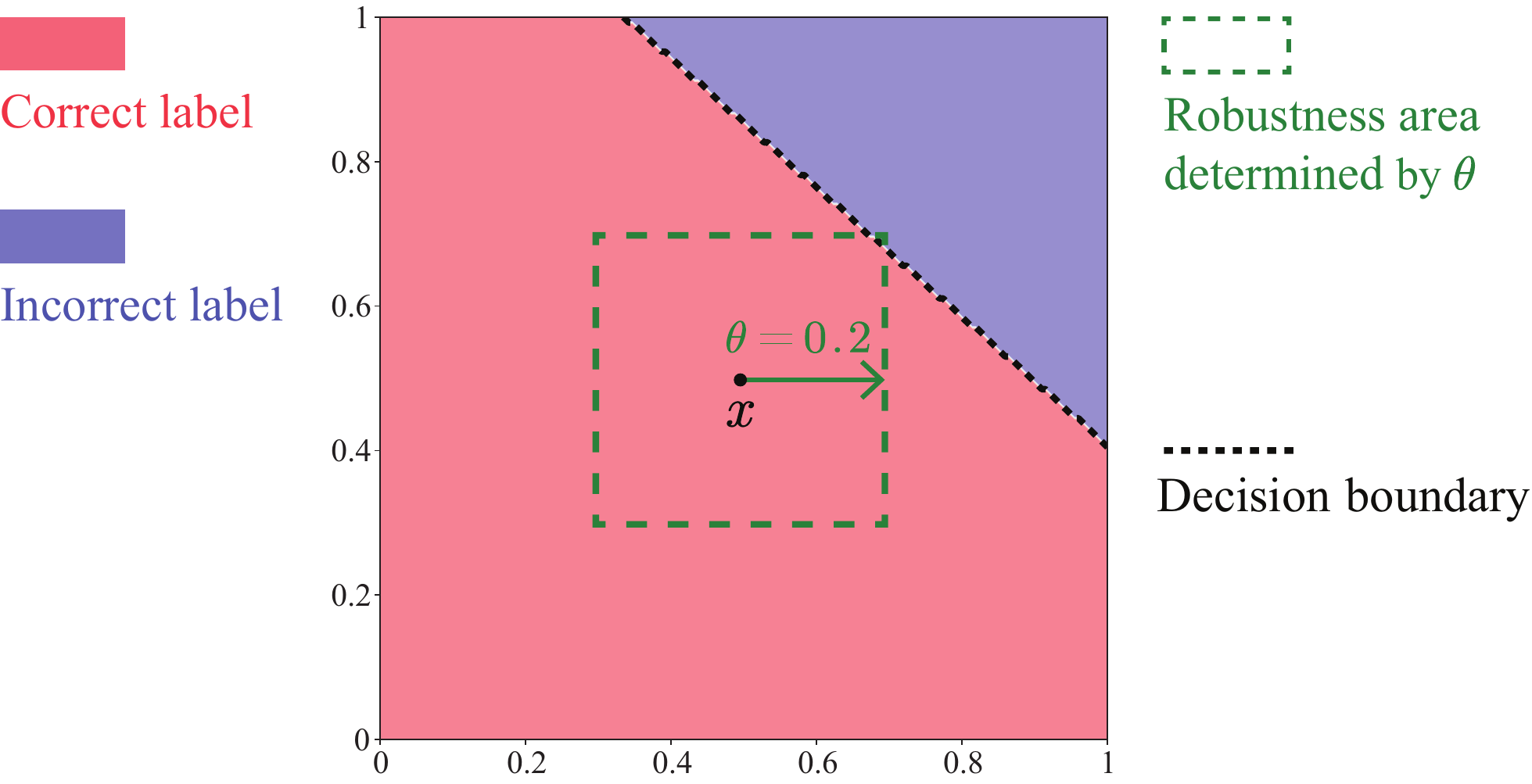}
    \caption{Robustness radius $\theta$ of a $2$D classifier at $x$ and the tested decision boundary by brute force.}
    \label{fig:decision_boundary}
\end{figure}

\begin{figure*}[htbp]
    \centering
    \begin{subfigure}[b]{0.3\textwidth}
        \centering
        \includegraphics[width=0.95\textwidth]{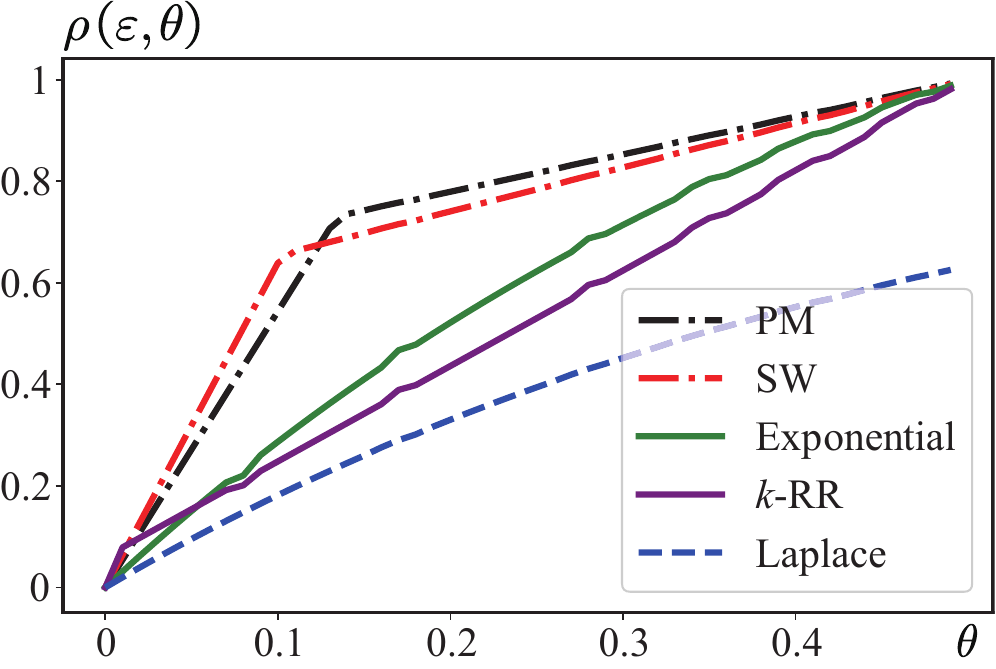}
        \caption{$\varepsilon=2$}
    \end{subfigure}
    \hfill
    \begin{subfigure}[b]{0.3\textwidth}
        \centering
        \includegraphics[width=0.95\textwidth]{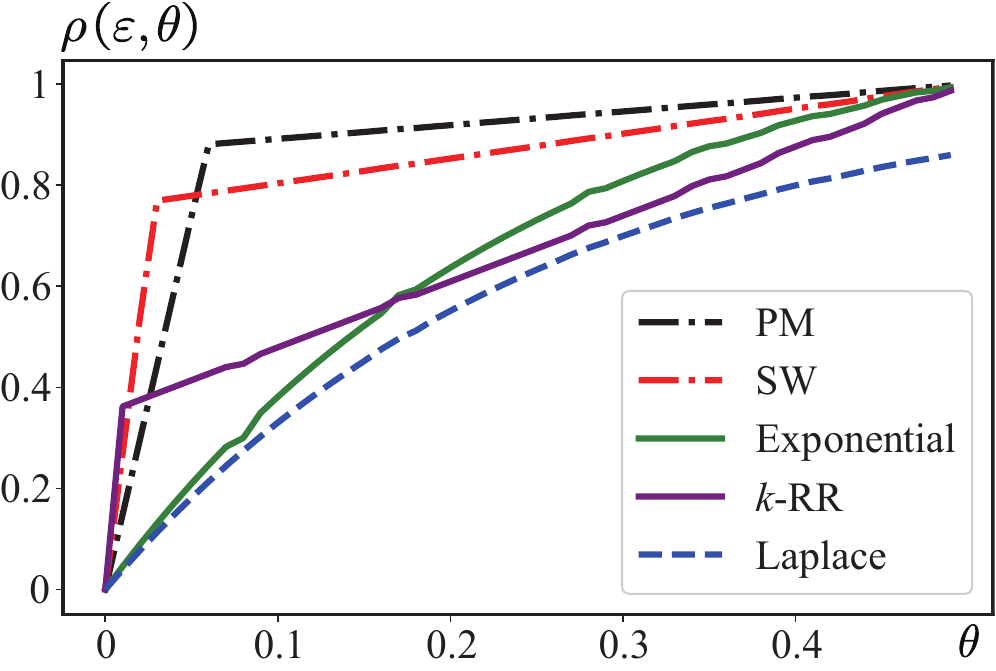}
        \caption{$\varepsilon=4$}
    \end{subfigure}
    \hfill
    \begin{subfigure}[b]{0.3\textwidth}
        \centering
        \includegraphics[width=0.95\textwidth]{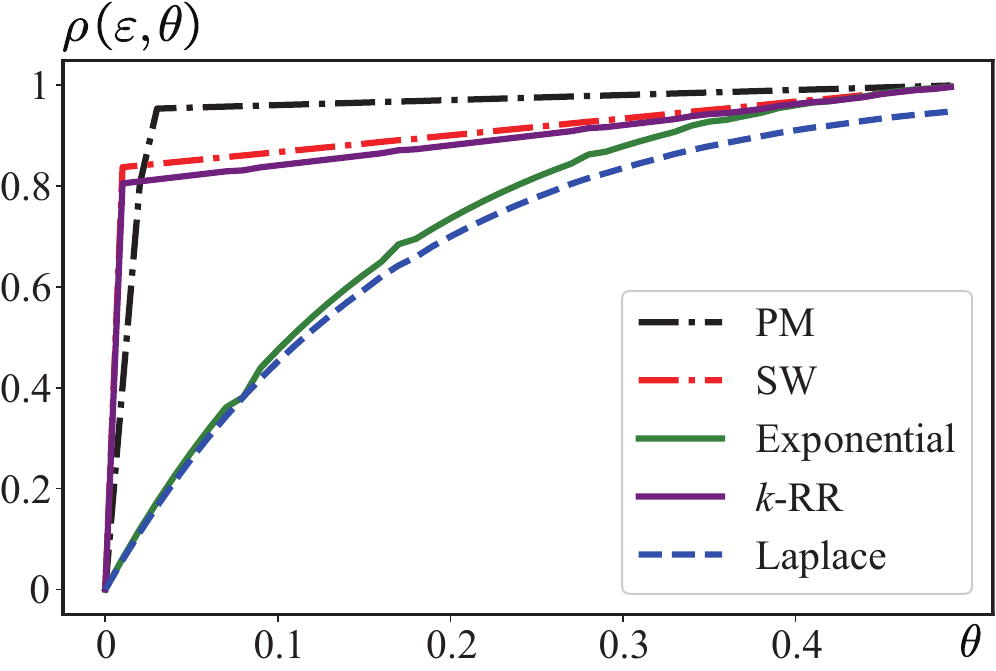}
        \caption{$\varepsilon=6$}
    \end{subfigure}
    \caption{Comparison of the robustness probability $\rho(\varepsilon, \theta)$ for different LDP mechanisms with $\varepsilon=2, 4, 6$. 
    Details of the mechanisms and discussions on the curves of $\rho(\varepsilon, \theta)$ are provided in Appendix~\ref{appendix:mechanisms}.}
    \label{fig:concentration_analysis}
\end{figure*}

After setting tolerance $\tau$ and confidence level $\omega$, we can find a probabilistic guaranteed robustness radius by the following procedure.
The key idea is to increase $\theta$ from $0$ until the misclassification rate exceeds $\tau / 2$.
Specifically,
\begin{enumerate}
    \item Compute the required number of samples $n(\omega, \tau/2)$ using the Hoeffding bound in Theorem~\ref{thm:hoeffding}.
    \item For the current $\theta$, uniformly sample $n$ points from $B_\theta(x)$, then query the classifier and calculate the misclassification rate.
    \item If the misclassification rate is below $\tau / 2$, increase $\theta$ and repeat the second step.
\end{enumerate}
As $\theta$ increases, the misclassification rate will eventually exceed $\tau / 2$. The maximum $\theta$ before this point is the robustness radius.
In practice, binary search is often used in the third step to reduce the complexity to a logarithmic scale.

\begin{example} \label{example:decision_boundary}
    Given a $2$D neural network classifier $h: [0,1]^2 \to \{1,2\}$, treated as a black-box, 
    and parameters $\tau=0.02$ and $\omega=0.05$, 
    we determine the robustness radius $\theta$ at $x = (0.5, 0.5)$ using the described procedure. 
    Specifically: (i) the required number of samples for testing is $n(\omega, \tau/2) \approx 1.8 \times 10^4$; 
    (ii) by testing $\theta \in [0,1]$, we find the robustness radius to be $\theta = 0.20$.

    Figure~\ref{fig:decision_boundary} illustrates this example.
    The true decision boundary of $h$ at $x$ is shown by the black dashed line, computed via brute force.
    The green dashed box represents the robustness area $B_{\theta}(x)$ determined by the found $\theta=0.20$.
    This result closely touches the true decision boundary.
\end{example}

In the above example, any perturbation within the robustness region $B_{\theta}(x)$ ensures that the classifier's output remains unchanged, thereby preserving utility. 
When the perturbation is performed by an LDP mechanism, the perturbed value $\tilde{x}$ will fall within $B_{\theta}(x)$ with a specific probability.
The next subsection establishes the connection between $B_{\theta}(x)$ and this probability to quantify the classifier's utility under LDP-perturbed inputs.


\subsection{Utility Quantification} \label{subsec:utility_statement}

By combining the concentration analysis of LDP mechanisms and the robustness analysis of classifiers,
we derive a utility quantification for a given classifier under inputs perturbed by
an LDP mechanism w.r.t. any privacy parameter $\varepsilon$.
This quantification framework provides a systematic evaluation of classifier performance under LDP-perturbed inputs.
In contrast, the empirical approach requires re-evaluation for different $\varepsilon$ and 
fails to establish analytical relationships between classifier utility and $\varepsilon$.

Formally, by Definition~\ref{def:probabilistic_robustness}, 
the classifier is robust under $B_{\theta}(x)$ with tolerance $\tau$ and confidence level $1-\omega$.
Therefore, under the same confidence level ($1-\omega$), a probabilistic estimate of the event $h(\tilde{x}) = h(x)$,
i.e. exact robustness without tolerance, is at least $1-\tau$.
Meanwhile, the LDP mechanism $\mechanism$ perturbs $x$ into $B_{\theta}(x)$ with probability $F_{\mechanism}(x+\theta) - F_{\mechanism}(x-\theta)$.
Thus, the utility guarantee of the classifier under the LDP mechanism is quantified by the product of these two probabilities.
\ Specifically, given a $d$-dimensional classifier $h$ and raw data $x$, denote $\mechanism^d(x)$ as the perturbed $d$-dimensional input by $d$ independent $\mechanism$.\footnote{
    For simplicity, we assume $\mechanism$ is applied to all dimensions of $h$.
    In practice, it may only be applied to a subset of sensitive features.
}
We claim:
\emph{Under confidence level at least $1 - \omega$, with probability at least $\rho(\varepsilon, \theta)$, 
the classifier $h$ preserves the correct classification result under the input perturbed by $\mechanism^d$, where}
\begin{equation*}
    \begin{split}
        \rho(\varepsilon, \theta) &= \Pr[\mechanism^d(x) \in B_\theta(x)] \cdot \Pr_{\tilde{x}\sim B_\theta(x)}[h(\tilde{x}) = h(x)]\\
        &\geq\left(F_{\mechanism}(x+\theta) - F_{\mechanism}(x-\theta)\right)^d \cdot (1-\tau),
    \end{split}
\end{equation*}
\emph{with $F_{\mechanism}$ the CDF of the LDP mechanism $\mechanism$.}

For a given mechanism $\mechanism$, the term $F_{\mechanism}(x+\theta) - F_{\mechanism}(x-\theta)$ is a closed form w.r.t. $\varepsilon$ and $\theta$,
directly computable from the mechanism's parameters.
$\omega$ and $\tau$ are predetermined based on Definition~\ref{def:probabilistic_robustness} of robustness. 
We fix $\omega = 0.05$ and $\tau = 0.01$ for all subsequent analyses.
Thus, with the known robustness radius $\theta$ of the classifier $h$ at $x$, 
$\rho(\varepsilon, \theta)$ provides an immediate utility quantification for any $\varepsilon$. 

Furthermore, if the distribution of $x$ as $P_x$ is known, we can extend the utility quantification to provide average-case (expected) and worst-case guarantees:\footnote{
    Evaluating average-case and worst-case utility requires knowledge of the input distribution $P_x$.
    In most papers, analyses of worst-case utility for their proposed mechanisms (defined on a specific domain)
    implicitly assume $P_x$ is defined on that domain with non-zero probability.
    In practice, $P_x$ can be estimated from historical data or domain knowledge.
}
\begin{equation*}
    \rho_{\mathrm{avg}}(\varepsilon) = \mathbb{E}_{x \sim P_x}[\rho(\varepsilon, \theta)], \text{and } \rho_{\mathrm{wor}}(\varepsilon) = \min_{x \sim P_x}[\rho(\varepsilon, \theta)].
\end{equation*}
Such guarantees provide input-distribution-aware utility quantification for the classifier.
For example, the average-case utility guarantee can be stated as follows:
\emph{On average, under confidence level at least $1 - \omega$,\footnote{
    For clarity, we omit the confidence level in subsequent statements,
    as it is a predefined constant with $1 - \omega \approx 1$.
}
with probability at least $\rho_{\mathrm{avg}}(\varepsilon)$, the classifier $h$ preserves the correct classification result for an input $x \sim P_x$ perturbed by $\mechanism^d$.}

\textbf{Application 1: Select the best LDP mechanism.}
A direct application of this utility quantification is comparing the 
classifier's utility under different LDP mechanisms to select the best one.
Mechanisms with higher $\rho(\varepsilon, \theta)$ yield higher utility guarantees for classifiers.
However, this varies with $\theta$ for different classifiers and $\varepsilon$ for different mechanisms.
In fact, there is not a single best mechanism for all classifiers under all $\varepsilon$.

\begin{example}
    Figure~\ref{fig:concentration_analysis} shows the utility guarantee $\rho(\varepsilon, \theta)$
    for different LDP mechanisms, with $\theta$ ranging from $0.1$ to $0.5$ and at $x = 0.5$.
    Detailed instantiations of these mechanisms are provided in Appendix~\ref{appendix:mechanisms}.
    A higher $\rho(\varepsilon, \theta)$ indicates a higher utility for classifiers with robustness radius $\theta$.
    It is evident that no single mechanism is universally optimal for all $\varepsilon$ and $\theta$.
    For instance, when $\varepsilon$ and $\theta$ are small (e.g. $\varepsilon=2$ and $\theta\leq 0.1$), the SW mechanism performs best.
    As $\varepsilon$ and $\theta$ increase, the PM mechanism becomes the superior choice.
\end{example}

\textbf{Takeaway results.}
Although no mechanism is universally optimal for all classifiers (with different robustness radius $\theta$) under all $\varepsilon$,
heuristic guidelines can assist in mechanism selection.
From the above example and Figure~\ref{fig:concentration_analysis}, we observe that if the privacy parameter $\varepsilon$ is not too small ($\geq 2$),
and the classifier's robustness radius $\theta$ is not too small ($\geq 0.1$),
then the PM mechanism is generally the best choice.
Experiments in Section~\ref{sec:case_studies} further support this conclusion for higher-dimensional classifiers.

\textbf{Application 2: Select \bm{$\varepsilon$} for a utility requirement.}
With a chosen mechanism, the utility quantification can guide the selection of the privacy parameter $\varepsilon$ to meet a desired utility guarantee.
For example, if the classifier's robustness radius $\theta$ is $0.3$, and we require the classifier to preserve the correct classification
result with probability $\rho(\varepsilon, \theta) \geq 0.8$ under the PM mechanism, we can set $\varepsilon = 1.8$
to meet this utility requirement while ensuring a strong privacy guarantee.

\textbf{Impact of dimensionality.}
For high-dimensional classifiers, the primary challenge lies in the (sensitive) input dimension $d$.
The utility guarantee $\rho(\varepsilon, \theta)$ decreases exponentially with $d$,
and capturing the robustness of high-dimensional classifiers becomes more difficult,
making the utility quantification less meaningful.

The second part of our contribution addresses this dimensionality issue.
By introducing more precise robustness analysis methods and relaxing the privacy constraint,
we refine the utility guarantee $\rho(\varepsilon, \theta)$,
making the utility quantification more practical for high-dimensional classifiers.

\section{Refinement Techniques} \label{sec:refined_bound}

This section introduces two techniques for refining the utility quantification presented in Section~\ref{subsec:utility_statement}.
The utility guarantee $\rho(\varepsilon, \theta)$ is an at-least guarantee based on the robustness radius under pure $\varepsilon$-LDP. 
To refine it, we propose the following techniques.
\begin{itemize}
    \item \emph{Robustness refinement.} We generalize the robustness radius to a robustness hyperrectangle, 
    which assigns distinct radii to different dimensions, 
    enabling a more precise computation of $\rho(\varepsilon, \theta)$.
    \item \emph{Privacy relaxation.} We relax the pure $\varepsilon$-LDP constraint to $(\varepsilon, \delta)$-PAC LDP, 
    which tolerates a small probability of failure. 
    Under this relaxation, we introduce a $\delta$-probabilistic indicator for applying the $\varepsilon$-LDP mechanism and propose an extended Gaussian mechanism tailored for PAC LDP.\footnote{
        The primary motivation for adopting PAC LDP is to accommodate the Gaussian mechanism, 
        which is widely used in privacy-preserving machine learning but does not satisfy pure LDP.
    }
\end{itemize}

\subsection{Robustness Hyperrectangle} \label{subsec:robustness_hyperrectangle}

To refine the utility quantification, we introduce the notion of a robustness hyperrectangle, which provides a more precise description of the robustness area.

The commonly used robustness radius for classifiers~\cite{DBLP:conf/nips/FawziMF16} is a conservative measure of robustness. It defines the robustness area as an $\ell_\infty$-ball,\footnote{
    Other $\ell_p$-balls, such as $\ell_1$-ball and $\ell_2$-ball, are also used, but they are also symmetric across all dimensions.
} where all dimensions share the same radius. 
However, in practice, different dimensions often have varied robustness radii, making the true robustness area larger than that defined by an $\ell_\infty$-ball.
\ In our context, accurately describing the robustness area is crucial for refining utility quantification. 
To this end, we define the robustness hyperrectangle, which accounts for the robustness of each dimension individually. The formal definition is as follows.

\begin{definition}[Robustness hyperrectangle] \label{def:robustness_hyperrectangle}
    Given a $d$-dim\-ensional classifier $h$ and a data point $x$, the robustness hyperrectangle $\theta_\diamond\coloneq [a_1,b_1] \times [a_2,b_2] \times \dots \times [a_d,b_d]$ 
    is a $d$-dimensional hyperrectangle that includes $x$ and satisfies
    \begin{equation*}
        \forall \tilde{x} \in \theta_\diamond, \; h(\tilde{x}) = h(x).
    \end{equation*}
\end{definition}

The robustness radius is a special case of the robustness hyperrectangle, where each dimension has the same size $\theta$ around $x$. More generally, we say the classifier $h$ is $\theta_\diamond$-locally robust at $x$.

\textbf{Finding robustness hyperrectangle.}
The robustness hyperrectangle $\theta_\diamond$ is not unique, unlike the robustness radius.
Different methods for finding $\theta_\diamond$ may yield different results.
Heuristically, we aim to include the robustness radius (i.e. $\theta \subseteq \theta_\diamond$) to ensure an improved utility quantification.

After determining the robustness radius $\theta$ using the procedure in Section~\ref{subsubsec:procedure_for_finding_theta_star}, we initialize the robustness area of each dimension as $[a_i, b_i]$ with $\theta$. 
We then try to expand $[a_i, b_i]$ along the $a_i$ or $b_i$ directions. This process outputs a maximal robustness hyperrectangle $\theta_\diamond$ that satisfies the definition in Definition~\ref{def:robustness_hyperrectangle}. 

\begin{figure}[t]
    \centering
    \begin{subfigure}[b]{0.43\linewidth}
        \centering
        \includegraphics[width=0.9\textwidth]{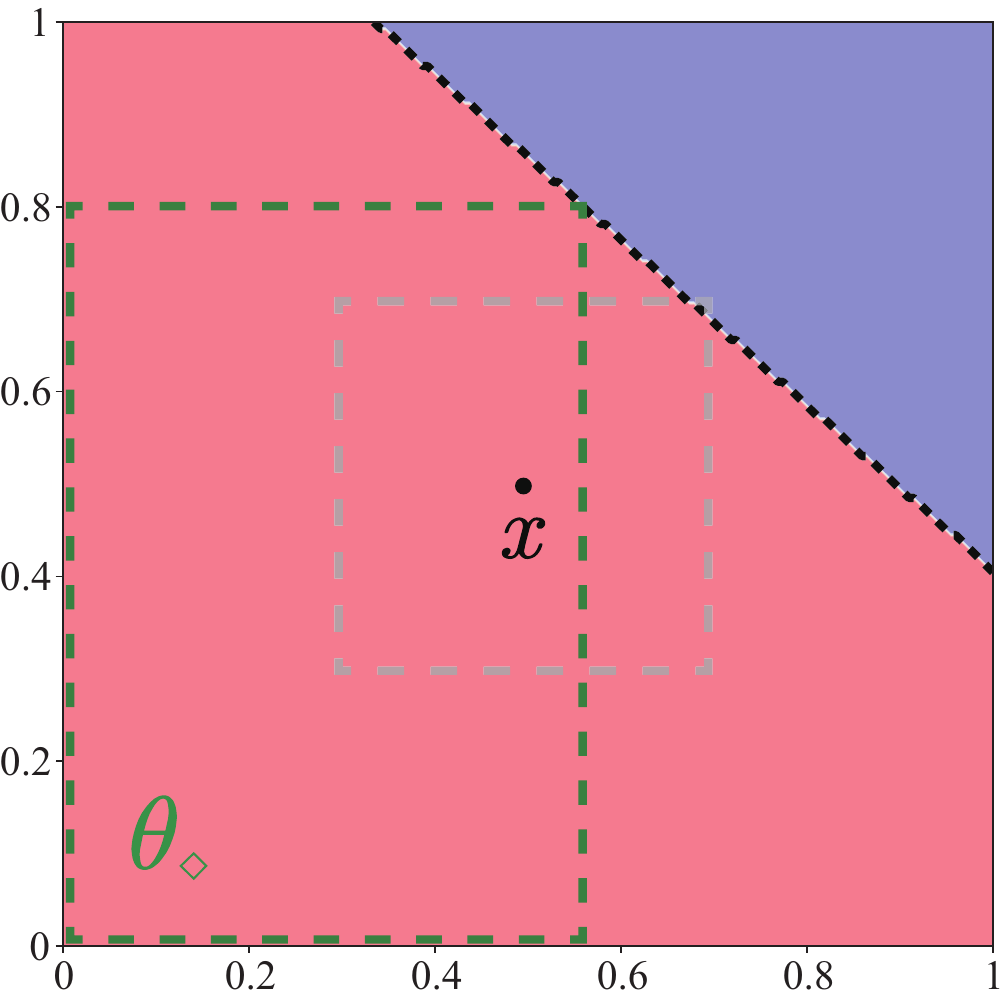}
        \caption{$\theta_\diamond = [0,0.55]\times [0,0.8]$.}
        \label{fig:robustness_hyperrectangle_1}
    \end{subfigure}
    \hfill
    \begin{subfigure}[b]{0.43\linewidth}
        \centering
        \includegraphics[width=0.9\textwidth]{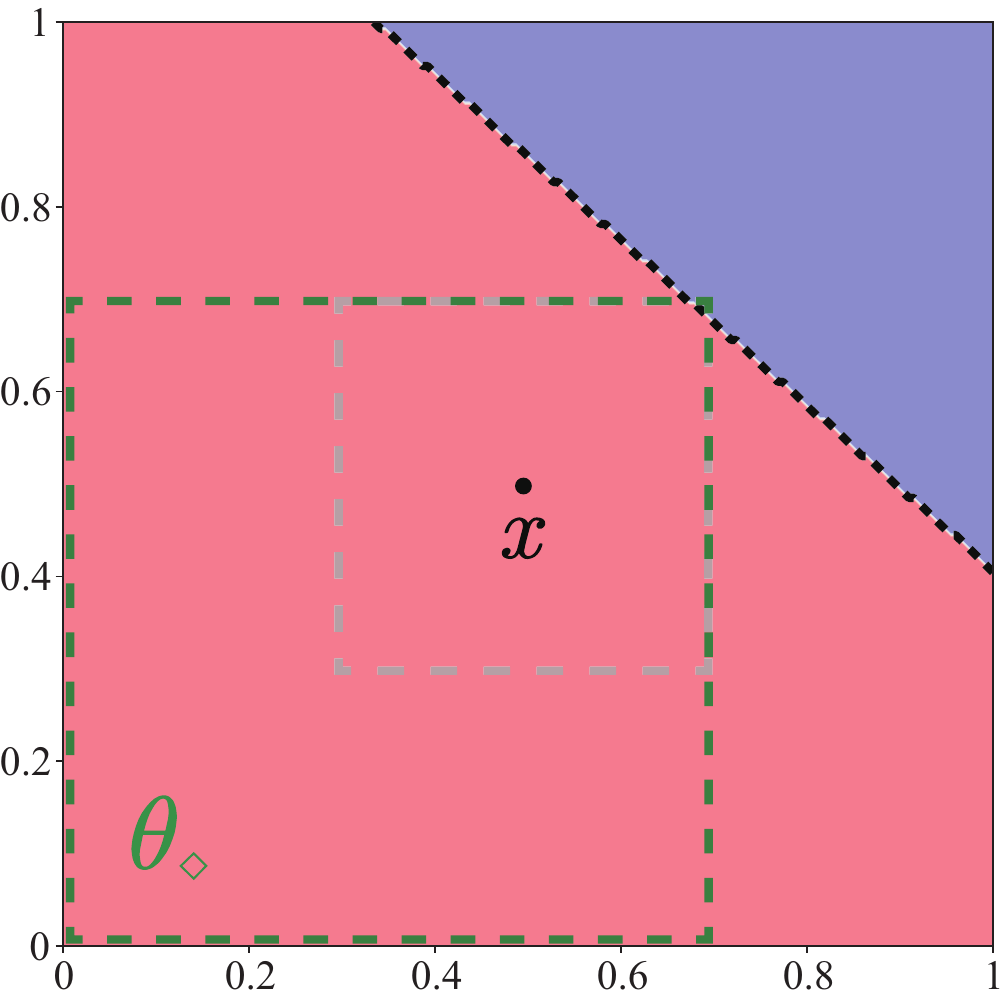}
        \caption{$\theta_\diamond = [0,0.7]\times [0,0.7]$.}
        \label{fig:robustness_hyperrectangle_2}
    \end{subfigure}
    \caption{Two robustness hyperrectangles (green dashed boxes) for the classifier in Figure~\ref{fig:decision_boundary}.
        The original robustness radius $\theta = 0.2$ corresponds to the gray dashed box.}
    \label{fig:robustness_hyperrectangle}
\end{figure}

\begin{example}
    Figure~\ref{fig:robustness_hyperrectangle} illustrates two examples of robustness hyperrectangles for the classifier in Example~\ref{example:decision_boundary}. 
    The original robustness radius is $\theta = 0.2$, corresponding to the hyperrectangle $[0.4,0.6]\times [0.4,0.6]$ (gray dashed boxes). 
    Two different hyperrectangles (green dashed boxes) are shown, both larger than the original robustness radius.
    Figure~\ref{fig:robustness_hyperrectangle_1} partially includes the robustness radius, while Figure~\ref{fig:robustness_hyperrectangle_2} fully encompasses it.
    Our proposed procedure will obtain the hyperrectangle depicted in Figure~\ref{fig:robustness_hyperrectangle_2}.
\end{example}

Each dimension of $\theta_\diamond$, represented as $[a_i,b_i]$, influences the concentration analysis of LDP mechanisms. 
Larger $[a_i,b_i]$ allows for greater perturbation, leading to a stronger utility guarantee. 
Specifically, the concentration level $\theta$ in the utility guarantee $\rho(\varepsilon, \theta)$ is replaced with the refined $\theta_\diamond$, 
resulting in a strictly improved utility guarantee $\rho(\varepsilon, \theta_\diamond)$.

\textbf{Refined utility quantification.}
Given a $d$-dimensional classifier $h$, LDP mechanism $\mechanism_\varepsilon$ for each dimension, and sensitive data $x$.
Denote the utility quantification $\rho(\varepsilon, \theta)$ as defined in Section~\ref{subsec:utility_statement}.
It can be refined to:
\emph{With probability at least $\rho(\varepsilon, \theta_\diamond)$, the classifier $h$ 
preserves the correct classification under the input perturbed by $d$ independent $\mechanism_\varepsilon$ (i.e. pure $d\varepsilon$-LDP), where}\footnote{
    We omit the terms $(1-\omega)$ and $(1-\tau)$ related to $\rho(\varepsilon, \theta_\diamond)$ for simplicity in presentation.
    They are almost $1$ in practice (e.g. $\omega = 0.05, \tau = 0.01$) and thus also negligible.
}
\begin{equation*}
    \rho(\varepsilon, \theta_\diamond) = \prod_{i=1}^d F_{\mechanism}(b_i) - F_{\mechanism}(a_i),
\end{equation*}
\emph{where $[a_i,b_i]$ represents the $i$-th dimension of the hyperrectangle $\theta_\diamond$.}

\subsection{PAC Privacy} \label{subsec:pac_privacy}

Probably approximately correct (PAC) privacy~\cite{DBLP:conf/crypto/XiaoD23} provides a formal framework for relaxing the worst-case privacy guarantee.
While pure LDP requires the privacy constraint to hold in all cases, PAC privacy allows a small probability of failure, i.e. probabilistic LDP.
This relaxation allows the use of a wider range of mechanisms, notably the Gaussian mechanism, which is widely used in privacy-preserving machine learning.

To simplify notation, we denote the privacy loss~\cite{DBLP:journals/fttcs/DworkR14} (of a mechanism $\mechanism$) between $x_1$ and $x_2$ at an observation $\tilde{x}$ as
\begin{equation*}
    \mathcal{L}_{\mechanism,x_1, x_2}(\tilde{x}) \coloneq \ln\left(\frac{\Pr[\mechanism(x_1) = \tilde{x}]}{\Pr[\mechanism(x_2) = \tilde{x}]}\right).
\end{equation*}
Pure $\varepsilon$-LDP can then be expressed as: $\mathcal{L}_{\mechanism,x_1, x_2}(\tilde{x}) \leq \varepsilon$ for all $x_1, x_2, \tilde{x}$.
In contrast, $(\varepsilon, \delta)$-PAC LDP is defined by a relaxed $\delta$-probabilistic condition.\footnote{
    Readers may wonder about the relationship between $(\varepsilon, \delta)$-PAC LDP and $(\varepsilon, \delta)$-LDP.
    In fact, the traditional $(\varepsilon, \delta)$-LDP from Dwork~\cite{DBLP:journals/fttcs/DworkR14}
    is ambiguous and has many interpretations. 
    Concentrated privacy~\cite{DBLP:journals/corr/DuchiWJ16} from Dwork, R\'enyi privacy~\cite{DBLP:conf/csfw/Mironov17}, 
    and PAC privacy~\cite{DBLP:conf/crypto/XiaoD23} can address this ambiguity.
    See Appendix~\ref{appendix:pac_ldp} for a detailed discussion.
}

\begin{definition}[PAC LDP, adopted from~\cite{DBLP:conf/crypto/XiaoD23}] \label{def:pac_ldp}
    A randomized mechanism $\mechanism: \mathcal{X} \to \mathsf{Range}(\mechanism)$ satisfies $(\varepsilon,\delta)$-PAC LDP if
    \begin{equation*}
        \forall x_1, x_2, \tilde{x}: \Pr[\mathcal{L}_{\mechanism,x_1, x_2}(\tilde{x}) \leq \varepsilon] \geq 1- \delta,
    \end{equation*}
    i.e. $\mechanism$ satisfies $\varepsilon$-LDP with probability at least $1-\delta$.
\end{definition}

Pure LDP is a special case of $(\varepsilon,\delta)$-PAC LDP with $\delta=0$.
Although PAC LDP allows a $\delta$-probabilistic failure, 
the adversary can only infer the original data with probability at most $\delta$ when $\varepsilon$-LDP fails.

\textbf{Combination of PAC LDP mechanisms.}
When applying $d$ independent PAC LDP mechanisms to $d$ dimensions, if we follow the combination theorem from Dwork~\cite{DBLP:journals/fttcs/DworkR14},
it gives $(d\varepsilon,d\delta)$-PAC LDP.
However, this result is not tight, as it allows the failure probability $d\delta$ to exceed $1$, which is impossible.
We provide a tighter result for combining $d$ independent PAC LDP mechanisms.

\begin{theorem}[Combination of $(\varepsilon,\delta)$-PAC LDP] \label{thm:pac_composition}
    Given $d$ independent mechanisms $\mechanism_1, \mechanism_2, \dots, \mechanism_d$ that satisfy $(\varepsilon,\delta)$-PAC LDP,
    their combination satisfies $(d\varepsilon, 1-(1-\delta)^d)$-PAC LDP.
\end{theorem}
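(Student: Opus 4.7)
My plan is to reduce the combination claim to a statement about the privacy-loss random variables. Write the combined mechanism as $\mechanism(x) = (\mechanism_1(x), \dots, \mechanism_d(x))$ and its output as $\tilde{x} = (\tilde{x}_1, \dots, \tilde{x}_d)$. Because the $\mechanism_i$ are independent, the joint density factorizes and the log-ratio telescopes:
\begin{equation*}
    \mathcal{L}_{\mechanism,x_1,x_2}(\tilde{x}) \;=\; \sum_{i=1}^d \mathcal{L}_{\mechanism_i,x_1,x_2}(\tilde{x}_i).
\end{equation*}
This reduces the combined privacy loss to a sum of independent per-mechanism losses, which is the algebraic backbone of the argument.

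Next, I would observe the simple set inclusion
\begin{equation*}
    \bigcap_{i=1}^d \bigl\{ \mathcal{L}_{\mechanism_i,x_1,x_2}(\tilde{x}_i) \le \varepsilon \bigr\}
    \;\subseteq\;
    \bigl\{ \mathcal{L}_{\mechanism,x_1,x_2}(\tilde{x}) \le d\varepsilon \bigr\},
\end{equation*}
since if every summand is at most $\varepsilon$ then the sum is at most $d\varepsilon$. Taking probabilities, using independence of the $\mechanism_i$ (so the events on the left are independent), and applying the $(\varepsilon,\delta)$-PAC LDP guarantee of each $\mechanism_i$, I get
\begin{equation*}
    \Pr\bigl[\mathcal{L}_{\mechanism,x_1,x_2}(\tilde{x}) \le d\varepsilon\bigr] \;\ge\; \prod_{i=1}^d \Pr\bigl[\mathcal{L}_{\mechanism_i,x_1,x_2}(\tilde{x}_i) \le \varepsilon\bigr] \;\ge\; (1-\delta)^d.
\end{equation*}
Rearranging gives the failure probability $1-(1-\delta)^d$ claimed by the theorem, for every pair $(x_1,x_2)$ and every product tail event, which is exactly what Definition~\ref{def:pac_ldp} requires.

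There is no real obstacle beyond being careful about two points. First, I must argue that the independence of the mechanisms indeed makes the per-coordinate loss events independent random variables under either $\mechanism(x_1)$ or $\mechanism(x_2)$; this is immediate from the product form of the joint density, but it is the place where a sloppy proof could sneak in a union bound and recover only the loose $d\delta$ bound. Second, I should note that the argument does not require the mechanisms to be identical or even to be applied to disjoint coordinates — it only uses independence of the random draws — so the theorem covers both parallel composition across features and sequential composition on the same input. After these two remarks, multiplying the per-mechanism success probabilities gives the tight $(1-\delta)^d$ and the theorem follows.
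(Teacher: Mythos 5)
Your proposal is correct and follows essentially the same route as the paper's proof: both compute the probability that all $d$ independent per-mechanism privacy-loss events hold simultaneously, using independence to obtain $(1-\delta)^d$ rather than a union bound. Your write-up simply makes explicit the telescoping of the joint privacy loss and the set inclusion that the paper's sketch leaves implicit.
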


\begin{proof}
    (Sketch) The result of $\delta$ comes from computing the total failure probability of $d$ independent mechanisms.
    Appendix~\ref{appendix:pac_composition} provides the details.
\end{proof} 

With the notion of PAC LDP, we present two techniques to improve the utility quantification.
The first is a privacy indicator, which extends any pure LDP mechanism to PAC LDP.
The second is an extended Gaussian mechanism, which cannot satisfy pure LDP but satisfy PAC LDP.

\subsubsection{Privacy Indicator} \label{subsubsec:privacy_indicator}

The failure probability in $(\varepsilon,\delta)$-PAC LDP allows us to design a $\delta$-probabilistic 
indicator for the application of the pure LDP mechanism.
Specifically, we define
\begin{equation*}
    \mathcal{I}_\delta(\mechanism(x)) = 
    \begin{cases}
        \mechanism(x) & \text{w.p. } 1-\delta, \\
        x & \text{w.p. } \delta,
    \end{cases}
\end{equation*}
i.e. with probability $1-\delta$, the mechanism $\mechanism$ is applied, and with probability $\delta$, the original data $x$ is returned.
The mechanism $\mathcal{I}_\delta(\mechanism(x))$ satisfies $(\varepsilon,\delta)$-PAC LDP.

\begin{theorem} \label{thm:privacy_indicator}
    Assume $\mechanism$ is a randomized mechanism that satisfies $\varepsilon$-LDP.
        Then the mechanism $\mathcal{I}_\delta(\mechanism(x))$ defined above satisfies $(\varepsilon,\delta)$-PAC LDP.
\end{theorem}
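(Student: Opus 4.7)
The plan is to reduce the PAC LDP guarantee for $\mathcal{I}_\delta(\mechanism)$ to the pure $\varepsilon$-LDP guarantee assumed for $\mechanism$ by exposing the indicator's internal randomness. The indicator is naturally expressed as a two-stage process: an internal coin flip first chooses, with probability $1-\delta$, to apply $\mechanism$ and output $\mechanism(x)$, and with probability $\delta$ to apply the identity map and output $x$. Let $E$ denote the event that the coin selects the $\mechanism$ branch, so that $\Pr[E] = 1-\delta$ and $\Pr[\bar{E}] = \delta$.

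Conditioned on $E$, the output of $\mathcal{I}_\delta(\mechanism(x))$ is distributed identically to $\mechanism(x)$, and by the assumed $\varepsilon$-LDP of $\mechanism$, the privacy loss $\mathcal{L}_{\mechanism, x_1, x_2}(\tilde{x}) \leq \varepsilon$ holds with probability one for every pair $x_1, x_2 \in \mathcal{X}$. Hence, for any fixed $x_1, x_2$, the event that the privacy loss of $\mathcal{I}_\delta(\mechanism)$ stays within $\varepsilon$ occurs with probability at least $\Pr[E] = 1-\delta$. Matching this directly against Definition~\ref{def:pac_ldp} yields $(\varepsilon,\delta)$-PAC LDP, completing the proof.

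The main subtlety I expect to confront is reconciling the privacy loss of $\mathcal{I}_\delta(\mechanism)$ (a mixture with an added point mass at $x$) with the privacy loss of $\mechanism$ alone: the extra mass at $x$ could, for specific outputs, inflate the density ratio beyond $e^\varepsilon$ at $\tilde{x} = x_1$, so a purely distributional reading of the ratio would not immediately give the bound. I would resolve this by appealing to the interpretation of PAC LDP stated alongside Definition~\ref{def:pac_ldp}, namely that the mechanism satisfies $\varepsilon$-LDP with probability at least $1-\delta$ over its internal randomness (with the fine-grained discussion deferred to Appendix~\ref{appendix:pac_ldp}). Under this interpretation, conditioning on the $\mechanism$-branch coin flip is sufficient, and no case analysis of the mixture's density ratio is required.
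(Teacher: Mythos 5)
Your proposal is correct and follows essentially the same route as the paper's proof in Appendix~\ref{appendix:privacy_indicator}: condition on the internal coin, note that the $\mechanism$-branch is taken with probability $1-\delta$ and that on this branch the privacy loss is bounded by $\varepsilon$ since $\mechanism$ is $\varepsilon$-LDP, and conclude via Definition~\ref{def:pac_ldp}. Your explicit flagging of the mixture-density subtlety (the point mass at $x$ potentially inflating the ratio) is a point the paper's proof silently elides, but you resolve it exactly as the paper implicitly does, by reading PAC LDP as ``$\varepsilon$-LDP with probability $1-\delta$ over the internal randomness.''
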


\begin{proof}
    (Sketch) The proof follows directly from the definition of $\mathcal{I}_\delta(\mechanism(x))$. Appendix~\ref{appendix:privacy_indicator} provides the details.
\end{proof}

\textbf{Privacy indicator for multiple mechanisms.}
For a $d$-dimen\-sional data $x$, instead of applying the privacy indicator to each dimension independently, 
we can treat $d$ mechanisms applied to each dimension as a single mechanism designed for $d$-dimensional data, denoted as $\mechanism^d(x)$. 
The privacy indicator $\mathcal{I}_\delta(\mechanism^d(x))$ then controls the application of all mechanisms together.
\ By substituting the pure LDP mechanism $\mechanism^d(x)$ with $\mathcal{I}_\delta(\mechanism^d(x))$, 
we can refine the utility quantification while relaxing the privacy guarantee to $(\varepsilon, \delta)$-PAC LDP.

\textbf{Refined utility quantification.} 
Given a $d$-dimensional classifier $h$, combined LDP mechanism $\mechanism^d_\varepsilon$, privacy indicator $\mathcal{I}_\delta(\cdot)$,
and raw data $x$,
the utility quantification $\rho(\varepsilon, \theta)$ can be refined to:
\emph{With probability at least $\delta + (1-\delta)\rho(\varepsilon, \theta_\diamond)$, the classifier $h$ 
preserves the correct classification under the input perturbed by $\mathcal{I}_\delta(\mechanism^d(x))$, which 
guarantees $(d\varepsilon, \delta)$-PAC LDP.}

The refined utility guarantee $\delta + (1-\delta)\rho(\varepsilon, \theta_\diamond)$ is always larger than $\rho(\varepsilon, \theta_\diamond)$, 
as it equals $\rho(\varepsilon, \theta_\diamond) + \delta(1-\rho(\varepsilon, \theta_\diamond))$, and the latter term is non-negative. 
Therefore, the new utility guarantee is always a refinement over the original one.

\subsubsection{Extended Gaussian Mechanism} \label{subsec:extended_gaussian_mechanism}

It is already known that the Gaussian mechanism~\cite{DBLP:journals/fttcs/DworkR14} cannot satisfy pure $\varepsilon$-LDP,
making the privacy indicator inapplicable.
Nonetheless, Dwork has shown that it can satisfy $(\varepsilon,\delta)$-LDP for $\varepsilon \leq 1$.
\ Under the notion of PAC LDP, we provide an extended Gaussian mechanism to work for any $\varepsilon$.\footnote{
    Analytical Gaussian mechanism~\cite{DBLP:conf/icml/BalleW18} works for any $\varepsilon$,
    but it is based on an alternative privacy definition and lacks analytical form of the noise scale $\sigma$.
    Appendix~\ref{appendix:analytic_gaussian} provides the detailed comparison.
}
The following theorem is the form for data domain $x\in [0,1]$.


\begin{theorem} [Extended Gaussian mechanism] \label{thm:gaussian_mechanism}
    The Gaussian mechanism $\mechanism_\varepsilon(x) = x + \mathcal{N}(0, \sigma^2)$ satisfies $(\varepsilon,\delta)$-PAC LDP
    for $x \in [0,1]$ if $\sigma$ is defined as
    \begin{equation*}
        \sigma \geq \frac{\sqrt{2}}{2} \left(\frac{\sqrt{\ln(2/\delta) + \varepsilon} + \sqrt{\ln(2/\delta)}}{\varepsilon}\right).
    \end{equation*}
\end{theorem}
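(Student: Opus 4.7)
The plan is to exploit the fact that for a Gaussian mechanism the privacy loss is itself a Gaussian random variable, and to translate the PAC-LDP requirement $\Pr[\mathcal{L} > \varepsilon] \leq \delta$ into a Gaussian tail inequality that becomes a quadratic in $\sigma$.

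First, I would fix arbitrary $x_1, x_2 \in [0,1]$, let $\Delta = x_1 - x_2$ (so $|\Delta| \leq 1$ by the sensitivity of the identity map on $[0,1]$), and write $\tilde{x} = x_1 + Z$ with $Z \sim \mathcal{N}(0, \sigma^2)$ when $\tilde{x} \sim \mechanism_\varepsilon(x_1)$. Substituting the Gaussian pdfs into the privacy-loss definition yields the clean identity
\begin{equation*}
\mathcal{L}_{\mechanism, x_1, x_2}(\tilde{x}) = \frac{(\tilde{x} - x_2)^2 - (\tilde{x} - x_1)^2}{2\sigma^2} = \frac{\Delta^2}{2\sigma^2} + \frac{\Delta Z}{\sigma^2},
\end{equation*}
so that $\mathcal{L}$ is itself Gaussian with mean $\Delta^2/(2\sigma^2)$ and standard deviation $|\Delta|/\sigma$.

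Next I would reduce the failure event $\{\mathcal{L} > \varepsilon\}$ to a standard normal tail: letting $W \sim \mathcal{N}(0,1)$,
\begin{equation*}
\Pr[\mathcal{L} > \varepsilon] = \Pr\!\left[W > \frac{\varepsilon \sigma}{|\Delta|} - \frac{|\Delta|}{2\sigma}\right].
\end{equation*}
Because Definition~\ref{def:pac_ldp} quantifies over every ordered pair $(x_1, x_2)$ and the event is symmetric under $\Delta \leftrightarrow -\Delta$, it suffices to enforce a two-sided bound $\Pr[|W| > t] \leq \delta$. Applying the Gaussian tail inequality $\Pr[|W| > t] \leq 2 e^{-t^2/2}$ and setting the right-hand side to $\delta$ yields the sufficient condition $\varepsilon \sigma / |\Delta| - |\Delta|/(2\sigma) \geq \sqrt{2\ln(2/\delta)}$. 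The left-hand side is monotonically decreasing in $|\Delta|$, so the worst case is $|\Delta| = 1$, which reduces the requirement to the quadratic inequality
\begin{equation*}
2 \varepsilon \sigma^2 - 2\sqrt{2\ln(2/\delta)}\,\sigma - 1 \geq 0.
\end{equation*}
I would then apply the quadratic formula, retain the positive root $\sigma \geq (\sqrt{2\ln(2/\delta)} + \sqrt{2\ln(2/\delta) + 2\varepsilon})/(2\varepsilon)$, and pull a factor of $\sqrt{2}$ out of the numerator to recover exactly the bound in the theorem.

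The main obstacle is the bookkeeping in the middle step rather than the algebra: I need to justify why quantifying over $(x_1, x_2)$ makes the one-sided PAC-LDP tail reduce to a symmetric bound on $|W|$, carefully pick the worst-case $|\Delta|$ (noting the monotonicity of the threshold in $|\Delta|$), and verify that the chosen root automatically makes $\varepsilon \sigma - 1/(2\sigma) > 0$ — otherwise the threshold $t$ inside $\Pr[W > t]$ is non-positive and no tail bound of this form can deliver a probability below $\delta < 1/2$. Once those consistency checks are handled, every other piece is routine manipulation of the quadratic formula.
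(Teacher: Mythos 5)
Your proposal is correct and follows essentially the same route as the paper's proof: the identical Gaussian privacy-loss computation $\mathcal{L} = \frac{\Delta}{\sigma}Z + \frac{\Delta^2}{2\sigma^2}$, the same reduction to a standard-normal tail bounded by $\Pr[Z \geq t] \leq e^{-t^2/2}$ with the two-sided event split as $\delta/2$ per tail, the same worst case $|\Delta| = 1$, and the same quadratic in $\sigma$ solved by the positive root. Your additional bookkeeping (the monotonicity of the threshold in $|\Delta|$ and the positivity check on $\varepsilon\sigma - 1/(2\sigma)$) only makes explicit what the paper leaves implicit.
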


\begin{proof}
    (Sketch) The proof generally follows the structure of Dwork et al~\cite{DBLP:journals/fttcs/DworkR14}, 
    but uses a rewritten privacy loss and a different tail bound for the Gaussian distribution. 
    Appendix~\ref{appendix:gaussian_mechanism} provides the details and compares this bound with others in the literature.
\end{proof}

The above theorem applies to a single dimension.
For a $d$-dimen\-sional data, we can apply the Gaussian mechanism to each dimension with the same $\sigma$.
Then the combination result from Theorem~\ref{thm:pac_composition} provides a PAC privacy guarantee.

\textbf{Refined utility quantification.}
Given a classifier $h$, combined Gaussian mechanism $\mechanism^d$ with $\delta$, and $d$-dimensional
sensitive data $x$, the utility quantification $\rho(\varepsilon, \theta_\diamond)$ can be refined to:
\emph{With probability at least $\rho(\varepsilon, \theta_\diamond)$, the classifier $h$
preserves the correct classification under the input perturbed by $\mechanism^d$, which guarantees $(d\varepsilon, 1-(1-\delta)^d)$-PAC LDP.}

The extended Gaussian mechanism and the privacy indicator provide two approaches to achieve $(\varepsilon,\delta)$-PAC LDP.
While both operate under the same privacy notion, the Gaussian mechanism actually provides a stronger effective privacy:
when it fails to satisfy the pure $\varepsilon$-LDP (with probability $\delta$),
the privacy loss exceeds $\varepsilon$ but remains bounded by a larger (unknown) value.
In contrast, the privacy indicator exposes the original data when it fails.

\section{Discussions} \label{sec:discussion}

This section discusses detailed privacy questions in the utility quantification framework and 
summarizes other discussions that are deferred to the appendix.

\subsubsection*{Detailed Privacy Discussions} \label{subsubsec:privacy_discussion}

In the LDP paradigm, it is assumed that the classifier is curious about users' sensitive data 
but honest in reporting results, 
i.e. it outputs $h(\mechanism(x))$ rather than fabricating results. 
This is a weak assumption, as any utility evaluation becomes meaningless if the classifier fabricates results.

Utility quantification, as a different scenario from \emph{using} the classifier, focuses on a specific input $x$ but does not involve interaction with users' data. 
It can be conducted either by the classifier designer or by the users,
and in both cases, there is no privacy leakage for the users.
(i) If conducted by the classifier designer, e.g. for improving design, there is no interaction with users,  and thus no privacy concerns arise.
(ii) If conducted by the users, e.g. to achieve a better privacy-utility trade-off for their data,
they can obtain the robustness radii without revealing their original data. 
For white-box classifiers, users know the classifier parameters and can compute the robustness radii.
For black-box classifiers, users can uniformly query the classifier over the entire input domain (i.e. prediction for nearly all input $x\in[0,1]^d$), 
without revealing their specific data.
This one-time procedure yields robustness radii for all inputs and privacy parameters.
Thus, the robustness radius at any input can be considered public knowledge, which is a reasonable assumption.

\subsubsection*{Other Discussions}
For brevity, additional discussions are deferred to Appendix~\ref{appendix:other_discussions}, 
including the following topics:
\begin{enumerate}
    \item Complexity of finding the robustness radius.
    \item Per-instance data utility vs aggregated data utility.
    \item Fixed classifier with noisy input vs retrained classifier with noisy data.
    \item Robustness of LDP mechanisms.
    \item Independent vs correlated LDP mechanisms.
\end{enumerate}
They provide additional insights into potential questions that may arise related to the utility quantification framework.

\section{Case Studies} \label{sec:case_studies}

This section presents case studies of our utility quantification framework.
We use representative LDP mechanisms and classifiers as examples, though the proposed framework is generalizable to any LDP mechanism and classifier type.

\subsection{Setup}
We use the following datasets to train three types of classifiers.

\begin{itemize}
    \item Stroke Prediction~\cite{stroke_dataset}: Predicts stroke likelihood based on features like age, hypertension, and BMI. 
    This dataset contains $6$ numerical input dimensions and a binary output. 
    The sensitive features are ``Age'' and ``BMI''.
    \item Bank Customer Attrition~\cite{bank_attrition}: Predicts whether a bank customer will leave based on features like age, salary, and tenure. 
    This dataset has $9$ numerical input dimensions and a binary output. The sensitive features are ``Age'' and ``Estimated Salary''.
    \item MNIST-7$\times$7 (variant of~\cite{DBLP:journals/spm/Deng12}): Predicts handwritten digits ($0$ to $9$) based on pixel values. 
    This dataset has $49$ input dimensions and a $10$-class output. All dimensions are treated as sensitive features.
\end{itemize}
The first two datasets involve direct sensitive user inputs, while the third is a standard image classification dataset. 
For the first two datasets, we use the \verb|sklearn| library with cross-validation to train Logistic Regression and Random Forest classifiers.
These traditional classifiers are widely used in medical and financial domains due to their interpretability and lower data requirements~\cite{Amini25012023}.
For the third dataset, which consists of high-dimensional images, we use the \verb|torch| library to train a Convolutional Neural Network (CNN) classifier.
All datasets are normalized to $[0,1]$ for training, as normalization aids classifier convergence and is a common practice.

Given a trained classifier, users apply LDP mechanisms to sensitive features (e.g. age, salary, or specific parts of an image) before sending their data to the classifier for prediction.
We consider typical LDP mechanisms in literature, which are summarized in Table~\ref{table:case_study}, along with the classifiers used.

\begin{table}[t]
    \centering
    \caption{Summary of LDP mechanisms, types of classifiers, and datasets used in the case studies.}
    \label{table:case_study}
    \resizebox{\linewidth}{!}{
    \setlength{\tabcolsep}{2pt}
    \begin{tabular}{ll}
    \toprule
    \textbf{Mechanism} & \makecell{Laplace~\cite{DBLP:journals/fttcs/DworkR14}, Gaussian (Theorem~\ref{thm:gaussian_mechanism}), \\ 
    PM~\cite{DBLP:conf/icde/WangXYZHSS019}, SW~\cite{DBLP:conf/sigmod/Li0LLS20}, Exponential~\cite{DBLP:conf/innovations/NissimST12}, $k$-RR~\cite{DBLP:conf/nips/KairouzOV14}} \\
    \midrule
    \textbf{Classifier} & \makecell{Low-dim: Logistic Regression, Random Forest; \\ High-dim: Neural Network} \\
    \midrule
    \textbf{Dataset} &  \makecell{Low-dim: Stroke Prediction~\cite{stroke_dataset}, Bank Customer \\ Attrition~\cite{bank_attrition}; High-dim: MNIST-7$\times$7 (variant of~\cite{DBLP:journals/spm/Deng12})} \\
    \bottomrule
    \end{tabular}}
    \vspace{-0.1cm}
\end{table}

We present theoretical utility quantification for the classifiers under the LDP mechanisms discussed above. 
This quantification is based on the classifier's robustness hyperrectangle $\theta_\diamond$ (at a record), 
determined using the search method described in Section~\ref{subsec:robustness_hyperrectangle}. 
We also compare the theoretical results with empirical utility observations. 
Specifically, for a given classifier and mechanism, we provide and compare:
\begin{itemize}
    \item \emph{Theoretical $\rho(\varepsilon,\theta_\diamond)$}: Derived using our utility quantification framework.
    \item \emph{Empirical $\hat{\rho}(\varepsilon)$}: Estimated by sampling $n = 2000$ instances $\{\tilde{x}\}_n$ from the LDP mechanism and testing the classifier for each $\varepsilon$.
\end{itemize}
The theoretical $\rho(\varepsilon, \theta_\diamond)$ is a closed-form expression that can be directly computed from the mechanism's parameters for any $\varepsilon$. 
It serves as a lower bound of the actual robustness probability and is generally smaller than $\hat{\rho}(\varepsilon)$. 
A smaller gap indicates more accurate theoretical utility quantification.

Without loss of generality, given the input data distribution $P_x$ determined by each dataset, we can also provide average-case and worst-case utility quantification:
$\rho_{\mathrm{avg}}(\varepsilon), \rho_{\mathrm{wor}}(\varepsilon)$ and their empirical counterparts $\hat{\rho}_{\mathrm{avg}}(\varepsilon), \hat{\rho}_{\mathrm{wor}}(\varepsilon)$.

\subsection{Utility Quantification Results}

We present the utility quantification results for the classifiers under input perturbation by
different LDP mechanisms.

\subsubsection{Stroke Prediction} \label{subsubsec:stroke_prediction}

We begin by considering pure LDP for this medical dataset. 
For a randomly selected record ``Age: 79, BMI: 24, Hypertension: 1, \dots'', 
we analyze theoretical and empirical utility for the classifiers under the Laplace, PM, and Exponential mechanisms. 
Additional records and mechanisms can be found in Appendix~\ref{appendix:stroke:other_records}.

\textbf{Logistic Regression.}
The robustness hyperrectangle at the given record is $\theta_\diamond = [0.63, 1]\times[0,1]$ for this classifier.
From this information, the utility guarantee under the PM mechanism can be directly derived as follows:

\vspace{0.5em}
\noindent\emph{For this trained Logistic Regression classifier on the Stroke Prediction dataset, at the record ``Age: 79, Hypertension: 1, \dots, BMI: 24'', 
with probability at least $\rho(\varepsilon, \theta_\diamond)$, 
the classifier preserves the correct prediction under the PM mechanism applied to ``Age'' and ``BMI'' (i.e. $2\varepsilon$-LDP), where}
\begin{equation*}
    \resizebox{\linewidth}{!}{$
    \begin{aligned}
        \rho(\varepsilon,\theta_\diamond) =& [F_{\text{PM}_{\varepsilon,\text{Age}}}(1) - F_{\text{PM}_{\varepsilon,\text{Age}}}(0.63)]\cdot[F_{\text{PM}_{\varepsilon,\text{BMI}}}(1) - F_{\text{PM}_{\varepsilon,\text{BMI}}}(0)] \\
        =& 1- F_{\text{PM}_{\varepsilon,\text{Age}}}(0.63).
    \end{aligned}$}
\end{equation*}
The last equality holds because the CDF of the PM mechanism is $0$ at $0$, and $1$ at $1$.
Term $F_{\text{PM}_{\varepsilon,\text{Age}}}(0.63)$ is the CDF of the PM mechanism (applied to the feature ``Age'') at $0.63$,
which can be computed from the PM mechanism's parameters, as detailed in Appendix~\ref{appendix:stroke:pm_cdf}.

\begin{figure}[t]
    \centering
    \begin{subfigure}[b]{0.45\linewidth}
        \centering
        \includegraphics[width=0.98\textwidth]{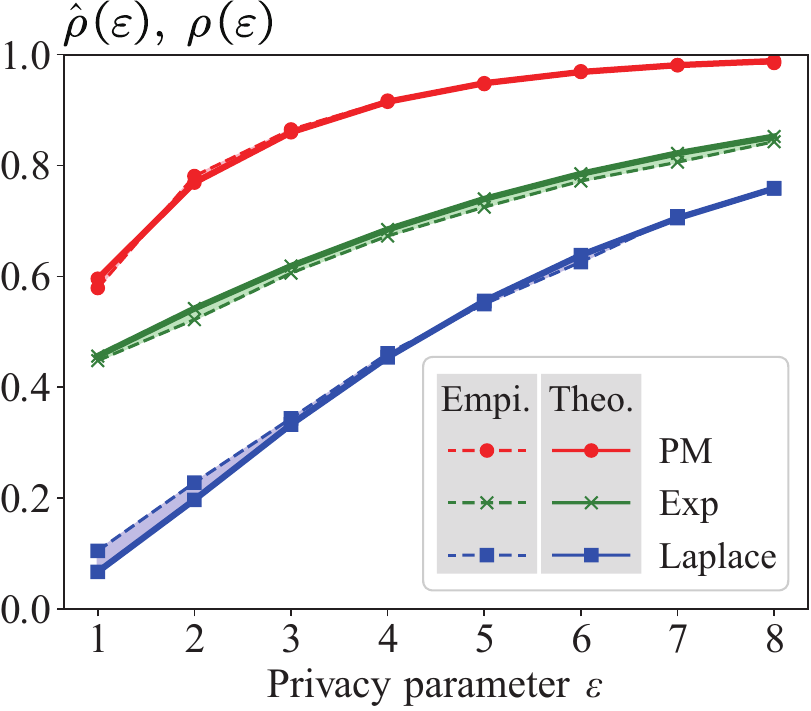}
        \caption{Logistic Regression.}
        \label{fig:exp:stroke_lr}
    \end{subfigure}
    \hfill
    \begin{subfigure}[b]{0.45\linewidth}
        \centering
        \includegraphics[width=0.98\textwidth]{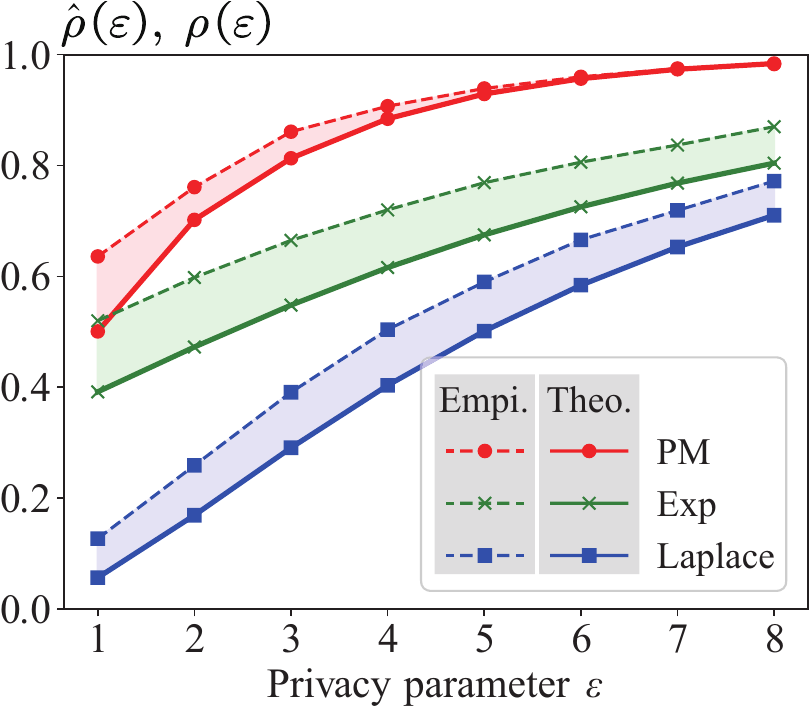}
        \caption{Random Forest.}
        \label{fig:exp:stroke_rf}
    \end{subfigure}
    \caption{Empirical and theoretical utility for two classifiers trained on the Stroke Prediction dataset.}
    \label{fig:exp:stroke}
\end{figure}

\textbf{Random Forest.}
For this classifier, the robustness hyperrectangle at the same record is $\theta_\diamond = [0.50, 1]\times[0,0.62]$.
Based on this information, the utility guarantee under the PM mechanism can be derived as follows:

\vspace{0.5em}
\noindent\emph{For this trained Random Forest classifier on the Stroke Prediction dataset, 
at the record ``Age: 79, Hypertension: 1, \dots, BMI: 24'', with probability at least $\rho(\varepsilon, \theta_\diamond)$, 
the classifier preserves the correct prediction under the PM mechanism applied to ``Age'' and ``BMI'' (i.e. $2\varepsilon$-LDP), where
\begin{equation*}
    \rho(\varepsilon, \theta_\diamond) = (1 - F_{\text{PM}_{\varepsilon,\text{Age}}}(0.5)) \cdot F_{\text{PM}_{\varepsilon,\text{BMI}}}(0.62).
\end{equation*}
}

Figure~\ref{fig:exp:stroke} compares the theoretical utility $\rho(\varepsilon, \theta_\diamond)$ with the empirical utility $\hat{\rho}(\varepsilon)$ for $\varepsilon$ values ranging from $1$ to $8$, applied to each sensitive feature. 
Both utilities increase with $\varepsilon$, with the PM mechanism consistently providing the highest utility. 
This observation aligns with our takeaway that PM is the best choice for moderate $\varepsilon$ and $\theta$ values. 
Similarly, the Exponential mechanism outperforms the Laplace mechanism. 
For both classifiers, the theoretical utility $\rho(\varepsilon, \theta_\diamond)$ closely matches the empirical utility $\hat{\rho}(\varepsilon)$, 
particularly for the Logistic Regression classifier, showing the accuracy of our theoretical utility.

The Random Forest classifier exhibits a larger gap between $\rho(\varepsilon, \theta_\diamond)$ and $\hat{\rho}(\varepsilon)$ compared to the Logistic Regression classifier. 
This discrepancy arises from the robustness hyperrectangle $\theta_\diamond$, which is a conservative estimate of the overall robustness region (difficult to determine precisely). 
For Logistic Regression, the decision boundary is a hyperplane~\cite{sahu_decision_2024}, making it more predictable and easier to approximate a hyperrectangle, 
resulting in a more accurate $\rho(\varepsilon, \theta_\diamond)$. 
Appendix~\ref{appendix:stroke:projected_decision_boundary} provides additional details on decision boundaries.

\textbf{Time cost.}
In addition to a sampling-independent theoretical guarantee, our utility quantification also has an extremely low time cost.
It does not require any sampling from the LDP mechanism, the utility statement can be directly computed for any $\varepsilon$
from the mechanism's parameters and the robustness hyperrectangle $\theta_\diamond$.
Therefore, for mechanisms having high sampling complexity, e.g. the Exponential mechanism ($\mathcal{O}(m)$ with $m$ as the domain size),
the theoretical utility quantification is much more efficient than the empirical one.

Table~\ref{tab:time_cost} shows the average time of computing one $\rho(\varepsilon, \theta_\diamond)$ and $\hat{\rho}(\varepsilon)$ for different mechanisms in this case study.
For each empirical result $\hat{\rho}(\varepsilon)$, we sample $2000$ times from the mechanism and feed the samples to the classifier for inference.
The theoretical approach is significantly faster, especially for the Exponential mechanism.
For this mechanism, computing one empirical $\hat{\rho}(\varepsilon)$ requires $859.83$ milliseconds for $2000$ samples, 
whereas computing one theoretical $\rho(\varepsilon, \theta_\diamond)$ takes only $0.94$ milliseconds, which is less than $0.1\%$.
The efficiency stems from directly substituting the privacy parameter $\varepsilon$ and the robustness hyperrectangle $\theta_\diamond$ into the closed-form expression of $\rho(\varepsilon, \theta_\diamond)$.
Moreover, although computing $\rho(\varepsilon, \theta_\diamond)$ requires the robustness hyperrectangle, $\theta_\diamond$ needs to be computed only once and can then be reused for all $\varepsilon$ values. 
In this case, computing $\theta_\diamond$ takes only $1.20$ milliseconds, which is still more efficient than estimating the empirical utility $\hat{\rho}(\varepsilon)$ separately for each $\varepsilon$.

\begin{table}[t]
    \centering
    \caption{Time cost comparison (in milliseconds).}
    \label{tab:time_cost}
    \setlength{\tabcolsep}{7pt}
    \begin{threeparttable}
    \begin{tabular}{lrrr}
        \toprule
        & PM & Exponential & Laplace \\
        \midrule
        \textbf{Empirical\tnote{a}} & $6.56 + 1.38$ & $859.83 + 1.53$ & $11.29 + 1.53$ \\
        \midrule
        \textbf{Theoretical\tnote{b}} & $0.24$ & $0.94$ & $0.30$ \\
        \bottomrule
    \end{tabular}
    \begin{tablenotes}
        \footnotesize
            \item[a] Time for generating 2000 samples + running inference.
            \item[b] Time to compute $\rho(\varepsilon, \theta_\diamond)$ only; 
            computing $\theta_\diamond$ takes $1.20$ ms but is a one-time cost amortized across all $\varepsilon$ values.
    \end{tablenotes}
    \end{threeparttable}
\end{table}

\textbf{Average-case and worst-case analysis.}
All data points in the dataset collectively define a data distribution $P_x$ over the sensitive features.
Therefore, we evaluate the utility at each $\{\text{Age, BMI}\} \sim P_x$,
then compute the mean for the average-case utility and the minimum for the worst-case utility.
This approach yields average-case and worst-case utility guarantees that account for the input data distribution.
Comprehensive results are provided in Appendix~\ref{appendix:stroke:average_worst_case}.

\subsubsection{Bank Customer Attrition} \label{subsubsec:bank_attrition}

We now evaluate PAC LDP for this financial dataset. For a randomly selected record ``Age: 22, Estimated Salary: 101,348, Credit Score: 619, \dots'', 
we analyze both theoretical and empirical utility for the two classifiers under PAC LDP mechanisms.
Specifically, we apply the privacy indicator $\mathcal{I}_\delta$ to the Laplace, PM, and Exponential mechanisms, 
while also including the Gaussian mechanism. The failure probability is fixed at $\delta = 0.1$ for all mechanisms. 
Detailed utility quantification statements for the privacy indicator and Gaussian mechanism under PAC LDP are provided in Appendix~\ref{appendix:detailed_quantification}.

Figure~\ref{fig:exp:bank_attrition} compares $\rho(\varepsilon, \theta_\diamond)$ and $\hat{\rho}(\varepsilon)$ for $\varepsilon$ values from $1$ to $8$ assigned to each sensitive feature.
The results show a similar trend to the Stroke Prediction dataset, with the Exponential mechanism performing much better in this case,
while the Gaussian mechanism lags behind the others.

Two key observations emerge in this case study:
(i) The Exponential mechanism performs comparably to the PM mechanism for Logistic Regression. 
This is because its robustness hyperrectangle $\theta_\diamond = [0, 0.72]\times [0,1]$ is large enough to cover most of the sensitive feature ranges, 
enabling excellent performance for all mechanisms defined on $[0,1]$. 
Figure~\ref{fig:concentration_analysis} illustrates this phenomenon: larger $\theta$ values lead to $\rho(\varepsilon, \theta)$ approaching $1$ for all such mechanisms. 
However, Laplace and Gaussian mechanisms are less effective here due to their probability mass extending beyond $[0,1]$.
\ (ii) The Gaussian mechanism is not as good as the other mechanisms.
The main reason is from the noise scale $\sigma$ in Theorem~\ref{thm:gaussian_mechanism}.
Although we can extend the Gaussian mechanism to $\varepsilon \geq 1$, this leads to a larger $\sigma$, i.e. less concentrated.

\begin{figure}[t]
    \centering
    \begin{subfigure}[b]{0.45\linewidth}
        \centering
        \includegraphics[width=0.98\textwidth]{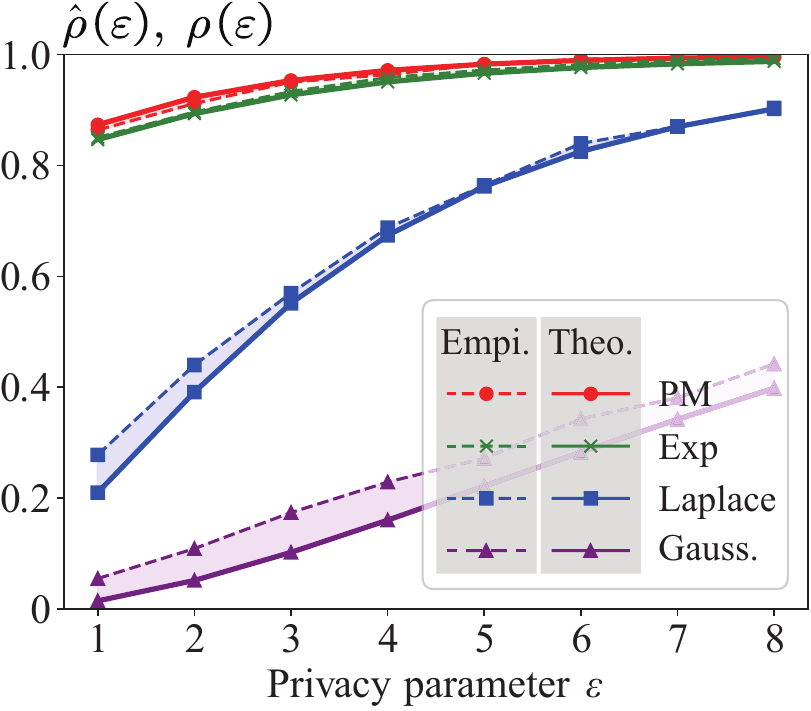}
        \caption{Logistic Regression.}
        \label{fig:exp:bank_attrition_lr}
    \end{subfigure}
    \hfill
    \begin{subfigure}[b]{0.45\linewidth}
        \centering
        \includegraphics[width=0.98\textwidth]{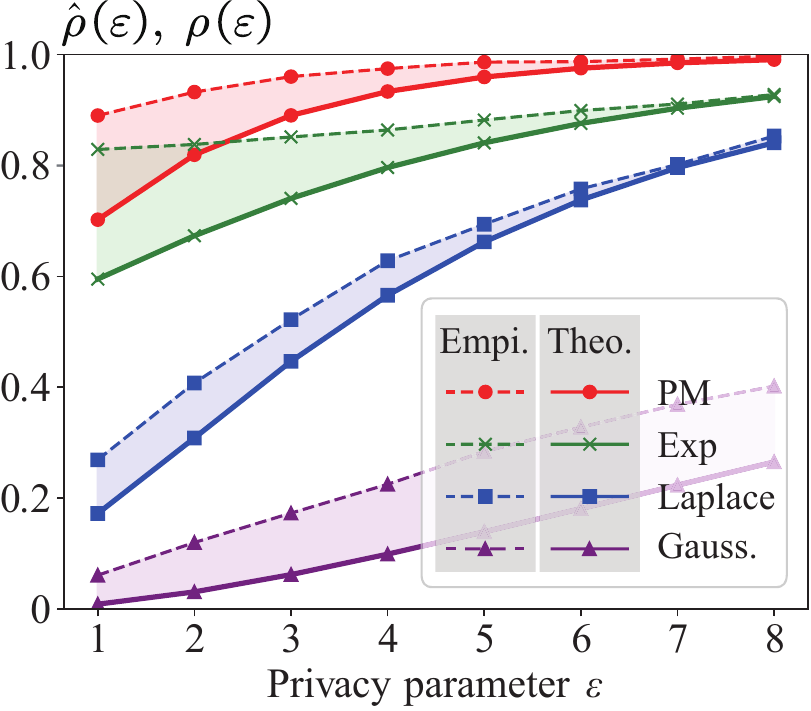}
        \caption{Random Forest.}
        \label{fig:exp:bank_attrition_rf}
    \end{subfigure}
    \caption{Empirical and theoretical utility for
    two classifiers trained on the Bank Customer Attrition dataset.}
    \label{fig:exp:bank_attrition}
\end{figure}

\textbf{Average-case and worst-case analysis.}
We can similarly provide these utility guarantees as done for the Stroke Prediction dataset.
Detailed results are presented in Appendix~\ref{appendix:bank:average_worst_case}.

\subsubsection{MNIST-7$\times$7} \label{subsubsec:mnist}

We evaluate both theoretical and empirical utility under the PM, SW, Exponential, and $k$-RR mechanisms for the first two images in this dataset, labeled as ``digit 5'' and ``digit 0.'' 
Consistent with the setup in previous case studies, we focus on PAC LDP by applying the privacy indicator $\mathcal{I}_\delta$ to these mechanisms, with the failure probability fixed at $\delta = 0.1$. 
The Laplace and Gaussian mechanisms are excluded, as they are prone to perturbing pixel values outside the valid range $[0,1]$ in this high-dimensional dataset, making fair comparisons with other mechanisms difficult.

Figure~\ref{fig:exp:mnist_cnn} compares $\rho(\varepsilon, \theta_\diamond)$ and $\hat{\rho}(\varepsilon)$ 
for $\varepsilon$ values from $1$ to $8$ assigned to each pixel.
Larger gaps between $\rho(\varepsilon, \theta_\diamond)$ and $\hat{\rho}(\varepsilon)$ are observed in this high-dimensional Neural Network classifier, especially for smaller $\varepsilon$ values.
Even for the best mechanism, PM, the theoretical utility only starts to increase significantly when $\varepsilon > 4$, with an exponential growth rate.
Similarly, the theoretical utility of the $k$-RR mechanism shows noticeable improvement when $\varepsilon > 6$.
Despite the larger gap, the trends between theoretical and empirical results remain consistent: mechanisms with higher theoretical utility also exhibit higher empirical utility.

\textbf{Reasons for the results.}
The larger gap between $\rho(\varepsilon, \theta_\diamond)$ and $\hat{\rho}(\varepsilon)$ in the Neural Network classifier compared to traditional classifiers arises from two factors:
(i) The complex decision boundary of the Neural Network, which is difficult to approximate using a single hyperrectangle.
(ii) Error accumulation in high-dimensional data, which is more pronounced than in low-dimensional data.
Specifically, the Neural Network classifier partitions the high-dimensional space into intricate regions (one for each digit class), 
making it challenging to approximate with a single hyperrectangle around a digit.
Additionally, theoretical utility is derived from the product of utility analyses for each pixel, making it more sensitive to error accumulation in high-dimensional data.
In contrast, empirical utility is unaffected by the robustness hyperrectangle $\theta_\diamond$, avoiding such error accumulation.

\textbf{Improvements for high-dimensional classifiers.}
(i) In practice, not all dimensions are sensitive and require privacy protection.
While we treat all dimensions as sensitive in the MNIST-7$\times$7 dataset, 
users can opt to apply LDP mechanisms to only a subset of dimensions, i.e. specific parts of the image.
In such cases, the theoretical utility quantification can be more accurate.
\ (ii) At present, we identify a connected (local) robustness rectangle that contains $x$; 
however, the classifier may admit additional disconnected and non-rectangular robust regions. 
One can use Monte Carlo sampling to discover such regions and then combine this with Monte Carlo integration to obtain a 
more accurate theoretical utility quantification for high-dimensional classifiers.
Appendix~\ref{appendix:mnist:improvement} provides details.

\begin{figure}[t]
    \centering
    \begin{subfigure}[b]{0.45\linewidth}
        \centering
        \includegraphics[width=0.98\textwidth]{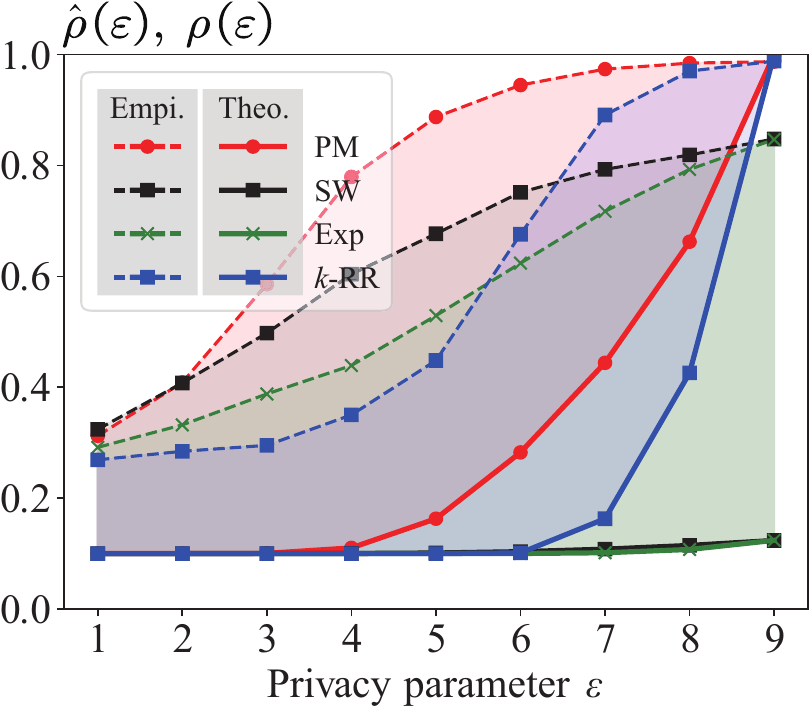}
        \caption{Neural Network on the first image in the dataset.}
        \label{fig:exp:mnist_cnn_0}
    \end{subfigure}
    \hfill
    \begin{subfigure}[b]{0.45\linewidth}
        \centering
        \includegraphics[width=0.98\textwidth]{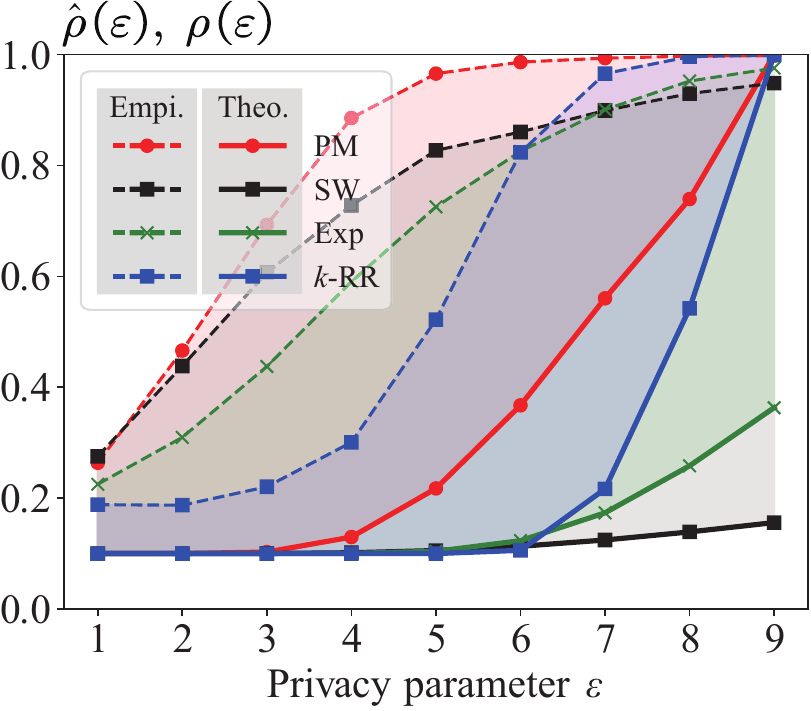}
        \caption{Neural Network on the second image in the dataset.}
        \label{fig:exp:mnist_cnn_1}
    \end{subfigure}
    \caption{Empirical and theoretical utility for a Neural Network classifier trained on the MNIST-7$\times$7 dataset.}
    \label{fig:exp:mnist_cnn}
\end{figure}

\textbf{Average-case and worst-case analysis.}
We use the first 50 images from the MNIST-7$\times$7 dataset as representative samples for analyzing the average-case and worst-case utility under LDP-perturbed inputs.
Appendix~\ref{appendix:mnist:average_worst_case} presents the results of this analysis.

\subsection{Summary of Case Studies}
We presented case studies of theoretical utility statements for different classifiers under inputs perturbed by typical LDP mechanisms and compared them with empirical utility.
The key findings from our case studies are as follows:
\begin{itemize}
    \item Theoretical utility quantification is computationally efficient, 
    requiring negligible time due to its closed-form expression for any $\varepsilon$.
    \item For low-dimensional classifiers, theoretical utility often closely matches empirical utility.
    For high-dimensional classifiers, theoretical utility has a larger gap with empirical utility,
    due to complex decision boundaries and error accumulation. 
    However, mechanisms with higher theoretical utility also exhibit higher empirical utility.
    \item The PM mechanism generally outperforms other mechanisms, providing the best utility across our case studies.
\end{itemize}

\section{Conclusions}
This paper introduces a framework for quantifying classifier utility under LDP-perturbed inputs. 
The framework connects the concentration analysis of LDP mechanisms with the robustness analysis of classifiers. 
It provides guidelines for selecting the best mechanism and privacy parameter for a given classifier. 
Beyond the core framework, we introduce two novel refinement techniques that further improve utility quantification.
Case studies on typical classifiers under LDP-perturbed inputs demonstrate the framework's effectiveness and applicability.

\textbf{Future work:} (i) The robustness is not limited to classifiers; it can be extended to robustness of programs and control systems.
Connecting differential privacy with them is a promising direction.
(ii) This paper analyzes data dimensions independently. 
Extending the utility quantification to multidimensional LDP mechanisms presents another direction for future work.

\section*{Acknowledgments}
We thank the anonymous reviewers and the revision editor for their valuable feedback and guidance, which significantly improved this paper. 
We also acknowledge the use of GPT-5 for language refinement in this paper. 
\vspace*{0.8cm}

\bibliographystyle{ACM-Reference-Format}
\bibliography{reference}
\appendix
\setcounter{footnote}{0}

\section{Related Work}
This paper studies the data utility of classifiers under perturbed inputs.
We first review closely related concepts---semantic perturbations, concentration properties of random variables, and classifier robustness---and clarify how they differ from our setting.
We then discuss broader related work.

\subsection{Closely Related Concepts}


\paragraph{Semantic perturbations.}
A number of the existing literature focuses on empirical evaluations of data utility under semantic perturbations~\cite{DBLP:conf/cvpr/MohapatraWC0D20},
i.e. perturbations that involve semantic meanings.
For image classifiers, such perturbations include brightness adjustments~\cite{bendjillali:hal-03564174}, blur~\cite{DBLP:conf/qomex/DodgeK16}, 
and data augmentation such as rotation and scaling~\cite{DBLP:conf/cvpr/KanbakMF18,DBLP:journals/corr/abs-1712-04621}.
For text classification, examples include synonym replacement and random swap~\cite{DBLP:conf/emnlp/WeiZ19,DBLP:conf/naacl/Kobayashi18}, etc.
For speech recognition, there are acoustic distortions~\cite{DBLP:journals/taslp/LiDGH14}.
These perturbations do not provide formal privacy guarantees, and some of them even preserve the semantic meanings of the original data.

This paper focuses on a specific category of mathematically tractable perturbations, LDP mechanisms.
These perturbations, such as the Laplace or Gaussian noise,
have weaker semantic meanings but provide provable privacy guarantees against adversaries.
Additionally, there are variants of LDP mechanisms tailored for semantic perturbations, such as sentence embedding perturbation~\cite{DBLP:conf/www/DuYCS23}.
The LDP mechanisms examined in this paper are typical building blocks for numerous LDP protocols in practical applications.

\paragraph{Concentration property of random variables.}
The term concentration property mostly relates to the \emph{concentration inequalities} in probability theory, 
e.g. Chernoff bound and Hoeffding inequality (see book~\cite{dubhashi2009concentration}).
These inequalities provide upper bounds on the probability that the sum of independent random variables deviates from its expected value.
From these inequalities, the notion $(\alpha,\beta)$-accuracy~\cite{dubhashi2009concentration,DBLP:conf/eurocrypt/CheuSUZZ19} is defined as:
the error is bounded by $\alpha$ with probability at least $\beta$.
In DP, this notion has been used to analyze the error of the summation query in the shuffle model~\cite{DBLP:conf/eurocrypt/CheuSUZZ19}.
It also appears in accuracy analysis of data exploring languages under DP guarantee, such as Haskell and SQL~\cite{DBLP:conf/sp/VesgaRG20,DBLP:conf/sigmod/0002HIM19}.

In comparison to $(\alpha,\beta)$-accuracy, the theoretical utility quantification $\rho(\varepsilon,\theta)$ includes 
the concentration property of LDP mechanisms and the robustness property of classifiers. 
Among these two properties, the concentration property shares similarities with $(\alpha,\beta)$-accuracy by providing a probabilistic error bound for perturbed data, 
but it specifically focuses on the distribution of an LDP mechanism. 
Meanwhile, this property is used to connect with the robustness of classifiers in this paper.

\paragraph{Robustness of classifiers.}
Robustness refers to a classifier's ability to maintain performance (utility) under input perturbations~\cite{DBLP:conf/nips/FawziMF16}.
This property is crucial as classifiers often encounter data that deviates from their training distribution.
Robustness also serves as a metric for evaluating a classifier's generalization ability~\cite{DBLP:journals/pami/ZhouLQXL23}.
Beyond semantic perturbations, robustness has been extensively studied under 
$\ell_p$-norm perturbations~\cite{DBLP:conf/nips/FawziMF16,DBLP:conf/iclr/MadryMSTV18,DBLP:journals/csur/MachadoSG23}, 
where input data is perturbed within an $\ell_p$-norm ball.
Within this perturbation model, various verification techniques have been developed to analyze classifier robustness.
In addition to approximating the robustness of a classifier as described in Section~\ref{subsec:robustness_classifier},
we can also approximate the classifier using a ``smoothed'' classifier via randomized smoothing~\cite{pmlr-v54-mcmahan17a}
and then analyze the robustness radius of the smoothed classifier.

Unlike these studies, this paper focuses on classifier robustness under LDP perturbations.
These perturbations can be viewed as occurring within \emph{probabilistic} balls defined by the concentration property, 
controlled by the privacy parameter $\varepsilon$.

\subsection{Broader Related Work}

\paragraph{Impact of (L)DP mechanisms on fairness of classifiers.}
Fairness dictates that two individuals who are similar w.r.t. a particular task should be classified similarly.
Under some metrics of similarity, it relates to the concept of DP~\cite{DBLP:conf/innovations/DworkHPRZ12}.
DP mechanisms can impact the fairness of classifiers by changing biases in the input data distribution.
For instance, if certain groups are over- or under-represented in the data provided to a classifier, 
applying DP mechanisms can shift the input distribution and thereby alter the fairness (or bias) of the resulting predictions~\cite{DBLP:conf/csfw/MakhloufSAP24,DBLP:conf/dbsec/ArcoleziMP23}.

\paragraph{DP mechanisms for privacy-preserving classifier training.}
This paper addresses data privacy when (L)DP mechanisms are applied by users before sharing data with classifiers, 
i.e. ensuring privacy during classifier usage. 
An orthogonal line of research examines privacy risks posed by classifiers themselves, 
focusing on data leakage during training. 
Such works aim to counter membership inference attacks~\cite{DBLP:conf/uss/FredriksonLJLPR14,DBLP:conf/sp/ShokriSSS17}, 
which infer training data from model parameters.
DP-SGD~\cite{DBLP:conf/ccs/AbadiCGMMT016} is a well-known defense, applying Gaussian noise to gradient updates during training.

Federated learning~\cite{pmlr-v54-mcmahan17a} is a stricter privacy setting, 
where sensitive training data is distributed across multiple users and cannot be shared with a central server.
In this context, DP-Fed-SGD~\cite{brendan2018learning} perturbs local gradients before aggregation to the central server.
For a comprehensive overview, see a recent survey~\cite{DBLP:journals/corr/abs-2405-08299}.

While these approaches focus on protecting training data privacy, 
this paper examines the utility of trained classifiers under input perturbations caused by LDP mechanisms.
In the former case, the classifier is trusted and used for inference. 
In the latter, the classifier is untrusted, and users perturb their input data before sharing it.

\paragraph{Utility of LDP mechanisms.}

Data utility is the most crucial metric for LDP mechanisms. 
For simple queries such as mean, summation, and histogram~\cite{DBLP:conf/uss/WangBLJ17,DBLP:conf/icde/WangXYZHSS019,DBLP:conf/sp/WangLJ18}, 
data utility is essentially determined by the theoretical mean squared error (MSE) of the mechanism.
While empirical evaluations can be conducted by sampling from the mechanism, 
the empirical utility will converge to the theoretical MSE when the sample size is large enough.
Meanwhile, MSE is also the most common metric in mechanism design and analysis.

However, a mechanism with lower MSE does not always guarantee better utility for all applications, particularly classifiers.
This paper has shown that classifier utility depends on both the concentration property of the mechanism and the classifier's robustness.
We provide a framework to quantify classifier utility under LDP-perturbed data, 
offering an alternative to empirical evaluations.

\section{Proofs} 

\subsection{Proof of Theorem~\ref{thm:privacy_indicator}} \label{appendix:privacy_indicator}

\begin{proof}
    From the definition of $\mathcal{I}_\delta(\mechanism(x))$, we have
    \begin{equation*}
        \Pr[\mathcal{{I}}_\delta(\mechanism(x)) = \mechanism(x)] = 1 - \delta.
    \end{equation*}
    Since $\mechanism(x)$ satisfies $\varepsilon$-LDP, it means the privacy loss
    $\mathcal{L}_{\mechanism,x_1, x_2}(\tilde{x}) \leq \varepsilon$ always holds.
    Thus, with probability at least $1 - \delta$, 
    the $\loss_{\mechanism, x_1, x_2}(\tilde{x})$ is bounded by $\varepsilon$, i.e.
    \begin{equation*}
        \Pr[\mathcal{L}_{\mechanism,x_1, x_2}(\tilde{x}) \leq \varepsilon] \geq 1 - \delta,
    \end{equation*}
    which proves that $\mathcal{I}_\delta(\mechanism(x))$ satisfies $(\varepsilon, \delta)$-PAC LDP.
\end{proof}

\subsection{Proof of Theorem~\ref{thm:gaussian_mechanism} and Discussion} \label{appendix:gaussian_mechanism}

\subsubsection{Our Gaussian Mechanism}

The proof generally follows the structure of Dwork's proof of the Gaussian mechanism on page 261 in~\cite{DBLP:journals/fttcs/DworkR14}. 
Their proof assumes $\varepsilon \leq 1$ for the soundness of the $\ln(\varepsilon, \Delta f)$ function. 
We eliminate this limitation by refining the proof technique and providing a new noise bound.

\begin{proof}
    For the Gaussian distribution and an original data $x_1 \in [0,1]$, the probability of seeing $x_1 + \tilde{x}$ with $\tilde{x} \sim \mathcal{N}(0, \sigma^2)$ 
    is given by the probability density function at $\tilde{x}$, i.e. $pdf[\tilde{x}]$.
    The probabilities of observing \emph{the same perturbed value} when the original input is $x_2 \in [0,1]$
    form the set $\{\,pdf[\tilde{x}+\Delta] : \Delta \in [-1,1]\,\}$.
    Thus, the privacy loss is
    \begin{equation*}
        \begin{split}
            \mathcal{L}_{\mechanism,x_1, x_2}(\tilde{x}) &= \ln\left(\frac{pdf[\tilde{x}]}{pdf[\tilde{x}+\Delta]}\right) = \ln\frac{\exp(-\frac{1}{2\sigma^2}\cdot \tilde{x}^2)}{\exp(-\frac{1}{2\sigma^2}\cdot (\tilde{x}+\Delta)^2)} \\
            &= \frac{1}{2\sigma^2}\cdot \left(2\tilde{x}\Delta + \Delta^2\right) = \frac{\Delta}{\sigma^2}\cdot \tilde{x} + \frac{\Delta^2}{2\sigma^2}.
        \end{split}
    \end{equation*}
    Since $\tilde{x} \sim \mathcal{N}(0, \sigma^2)$, the privacy loss is distributed as a Gaussian random variable
    with mean $\Delta^2 / (2\sigma^2)$ and variance $(\Delta/\sigma)^2 \cdot \sigma^2 = \Delta^2 / \sigma^2$.
    We can see that the Gaussian mechanism can not satisfy pure $\varepsilon$-LDP for any $\varepsilon$ because $\tilde{x} \in (-\infty, \infty)$,
    meaning the privacy loss is unbounded.

    Nonetheless, the privacy loss random variable can be bounded by $\varepsilon$ with a certain probability $1 - \delta$.
    Specifically, letting $Z \sim \mathcal{N}(0,1)$, the privacy loss can be rewritten as a random variable:
    \begin{equation*}
        \loss_{\mechanism,x_1, x_2}(\tilde{x}) \sim \frac{\Delta}{\sigma} \cdot Z + \frac{\Delta^2}{2\sigma^2}.
    \end{equation*}
    In the remainder of the proof, we will find a value of $\sigma$ that ensures $\loss_{\mechanism,x_1, x_2}(\tilde{x}) \leq \varepsilon$ with probability at least $1 - \delta$.

    \textbf{Step 1: Gaussian tail bound.}
    When $\loss_{\mechanism,x_1, x_2}(\tilde{x}) \geq \varepsilon$, we have
    \begin{equation*}
        \frac{\Delta}{\sigma} \cdot Z + \frac{\Delta^2}{2\sigma^2} \geq \varepsilon, \text{which means } |Z| \geq \frac{\varepsilon \sigma - \frac{\Delta^2}{2\sigma}}{\Delta}.
    \end{equation*}
    Since $\Delta \in [-1,1]$, we can set $\Delta = 1$ to minimize the right-hand side of the above inequality,
    yielding the worst-case $|Z|$, which gives $|Z| \geq \varepsilon \sigma - 1 / (2\sigma)$.
    We need to bound the probability of this event by $\delta$:
    \begin{equation*}
        \Pr\left[|Z| \geq \varepsilon \sigma - \frac{1}{2\sigma}\right] \leq \delta, \text{ implying } \Pr\left[Z \geq \varepsilon \sigma - \frac{1}{2\sigma}\right] \leq \frac{\delta}{2}.
    \end{equation*}
    By the Gaussian tail bound (Chernoff bound):
    \begin{equation*}
        \Pr[Z \geq t] \leq \exp\left(\frac{-t^2}{2}\right),
    \end{equation*}
    we have
    \begin{equation*}
        \exp\left(\frac{-\left(\varepsilon \sigma - \frac{1}{2\sigma}\right)^2}{2}\right) \leq \frac{\delta}{2}.
    \end{equation*}

    \textbf{Step 2: Solving \bm{$\sigma$}.}
    Define $C = \sqrt{-2\ln(\delta / 2)}$. We can rewrite the above inequality as
    \begin{equation*}
        \left(\varepsilon \sigma - \frac{1}{2\sigma}\right)^2 \geq C^2.
    \end{equation*}
    Taking the square root of both sides and rearranging gives us two cases:
    \begin{equation*}
        \varepsilon \sigma - \frac{1}{2\sigma} \geq C \quad \text{or} \quad \varepsilon \sigma - \frac{1}{2\sigma} \leq -C.
    \end{equation*}
    Therefore, the best choice of $\sigma$ is the one that satisfies:
    \begin{equation*}
        \varepsilon \sigma - \frac{1}{2\sigma} = \pm C.
    \end{equation*}
    Solving this quadratic equation gives us two solutions for $\sigma$:
    \begin{equation*}
        \sigma = \frac{\pm C \pm \sqrt{C^2 + 2\varepsilon}}{2\varepsilon}.
    \end{equation*}
    Since $\sigma$ must be positive, we take the positive root, which gives us the final solution:
    \begin{equation*}
        \sigma = \frac{\sqrt{-2\ln(\delta / 2)} +  \sqrt{-2\ln(\delta / 2) + 2\varepsilon}}{2\varepsilon},
    \end{equation*}
    which is equivalent to the result in Theorem~\ref{thm:gaussian_mechanism}.
\end{proof}

\subsubsection{Comparison with Dwork's Proof.}

There are two main concerns with Dwork's proof of the Gaussian mechanism:
\begin{itemize}
    \item The proof is hard to interpret within the $(\varepsilon, \delta)$-DP notion;
    \item The proof is based on a problematic Gaussian tail bound.
\end{itemize}
Specifically, 
(i) the proof is based on the assumption that the privacy loss can be probabilistically bounded by $1 - \delta$,
which is consistent with the PAC privacy notion but hard to interpret in the original $(\varepsilon, \delta)$ privacy notion.
The $(\varepsilon, \delta)$ privacy notion intuitively
says that the distance between two distributions is \emph{always} bounded by the given $(\varepsilon, \delta)$ pair, 
i.e. strictly cannot be unbounded.
From this notion, it is hard to interpret the probabilistic bound of $1 - \delta$ in Dwork's proof.
(ii) The proof is based on a tail bound (p262 in~\cite{DBLP:journals/fttcs/DworkR14}):
\begin{equation*}
    \Pr[Z \geq t] \leq \frac{\sigma}{\sqrt{2\pi}} \cdot \exp\left(\frac{-t^2}{2\sigma^2}\right),
\end{equation*}
where $Z \sim \mathcal{N}(0,\sigma^2)$. This bound is problematic.
As a counterexample, if $t = 0$ and $\sigma = 1$, the bound gives $\Pr[Z \geq 0] \leq 1/\sqrt{2\pi} \approx 0.399$,
which is incorrect since $\Pr[Z \geq 0] = 0.5$ for arbitrary Gaussian distributions with mean $0$.

\subsubsection{Comparison with the Analytic Gaussian.} \label{appendix:analytic_gaussian}
Another Gaussian mechanism is the analytic Gaussian mechanism~\cite{DBLP:conf/icml/BalleW18}.
The name originates from the analytic form of the Gaussian distribution in its result.
We don't adopt this mechanism for two reasons:
(i) It lacks analytical form of noise scale $\sigma$ due to the complexity of the given formula to satisfy their privacy definition.
i.e. Theorem 8 in~\cite{DBLP:conf/icml/BalleW18}.
As stated in their paper: ``we propose to find $\sigma$ using a numerical algorithm... (p4, line 7)'', 
which complicates analytical utility quantification.
(ii) It is based on an alternative privacy definition: 
\begin{equation*}
    \Pr\left[\loss_{\mechanism,x_1, x_2} \geq \varepsilon\right] - \exp(\varepsilon) \cdot \Pr\left[\loss_{\mechanism,x_1, x_2} \leq -\varepsilon\right] \leq \delta.
\end{equation*}
While they proved this definition satisfies $(\epsilon,\delta)$-DP, it is hard to interpret,
because it bounds the \emph{difference} between the probability of the privacy loss being too large ($\loss\geq\epsilon$) and too small ($\loss\leq\epsilon$), 
different as Dwork's bounding $\Pr[\loss\geq\epsilon]\leq\delta$, which is more direct and interpretable.

\subsection{Proof of Theorem~\ref{thm:pac_composition}} \label{appendix:pac_composition}

\begin{proof}
    We calculate the total failure probability of $d$ mechanisms to show that it is $1-(1-\delta)^d$.
    
    The combination of $d$ independent $(\varepsilon, \delta)$-PAC LDP mechanisms satisfies pure $\varepsilon$-LDP 
    only if all $d$ mechanisms satisfy pure $\varepsilon$-LDP. 
    Therefore, the probability of no failure is at least $(1-\delta)^d$ by independence.
    Thus, the total failure probability of $d$ independent $(\varepsilon, \delta)$-PAC LDP mechanisms is
    at most $1 - (1-\delta)^d$.

    The combination result for $\varepsilon$ is straightforward, and it is the same as the combination of pure $\varepsilon$-LDP mechanisms.
\end{proof}

\emph{Remark.} Compare with the original combination theorem for $d$ mechanisms, i.e. $(d\varepsilon, d\delta)$-LDP~\cite{DBLP:journals/fttcs/DworkR14},
our result is guaranteed to be more precise,
as $1-(1-\delta)^d \leq d\delta $ for any $\delta \in (0,1)$ and $d \geq 1$.

\section{Complementary Materials} 

Notations used in this paper are summarized in Table~\ref{tab:qcu_notations}.

\begin{table}[t]
    \begin{center}
        \caption{Notations used in this paper.}\label{tab:qcu_notations}
            \begin{tabularx}{\linewidth}{Y p{0.63\linewidth}}
                \toprule
                \textbf{Notation} & \textbf{Meaning} \\
                \midrule
                $\mechanism(x)$ & LDP mechanism applied to input $x$ \\
                $\mathcal{X}$ & Input domain of an LDP mechanism \\
                $\tilde{x} = \mechanism(x)$ & Perturbed version of $x$ (sent to $h$) \\
                $h$ & Classifier function \\
                $B_\theta(x)$ & $\theta$-ball centered at $x$ \\
                $\rho(\varepsilon, \theta)$ & Utility quantification of a classifier \\
                $F_\mechanism(\cdot)$ & CDF of LDP mechanism $\mechanism$ \\
                $\omega$ & $1 - \omega$ indicates the confidence level \\
                $\tau$ & Robustness tolerance \\
                \bottomrule
            \end{tabularx}
    \end{center}
\end{table}

\subsection{Deterministic Robustness of White-box Classifiers (Section~\ref{subsec:robustness_classifier})} \label{appendix:robustness_radius_white_box}

When the classifier is a (public) white-box model, we can directly analyze its exact robustness radius from the model parameters.
According to their representativeness, classifiers can be categorized into two types: closed-form classifiers and non-closed-form classifiers.

\subsubsection{Closed-form Classifiers} \label{appendix:robustness_closed_form}
A closed-form classifier has a mathematical expression that can be directly analyzed.
Examples include the naive Bayes classifier~\cite{naive_bayes}, QDA~\cite{DBLP:journals/corr/abs-1906-02590,lda_qda}, 
certain variants of SVM~\cite{DBLP:books/lib/HastieTF09}, and clustering models like $k$-means~\cite{wiki_k-means}
and Gaussian mixture models~\cite{GMM}.
Closed-form expression means excellent interpretability, including robustness analysis.
For these classifiers, decision boundaries can be analytically derived from the model parameters,
allowing for analytical expression of robustness radius $\theta$ for any input variable $x_0$.

Formally, if the decision boundary of $h$ is a closed-form curve (or surface) $C(x) = 0$,
then $\theta$ at an input variable $x_0$ is the distance from $x_0$ to $C(x)$.
Its analytical expression w.r.t variable $x_0$ can be calculated by minimizing the distance between $x_0$ and $C(x)$.

\begin{example}
    Consider a 2D QDA classifier $h_{1,2}(x):(x_u, x_v) \to \{1,2\}$,
    where the discriminant function for the first class is $h_1(x) = -0.5(x_u^2 + x_v^2) + \ln(0.5)$,
    and for the second class is $h_2(x) = -0.5(0.5(x_u -1)^2 + 0.5(x_v-1)^2) - 0.5\ln(4) + \ln(0.5)$.
    The decision boundary $h_1(x)-h_2(x)=0$ is a circle with radius $2\sqrt{\ln2+1}$ centered at $(-1,-1)$.
    Thus, the $\ell_2$ robustness radius at $x_0 = (x_u, x_v)$ is $\theta = |2\sqrt{\ln2 + 1} - \sqrt{(x_u+1)^2 + (x_v+1)^2}|$.\footnote{
        A trivial lower bound of the $\ell_{\infty}$ robustness radius can be obtained by using 
        $\theta / \sqrt{2}$, based on the norm inequality.
    }
\end{example}

\subsubsection{Non-closed-form Classifiers} \label{appendix:robustness_non_closed_form}

Non-closed-form classifiers often lack explicit mathematical expressions for their decision boundaries.
Examples include neural network~\cite{DBLP:conf/iclr/WengZCYSGHD18} and decision tree~\cite{DBLP:conf/nips/ChenZS0BH19}.
Their decision boundaries are analytically intractable, making it challenging to derive the robustness radius $\theta$ directly.
Finding $\theta$ for these classifiers is known as the robustness verification problem~\cite{DBLP:conf/cav/KatzBDJK17,DBLP:journals/corr/abs-2206-12227},
i.e. verify whether $h(B_\theta(x)) = h(x)$ for a given $\theta$ at $x$.
The main insight is to encode classifiers as a set of constraints,
transforming the problem of verifying $h(B_\theta(x)) = h(x)$ into a satisfiability problem
that can be solved by constraints solvers.

Formally, given a concrete sample $x_0$ and $\theta$, we have $B_\theta(x_0) = \{x: ||x-x_0||_\infty \leq \theta\}$
as a linear constraint on $x$.
Denote $h_{i}(x)$ as the score function of the $i$-th class of $h(x)$, and let $c$ be the correct class of $x_0$.
By encoding $B_\theta(x_0)$ and the classifier $h(x)$ into a satisfiability problem
\begin{equation*}
    x \in B_\theta(x_0) \ \land \ h_c(x) < h_i(x) \text{ for } i\neq c,
\end{equation*}
we can verify the problem using a constraint solver.
If it is unsatisfiable, then $h(x)$ outputs the correct class within $B_\theta(x_0)$.
The maximum $\theta$ that makes the problem unsatisfiable is the robustness radius $\theta$.

\begin{example}
    (To replicate the QDA classifier from the previous example, which is also the same classifier depicted in Figure~\ref{fig:decision_boundary}.)   
    Consider a $2$D neural network classifier $h_{1,2}(x)$ $:[0,1]^2 \to \{1,2\}$ with structure $a_2(\mathrm{ReLU}(a_1(x)))$,
    where $\mathrm{ReLU} = \max(0, x)$ and $a_1, a_2$ are affine functions of the hidden layer and output layer that
    \begin{equation*}
        \begin{split}
            a_1(x) &= \begin{bmatrix}
                \begin{array}{rr}
                    -1.86 & -2.09 \\ 0.12 & -0.46
                \end{array}
            \end{bmatrix}x + \begin{bmatrix}
                \begin{array}{r}
                    3.71 \\ -0.08
                \end{array}
            \end{bmatrix}, \\
            a_2(x) &= \begin{bmatrix}
                \begin{array}{rr}
                    -3.05 & 0.40 \\ 4.02 & -0.22
                \end{array}
            \end{bmatrix}x + \begin{bmatrix}
                \begin{array}{r}
                    0.94 \\ -0.58
                \end{array}
            \end{bmatrix}.
        \end{split}
    \end{equation*}
    Denote $h_1(x)$ and $h_2(x)$ as the 1st and 2nd dimension of $h_{1,2}(x)$.
    Given $x_0 = (0.5, 0.5)$ and $\theta = 0.1$, we have $B_{0.1}(x_0) = ||x-x_0||_\infty \leq 0.1$ as a linear constraint on $x$.    
    If the correct class is $h_{1,2}(x_0) = 1$, the robustness verification problem is
    \begin{equation*}
        x \in B_{0.1}(x_0) \ \land \ h_1(x) < h_2(x).
    \end{equation*}
    If this problem is verified as unsatisfiable, it means $h_{1,2}(x)$ always outputs class $1$ within $B_{0.1}(x_0)$. 
    Finally, finding $\argmax_\theta h(B_\theta(x)) = h(x)$ is a sequence of decision problems on $\theta$.
\end{example}

\subsection{Explanation of Definition~\ref{def:probabilistic_robustness}} \label{appendix:definition_prob_robustness}

There are two randomness sources in Definition~\ref{def:probabilistic_robustness}:
(i) Denote the deterministic robustness $h(\tilde{x}) = h(x)$ an indicator $\phi(\tilde{x})\in\{0,1\}$. 
We can say a system is robust under random $\tilde{x}$ with $\Pr[\phi(\tilde{x})=1]>1-\omega$. 
In this setting, the robustness boundary is assumed to be known, and the randomness comes from $\tilde{x}$. 
(ii) When the robustness boundary is unknown, $\phi(\tilde{x})$ must be estimated and thus takes values in $[0,1]$ instead of $\{0,1\}$.
In this case, the condition $\phi(\tilde{x})>1-\tau$ becomes a random event, where the randomness comes from the estimation of $\phi$. 
This leads to two-level probabilistic guarantee $\Pr[1 - \Pr[\phi(\tilde{x}) \le \tau]] \ge 1-\omega$. 
(Such interpretation also is lacked in existing definitions~\cite{robust_analysis,DBLP:conf/pkdd/ZhangRF22}.)

\subsection{LDP Mechanisms in Figure~\ref{fig:concentration_analysis}} \label{appendix:mechanisms}

We provide concrete instantiations of the LDP mechanisms in Figure~\ref{fig:concentration_analysis}
and discuss their $\rho(\varepsilon, \theta)$ curves.

\subsubsection{Piecewise-based Mechanism}

The original PM mechanism~\cite{DBLP:conf/icde/WangXYZHSS019} is defined on $[-1,1] \to [-C_\varepsilon, C_\varepsilon]$,
where $C_\varepsilon$ is a variable dependent on $\varepsilon$.
Such design ensures unbiasedness, as it was originally designed for mean estimation.
The original SW mechanism~\cite{DBLP:conf/sigmod/Li0LLS20} is defined on $[0,1] \to [-b_\varepsilon, 1+b_\varepsilon]$.
Unlike PM, SW is biased and designed for distribution estimation.

Although defined on different domains, both PM and SW mechanisms can be ``normalized'' to the same domain $[0,1] \to [0,1]$
with the same privacy parameter $\varepsilon$.
We give their formulation as the following two definitions.

\begin{definition}[The PM mechanism] \label{def:pm_mechanism}
    The PM mechanism takes an input $x \in [0,1]$ and outputs a random variable $\tilde{x} \in [0,1]$ as follows:
    \begin{equation*}
        pdf[\mathcal{M}(x) = \tilde{x}] = 
        \begin{dcases}
            p_{\varepsilon} & \text{if} \ \tilde{x} \in [l_{x,\varepsilon}, r_{x,\varepsilon}], \cr
            p_{\varepsilon} / \exp{(\varepsilon)} & \tilde{x} \in [0,1] \, \backslash \, [l_{x,\varepsilon}, r_{x,\varepsilon}],
        \end{dcases} 
    \end{equation*}
    where $p_{\varepsilon} = \exp(\varepsilon / 2)$,
    \begin{equation*}
        [l_{x,\varepsilon},r_{x,\varepsilon}] = 
        \begin{dcases}
            [0, 2C) &\text{if} \ x \in [0, C), \cr
            x + [-C, C] &\text{if } x \in [C, 1-C], \cr
            (1-2C, 1] &\text{otherwise}, \cr  
        \end{dcases}
    \end{equation*}
    with $C=(\exp(\varepsilon / 2) - 1) / (2\exp(\varepsilon) - 2)$.
\end{definition}

The SW mechanism is similar but uses different instantiations for $p_{\varepsilon}$ and $[l_{x,\varepsilon}, r_{x,\varepsilon})$.

\begin{definition}[The SW mechanism]
    The SW mechanism takes an input $x \in [0,1]$ and outputs a random variable $\tilde{x} \in [0,1]$ as the same formulation as the PM mechanism,
    but with
    \begin{equation*}
        p_{\varepsilon} = \frac{\exp(\varepsilon) - 1}{\varepsilon}, \quad C = \frac{\exp(\varepsilon)(\varepsilon - 1) + 1}{2(\exp(\varepsilon) - 1)^2}.
    \end{equation*}
\end{definition}

Both mechanisms guarantee $\varepsilon$-LDP through their piecewise design,
as the probability of seeing the same value $\tilde{x}$ when the original data is $x_1$ and $x_2$ is bounded by $\exp(\varepsilon)$.

\textbf{\bm{$\rho(\varepsilon,\theta)$} curves.}
The concentration analysis $F(x + \theta) - F(x - \theta)$ of the PM and SW mechanisms results in
a linear curve with two segments.
When $\theta$ is small, the concentration is determined by the $[l_{x,\varepsilon}, r_{x,\varepsilon})$ segment,
which gives a slope proportional to $p_{\varepsilon}$.
When $\theta$ is large, the concentration is also determined by the $[0,1) \, \backslash \, [l_{x,\varepsilon}, r_{x,\varepsilon})$ segment,
which gives a slope proportional to $p_{\varepsilon} / \exp{(\varepsilon)}$.
Thus, the $\rho(\varepsilon, \theta)$ curves for the PM and SW mechanisms exhibit a linear pattern with two distinct slopes.

\subsubsection{Discrete Mechanism}
By discretizing the $[0,1]$ domain into bins, LDP mechanisms defined on discrete domains can also be applied.
In Figure~\ref{fig:concentration_analysis}, we fixed the number of bins to $100$, i.e. the domain is $\mathcal{X}\coloneq \{0, 0.01, 0.02, \ldots, 1\}$.
The according Exponential mechanism and $k$-RR mechanism is defined as follows.

\begin{definition}[The Exponential mechanism]
    Given score function $d(x,\tilde{x})$,
    the Exponential mechanism defined on $\mathcal{X}$ takes an input $x\in \mathcal{X}$ and outputs a random variable $\tilde{x}\in \mathcal{X}$ as follows:
    \begin{equation*}
        \Pr[\mechanism_{\exp}(x)=\tilde{x}] = \frac{\exp \left (\frac{\varepsilon d(x,\tilde{x})}{2\Delta d}\right )}{\sum_{y' \in \mathcal{Y}} \exp \left (\frac{\varepsilon d(x,\tilde{x}')}{2\Delta d}\right )},
    \end{equation*}
     where $\Delta d = \max_{x,\tilde{x},\tilde{x}' \in \mathcal{X}} |d(x,\tilde{x}) - d(x,\tilde{x}')|$ is the sensitivity of the score function $d$.
\end{definition}

We choose the score function as the negative absolute distance, i.e. $d(x,\tilde{x}) \coloneq -|x-\tilde{x}|$, so that outputs closer to $x$ are assigned higher probabilities.
Accordingly, the sensitivity is $\Delta d = 1$.

\begin{definition}[The $k$-RR mechanism]
    $k$-RR is a sampling mechanism $\mechanism:\mathcal{X}\to \mathcal{X}$ that, 
    given input $x\in\mathcal{X}$, outputs $\tilde{x}\in\mathcal{X}$ according to
    \begin{equation*}
        \Pr[\mechanism(x)=\tilde{x}] = 
        \begin{dcases}
            \frac{\exp(\varepsilon)}{|\mathcal{X}|-1+\exp(\varepsilon)} & \text{if} \ \tilde{x} = x, \\ 
            \frac{1}{|\mathcal{X}|-1+\exp(\varepsilon)} & \text{otherwise}.
        \end{dcases}
    \end{equation*}
\end{definition}

\textbf{\bm{$\rho(\varepsilon,\theta)$} curves.}
The concentration analysis $F(x + \theta) - F(x - \theta)$ 
of the Exponential mechanism results in a smooth curve,
as the probability of seeing value $\tilde{x}$ further away from $x$ decreases smoothly w.r.t $\theta$.
In contrast, the concentration analysis of the $k$-RR mechanism exhibits a pattern similar to the PM and SW mechanisms.
When $\theta$ is small, the probability is determined by the bin containing $x$, 
results in a slope $\propto \exp(\varepsilon) / (|\mathcal{X} - 1 + \exp(\varepsilon)|)$.
When $\theta$ becomes larger, the probability is also determined by the other bins, which gives a slope $\propto 1 / (|\mathcal{X} - 1 + \exp(\varepsilon)|)$.

\subsection{PAC LDP vs $(\varepsilon,\delta)$-LDP (Section~\ref{subsec:pac_privacy})} \label{appendix:pac_ldp}

This section discusses the difference between PAC LDP, $(\varepsilon,\delta)$-LDP,
and other related privacy notions.

\begin{definition}[$(\varepsilon,\delta)$-LDP~\cite{DBLP:journals/fttcs/DworkR14}]
    A randomized mechanism $\mechanism$ satisfies $(\varepsilon,\delta)$-LDP if for all $x_1, x_2 \in \mathcal{X}$ and all $S \subseteq \mathcal{Y}$,
    \begin{equation*}
        \Pr[\mechanism(x_1) \in S] \leq \exp(\varepsilon) \cdot \Pr[\mechanism(x_2) \in S] + \delta.
    \end{equation*}    
\end{definition}

This definition says that the distance between the distributions of $\mechanism(x_1)$ and $\mechanism(x_2)$ is bounded by a $(\varepsilon,\delta)$ pair.
More specifically, it can be rewritten as:
\begin{equation*}
    \frac{\Pr[\mechanism(x_1) \in S]}{\Pr[\mechanism(x_2) \in S] + \delta / \exp(\varepsilon)} \leq \exp(\varepsilon).
\end{equation*}
Given concrete values of pair $(\varepsilon, \delta)$,\footnote{
    In fact, $\delta$ can be defined in a more intricate manner, such as incorporating the noise scale $\sigma$ as in~\cite{DBLP:conf/nips/Canonne0S20}. 
    However, this approach introduces a recursive dependency thus uncontrollable in practical applications, as $\sigma$ itself should be defined in terms of $\delta$.
} the term $\delta / \exp(\varepsilon)$ is a constant.
Thus, it means that the distance between the distributions of $\mechanism(x_1)$ and $\mechanism(x_2)$ must be strictly bounded,
i.e. similar to pure $\varepsilon$-LDP.
\ However, this definition is often interpreted as: it is $\varepsilon$-LDP with a failure probability of $\delta$.
Meanwhile, the Gaussian mechanism relies on this interpretation.

Other relaxed privacy notions, including the Concentrated privacy from Dwork~\cite{DBLP:journals/corr/DuchiWJ16}, R\'enyi privacy~\cite{DBLP:conf/csfw/Mironov17}, 
and PAC privacy~\cite{DBLP:conf/crypto/XiaoD23}, provide patches to $(\varepsilon,\delta)$-LDP.
The Gaussian mechanism has natural results under the Concentrated privacy and R\'enyi privacy notions,
while we provide a PAC privacy result in Appendix~\ref{appendix:gaussian_mechanism}.
\ Among these privacy notions, $(\varepsilon,\delta)$-PAC LDP provides the same meaning as the common interpretation of $(\varepsilon,\delta)$-LDP:
it is $\varepsilon$-LDP with a failure probability of $\delta$.

\subsection{Other Discussions (Section~\ref{sec:discussion})} \label{appendix:other_discussions}

\subsubsection{Complexity of Finding the Robustness Radius}

The procedure for finding a probabilistic robustness radius is independent of the classifier's structure and parameters. 
Its complexity arises from the iterations required to determine the robustness radius and the testing of $n$ perturbed data points in each iteration.

Using binary search to find the radius results in a complexity of $\Theta(\log(1/\kappa))$, 
where $\kappa$ is the precision of the radius. 
For instance, if $\kappa = 0.01$ (two decimal places), the complexity is $\Theta(\log 100)\approx \Theta(7)$.
\ Meanwhile, classifiers are often implemented with parallel inference on GPUs, 
which allows for $\Theta(1) \times T$ complexity for $n$ data points, 
where $T$ is the inference time for a single data point, provided $n$ is not excessively large. 
Consequently, the total complexity for finding a two-decimal precision robustness radius is $\Theta(7T)$. 
In practice, $T$ is typically in the order of milliseconds, making the complexity of finding the robustness radius sufficiently low.
Moreover, the robustness radius is a one-time computation for a given classifier,
which can be used for all future utility quantification.

\subsubsection{Per-instance Data Utility vs Aggregated Data Utility}
Our utility quantification framework focuses on classifiers and per-instance data utility, 
providing utility guarantees at the level of individual data points.   
This stands in contrast to aggregated data utility analyses such as mean estimation, 
which evaluate the accuracy of aggregated statistics computed from the perturbed data of many users.
Per-instance data utility is crucial for several reasons:
(i) Classifiers and other personalized services like recommendation systems
operate on individual data points.
Their performance depends directly on per-instance utility and has no direct connection to aggregated utility.
(ii) Per-instance data utility focuses on mechanism-level utility analysis,
characterizing the privacy-utility tradeoff curves without relying on sample size or aggregation effects.
These two reasons relate to our focus on per-instance data utility in this paper.

\subsubsection{Fixed Classifier with Noisy Input vs Retrained Classifier with Noisy Data}
These are two different scenarios. 
(i) In the former scenario, users locally apply an LDP mechanism to their data for privacy protection, 
agnostic to the service provider (i.e. the classifier), and are affected by the resulting utility of the existing classifier. 
(ii) In the latter scenario, the service provider applies LDP to users' data and retrains a new classifier tailored to the perturbed data, 
which relates to ``robust training''.

We focus on the former user-privacy-centered scenario; in the latter scenario, users expose their raw data to the service provider. 
Moreover, from users' perspective, the classifier is fixed at the time of use (i.e. when they query the service). 
If a retrained classifier is deployed, then the classifier utility is for the retrained classifier (from users' perspective). 

\subsubsection{Robustness of LDP Mechanisms}
Intuitively, LDP mechanisms with higher concentration around raw data are more prone to reconstruction (with sufficient reports) 
and poisoning attacks, as malicious data are less averaged. 
In our concentration analysis, PM, SW, and $k$-RR (especially with small $k$) are more concentrated, 
thus expected to be more vulnerable. This aligns with prior findings: 
PM and $k$-RR show more vulnerability to poisoning attacks~\cite{DBLP:conf/uss/LiLSG023,DBLP:conf/ndss/LiLLS25}. 

\subsubsection{Independent LDP Mechanisms vs Correlated LDP Mechanisms}
We adopt independent perturbations for cleanness of presentation. 
In practice, DP libraries (e.g. Google DP~\cite{googledifferential-privacy_2026}) typically provide independent (L)DP primitives and do not explicitly support correlated perturbations. 
The essential reason is that the underlying data correlations are often unknown. 

When the correlation coefficients are known or estimable, 
independent LDP can be extended via (i) post-processing of perturbed data to enforce a given correlation structure, 
or (ii) adopting correlated-perturbation mechanisms (e.g.~\cite{DBLP:journals/corr/abs-2507-17516,DBLP:conf/satml/VithanaCCJ25}; although they are primarily tailored to mean estimation). 
Our utility quantification framework also applies to such correlated mechanisms
by replacing the CDF of independent mechanisms with the conditional CDF of correlated mechanisms.

\begin{figure}[t]
    \centering
    \begin{subfigure}[b]{0.45\linewidth}
        \centering
        \includegraphics[width=0.98\textwidth]{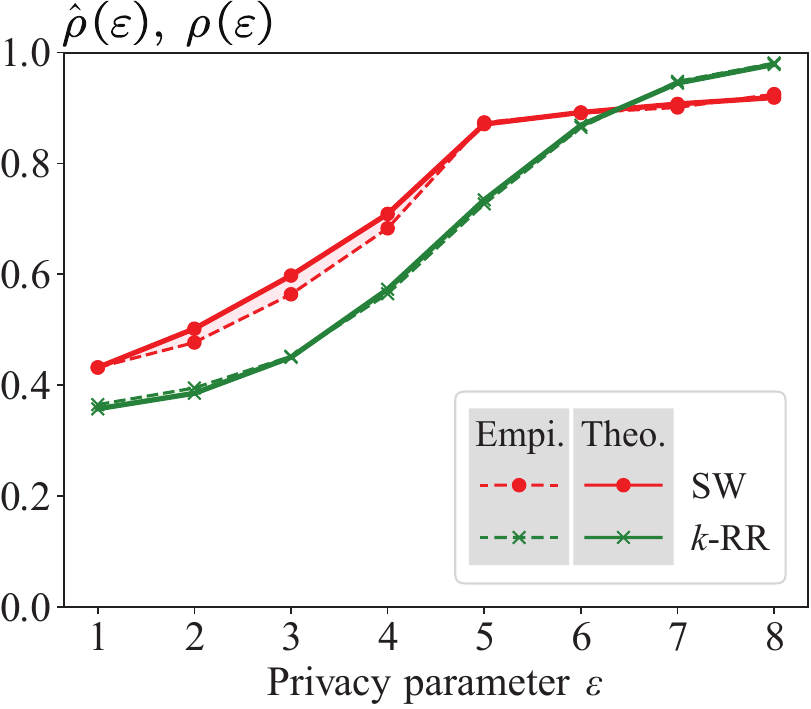}
        \caption{Logistic Regression.}
    \end{subfigure}
    \hfill
    \begin{subfigure}[b]{0.45\linewidth}
        \centering
        \includegraphics[width=0.98\textwidth]{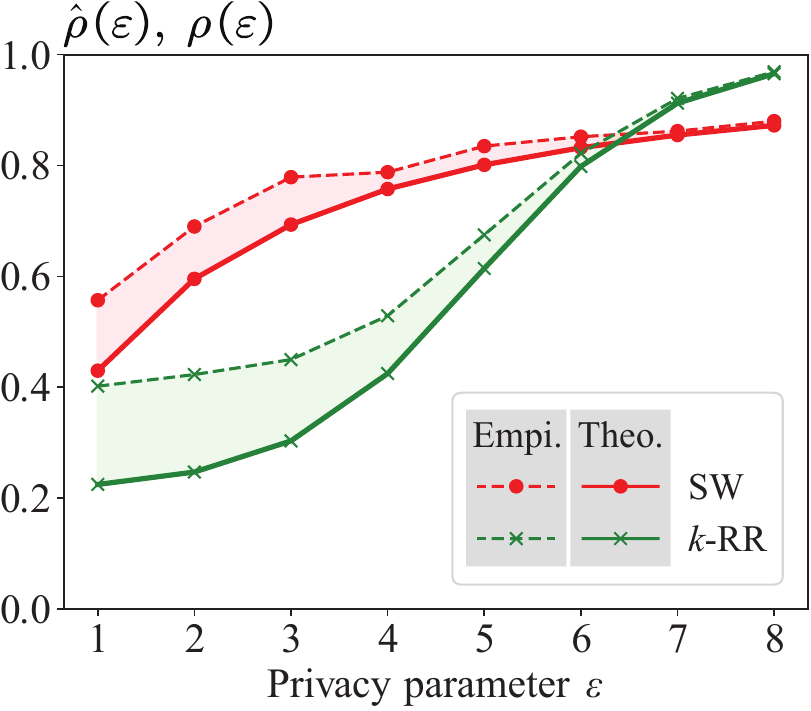}
        \caption{Random Forest.}
    \end{subfigure}
    \caption{Empirical and theoretical utility for two classifiers trained on the Stroke Prediction dataset.}
    \label{fig:appendix:other_records}
\end{figure}

\begin{figure}[t]
    \centering
    \begin{subfigure}[b]{0.45\linewidth}
        \centering
        \includegraphics[width=0.98\textwidth]{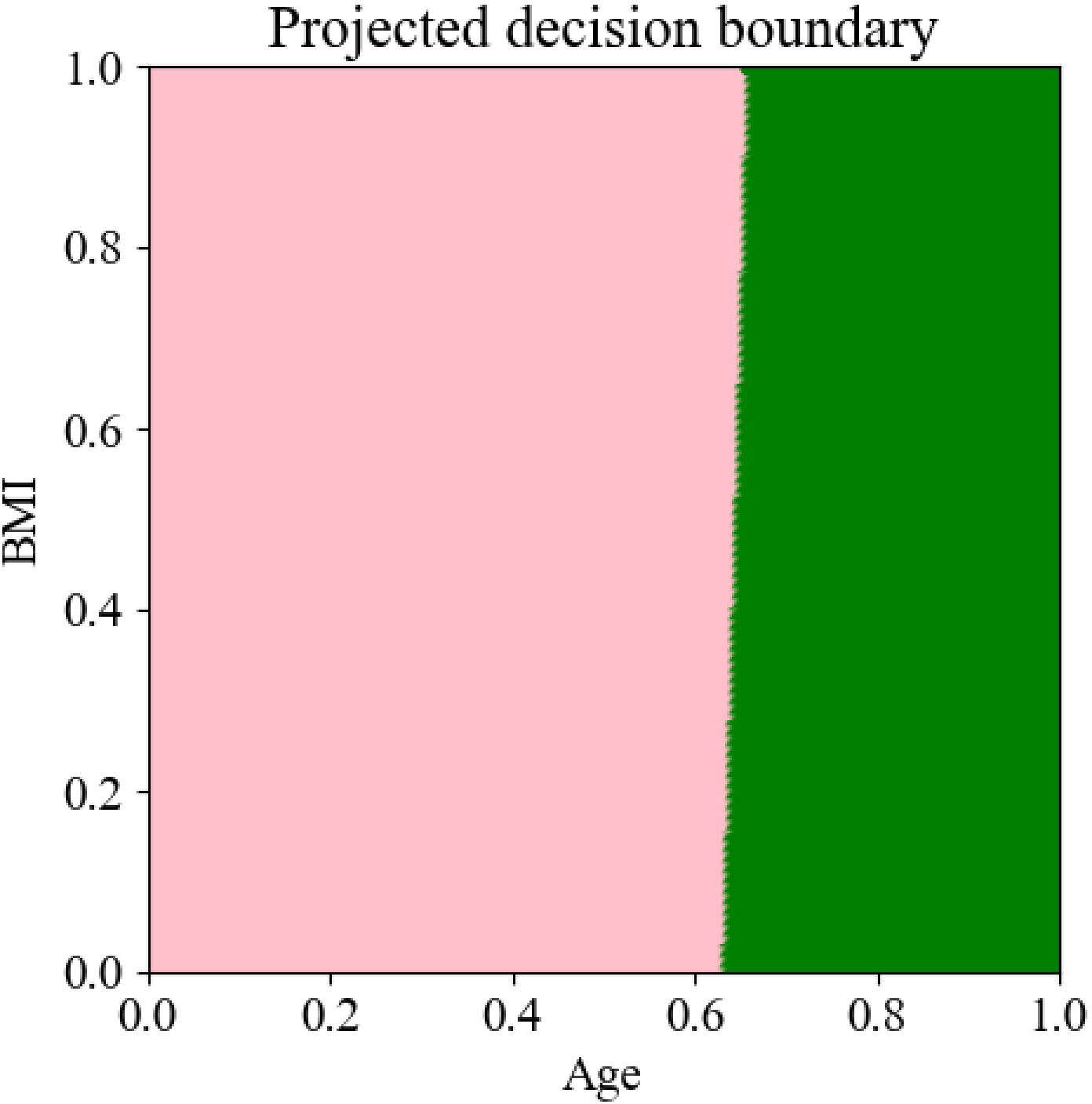}
        \caption{Logistic Regression (LR).}
        \label{fig:appendix:projected_decision_boundary_lr}
    \end{subfigure}
    \hfill
    \begin{subfigure}[b]{0.45\linewidth}
        \centering
        \includegraphics[width=0.98\textwidth]{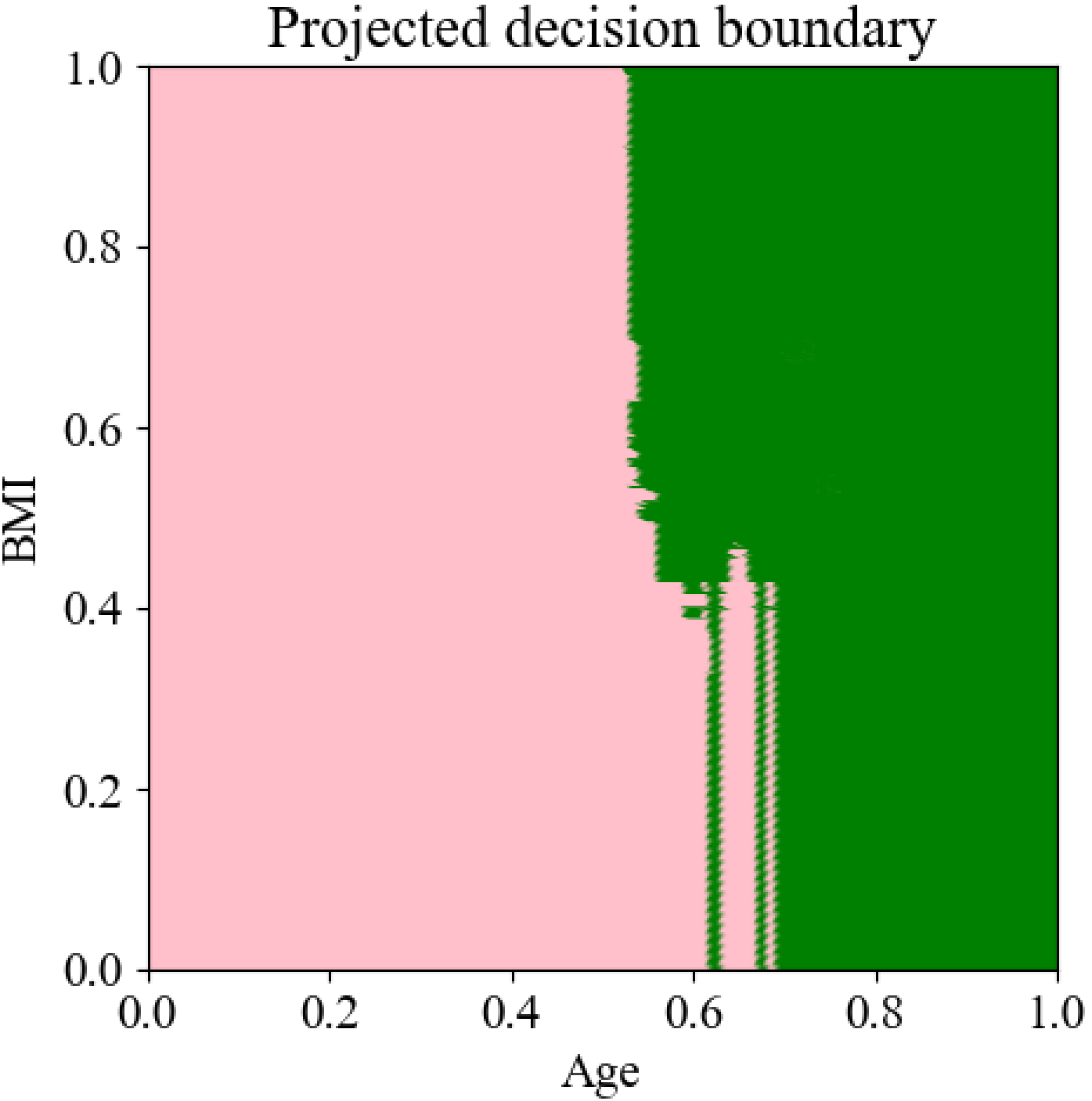}
        \caption{Random Forest (RF).}
        \label{fig:appendix:projected_decision_boundary_rf}
    \end{subfigure}
    \caption{Projected decision boundary of the two classifiers on the stroke dataset.}
    \label{fig:appendix:projected_decision_boundary}
\end{figure}

\subsection{More Case Studies for the Stroke Prediction Dataset (Section~\ref{subsubsec:stroke_prediction})} \label{appendix:more_examples}

This section provides a detailed utility analysis of the two classifiers
trained on the Stroke Prediction dataset.

\subsubsection{Other Records and Mechanisms} \label{appendix:stroke:other_records}
We also focus on the first record in the dataset, i.e. the record ``Age: 67, BMI: 36.6, Hypertension: 0, \dots'',
to evaluate the performance of other two LDP mechanisms: the SW mechanism and the $k$-RR mechanism.

Figure~\ref{fig:appendix:other_records} shows the theoretical and empirical utility of the two classifiers
under the SW and $k$-RR mechanisms.
The theoretical utility of both mechanisms is almost the same as the empirical utility
for the Logistic Regression classifier.
For the Random Forest classifier, the theoretical utility is a lower bound for the empirical utility,
especially when $\varepsilon$ is small.
\ Meanwhile, we can see that the SW mechanism does not always outperform the $k$-RR mechanism.
When $\varepsilon$ exceeds $6$, $k$-RR outperforms SW in both classifiers.

\subsubsection{Closed Form CDF of the PM Mechanism} \label{appendix:stroke:pm_cdf}

This section provides the closed form expression for $\rho(\varepsilon,\theta)$ when using the PM mechanism.

The input ``Age: 79'' corresponds to the normalized value $x = 0.79$ in domain $[0,1]$.
Then according to Definition~\ref{def:pm_mechanism} of the PM mechanism, we have
\begin{equation*}
    F_{\text{PM}_{\varepsilon,\text{Age}}}(0.63) = \int_{0}^{0.63} pdf[\mechanism(0.79) = \tilde{x}] \mathrm{d}\tilde{x}.
\end{equation*}
For piecewise-based mechanisms, we need to consider the range of $[l_{0.79,\varepsilon}, r_{0.79,\varepsilon}]$ relative to $0.63$.
For $\varepsilon$ that makes $0.63 \not\in [l_{0.79,\varepsilon}, r_{0.79,\varepsilon}]$,
the above integral is 
\begin{equation*}
    F_{\text{PM}_{\varepsilon,\text{Age}}}(0.63) = 0.63 \cdot \exp(-\frac{\varepsilon}{2}).
\end{equation*}
For $\varepsilon$ that makes $0.63 \in [l_{0.79,\varepsilon}, r_{0.79,\varepsilon}]$,
the above integral is
\begin{equation*}
    F_{\text{PM}_{\varepsilon,\text{Age}}}(0.63) = (0.79 - C) \cdot \exp(-\frac{\varepsilon}{2}) + (C - 0.16) \cdot \exp(\frac{\varepsilon}{2}),
\end{equation*}
where $C=(\exp(\varepsilon / 2) - 1) / (2\exp(\varepsilon) - 2)$, as define in Definition~\ref{def:pm_mechanism}.

\subsubsection{Projected Decision Boundary} \label{appendix:stroke:projected_decision_boundary}

The gap between the theoretical and empirical utility for the Random Forest classifier
comes from the complexity of its decision boundary.

Figure~\ref{fig:appendix:projected_decision_boundary} illustrates the projected decision boundaries of the two classifiers
on the ``Age'' and ``BMI'' features.
We can see that the decision boundary of the Random Forest classifier is significantly more complex than that of the Logistic Regression classifier,
i.e. a combination of many hyperrectangles, making it hard to be approximated by one hyperrectangle.
Consequently, the theoretical utility for the Random Forest classifier is conservative compared to its actual utility.
\ In contrast, the simpler decision boundary of the Logistic Regression classifier 
can be well approximated by hyperrectangles, leading to a precise theoretical utility.

\begin{figure}[t]
    \centering
    \begin{subfigure}[b]{0.45\linewidth}
        \centering
        \includegraphics[width=0.98\textwidth]{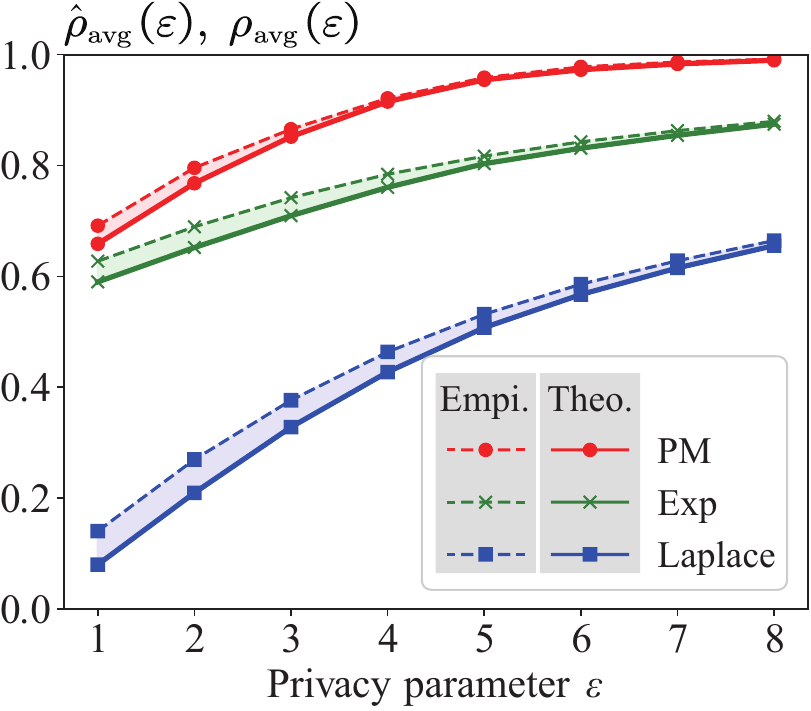}
        \caption{LR: Average-case utility.}
        \label{fig:exp:stroke_lr_avg}
    \end{subfigure}
    \hfill
    \begin{subfigure}[b]{0.45\linewidth}
        \centering
        \includegraphics[width=0.98\textwidth]{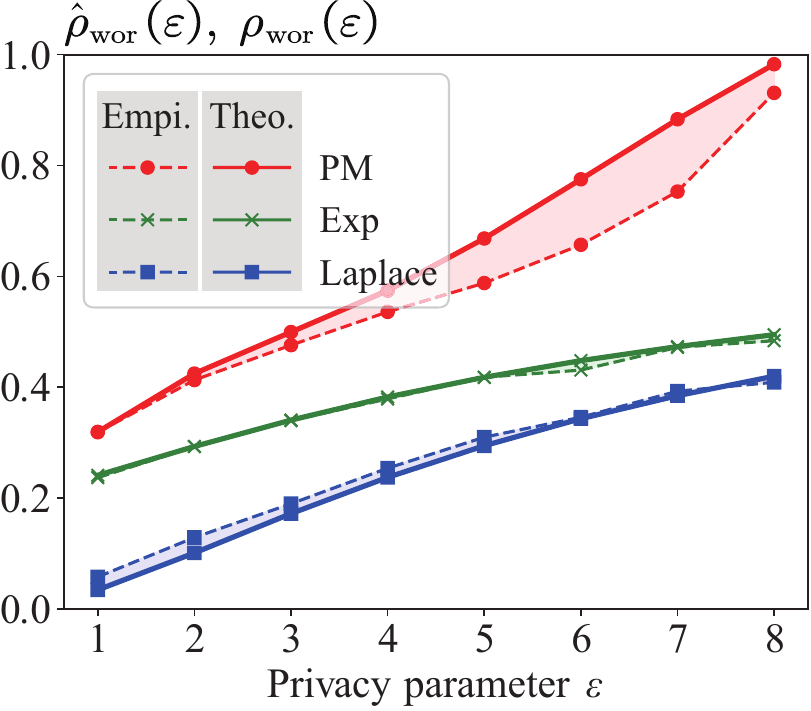}
        \caption{LR: Worst-case utility.}
        \label{fig:exp:stroke_lr_wor}
    \end{subfigure}
    \caption{Average-case and worst-case utility for the Logistic Regression classifier trained on the Stroke Prediction dataset.}
    \label{fig:exp:stroke_lr_avg_wor}
\end{figure}

\begin{figure}[t]
    \centering
    \begin{subfigure}[b]{0.45\linewidth}
        \centering
        \includegraphics[width=0.98\textwidth]{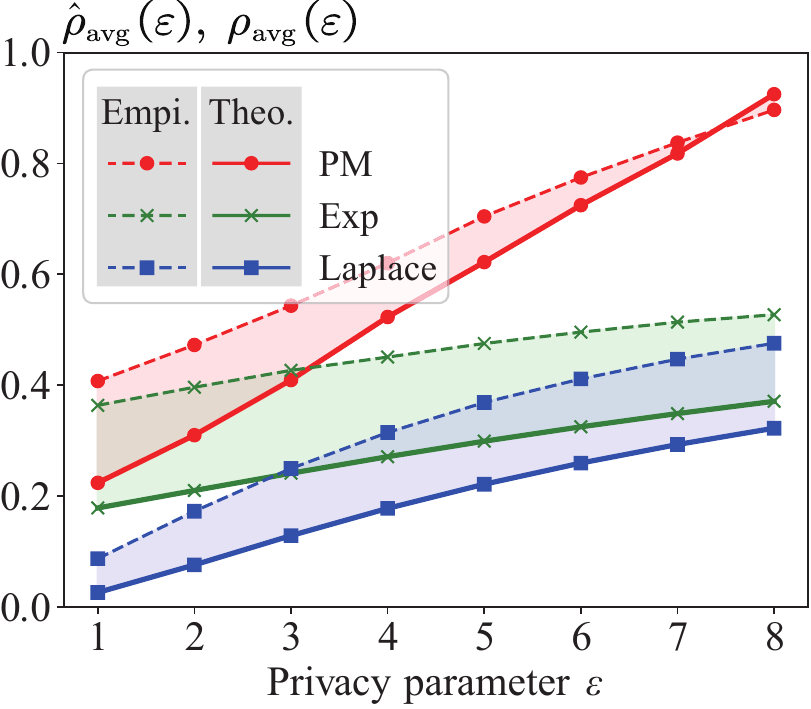}
        \caption{RF: Average-case utility.}
        \label{fig:exp:stroke_rf_avg}
    \end{subfigure}
    \hfill
    \begin{subfigure}[b]{0.45\linewidth}
        \centering
        \includegraphics[width=0.98\textwidth]{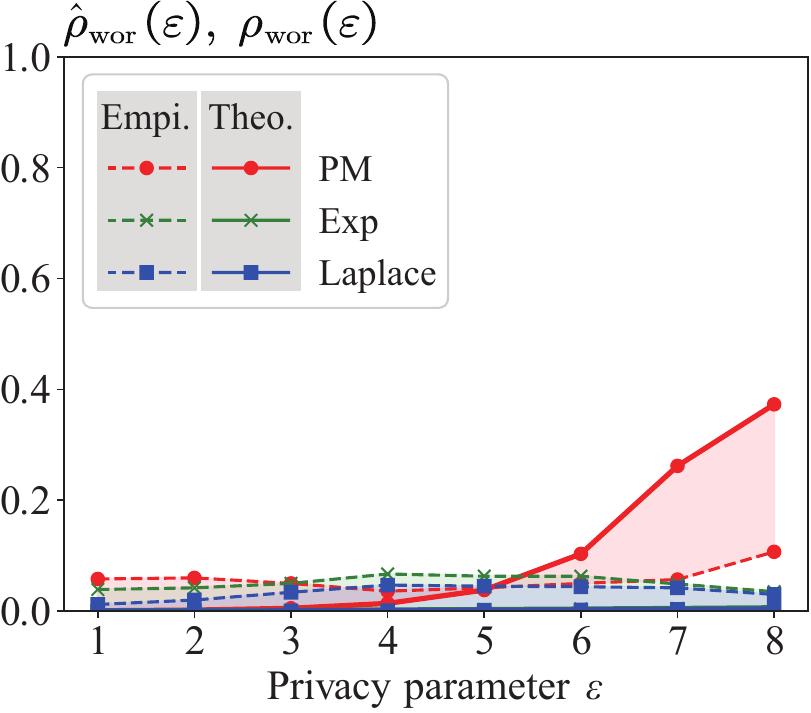}
        \caption{RF: Worst-case utility.}
        \label{fig:exp:stroke_rf_wor}
    \end{subfigure}
    \caption{Average-case and worst-case utility for the Random Forest classifier trained on the Stroke Prediction dataset.}
    \label{fig:exp:stroke_rf_avg_wor}
\end{figure}

\subsubsection{Average-case and Worst-case Utility} \label{appendix:stroke:average_worst_case}

Figure~\ref{fig:exp:stroke_lr_avg_wor} shows the average-case and worst-case utility of the Logistic Regression classifier under three LDP mechanisms. 
For the average-case utility, the trends closely mirror those observed in the point-wise utility quantification presented in the main text.
For the worst-case utility, the curves are less regular but still exhibit the same overall ordering among mechanisms: 
the PM mechanism consistently achieves the highest utility, followed by the Exponential mechanism, and then the Laplace mechanism.
In the worst-case scenario, we observe that the theoretical utility is not always lower than the empirical utility (though they are close), 
especially for the PM mechanism when $\varepsilon$ is large. 
This occurs because the worst-case robustness hyperrectangle is typically two-dimensional and small,
which amplifies the impact of sampling errors in empirical evaluation.
In such cases, the actual utility is often higher than the empirical utility.

Figure~\ref{fig:exp:stroke_rf_avg_wor} shows the average-case and worst-case utility of the Random Forest classifier.
The worst-case utility for the Random Forest classifier is significantly smaller than that of the Logistic Regression classifier.
This is due to the complex decision boundary of the Random Forest classifier, as shown in Figure~\ref{fig:appendix:projected_decision_boundary_rf}.
Around $\text{Age, BMI} \approx \{0.65, 0.3\}$, the robustness hyperrectangle is significantly smaller than that of the Logistic Regression classifier,
leading to both small theoretical and empirical utilities.

\textbf{Conclusion on Robustness.}
From the results of average-case and worst-case utility analysis of the Logistic Regression and Random Forest classifiers,
we can conclude that the Logistic Regression classifier is more robust under LDP-perturbed inputs than the Random Forest classifier.

\subsection{More Case Studies for the Bank Customer Attrition Dataset (Section~\ref{subsubsec:bank_attrition})} \label{appendix:more_bank_attrition}

\subsubsection{Detailed Quantification} \label{appendix:detailed_quantification}

We take the Logistic Regression classifier as an example to show the detailed utility quantification under the PAC LDP.

The robustness hyperrectangle at this record is $\theta_\diamond = [0, 0.72]\times [0,1]$. 
Using this information, we can provide the quantification of the utility at the record under the PM mechanism.

\vspace{0.5em}
\noindent\emph{For this trained Logistic Regression classifier on the Bank Customer Attrition dataset,
at the record ``Age: 22, Estimated Salary: 101\,348, Credit Score: 619, \dots'', with probability at least $\rho(\varepsilon, \theta_\diamond)$, 
the classifier preserves the correct prediction under the privacy indicator $\mathcal{I}_{0.1}$ and the PM mechanism applied to 
``Age'' and ``Estimated Salary'' (i.e. $(2\varepsilon, 0.1)$- PAC LDP), where
\begin{equation*}
    \resizebox{\linewidth}{!}{$
    \begin{aligned}
        \rho(\varepsilon,\theta) =& [F_{\text{PM}_{\varepsilon,\text{Age}}}(0.72) - F_{\text{PM}_{\varepsilon,\text{Age}}}(0)][F_{\text{PM}_{\varepsilon,\text{Sal}}}(1) - F_{\text{PM}_{\varepsilon,\text{Sal}}}(0)] \\
        =& F_{\text{PM}_{\varepsilon,\text{Age}}}(0.72).
    \end{aligned}$}
\end{equation*}
}

For the Gaussian mechanism, the utility quantification is similar but uses the Gaussian CDF.

\vspace{0.5em}
\noindent\emph{For this trained Logistic Regression classifier on the Bank Customer Attrition dataset,
at the record ``Age: 22, Estimated Salary: 101\,348, Credit Score: 619, \dots'', with probability at least $\rho(\varepsilon, \theta_\diamond)$, 
the classifier preserves the correct prediction under the Gaussian mechanism (Theorem~\ref{thm:gaussian_mechanism}) applied to 
``Age'' and ``Estimated Salary'' (i.e. $(2\varepsilon, 0.1)$- PAC LDP), where
\begin{equation*}
    \resizebox{\linewidth}{!}{$
    \begin{aligned}
        \rho(\varepsilon,\theta) =& [F_{\text{Gau}_{\varepsilon,\text{Age}}}(0.72) - F_{\text{Gau}_{\varepsilon,\text{Age}}}(0)][F_{\text{Gau}_{\varepsilon,\text{Sal}}}(1) - F_{\text{Gau}_{\varepsilon,\text{Sal}}}(0)] \\
        =& F_{\text{Gau}_{\varepsilon,\text{Age}}}(0.72) - F_{\text{Gau}_{\varepsilon,\text{Age}}}(0).
    \end{aligned}$}
\end{equation*}
}

\begin{figure}[t]
    \centering
    \begin{subfigure}[b]{0.45\linewidth}
        \centering
        \includegraphics[width=0.98\textwidth]{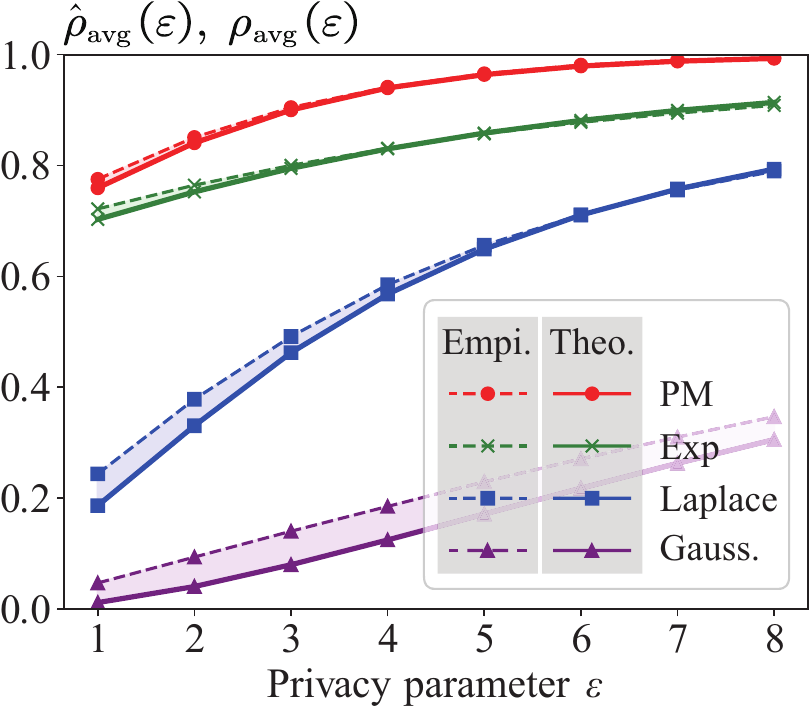}
        \caption{LR: Average-case utility.}
        \label{fig:exp:bank_lr_avg}
    \end{subfigure}
    \hfill
    \begin{subfigure}[b]{0.45\linewidth}
        \centering
        \includegraphics[width=0.98\textwidth]{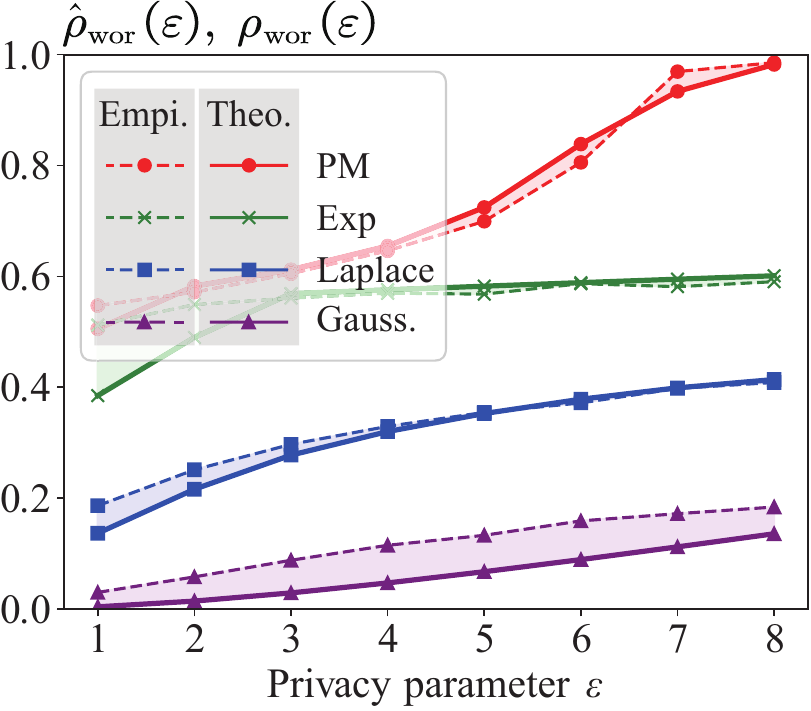}
        \caption{LR: Worst-case utility.}
        \label{fig:exp:bank_lr_wor}
    \end{subfigure}
    \caption{Average-case and worst-case utility for the Logistic Regression classifier trained on the Bank Customer Attrition dataset.}
    \label{fig:exp:bank_lr_avg_wor}
\end{figure}

\begin{figure}[t]
    \centering
    \begin{subfigure}[b]{0.45\linewidth}
        \centering
        \includegraphics[width=0.98\textwidth]{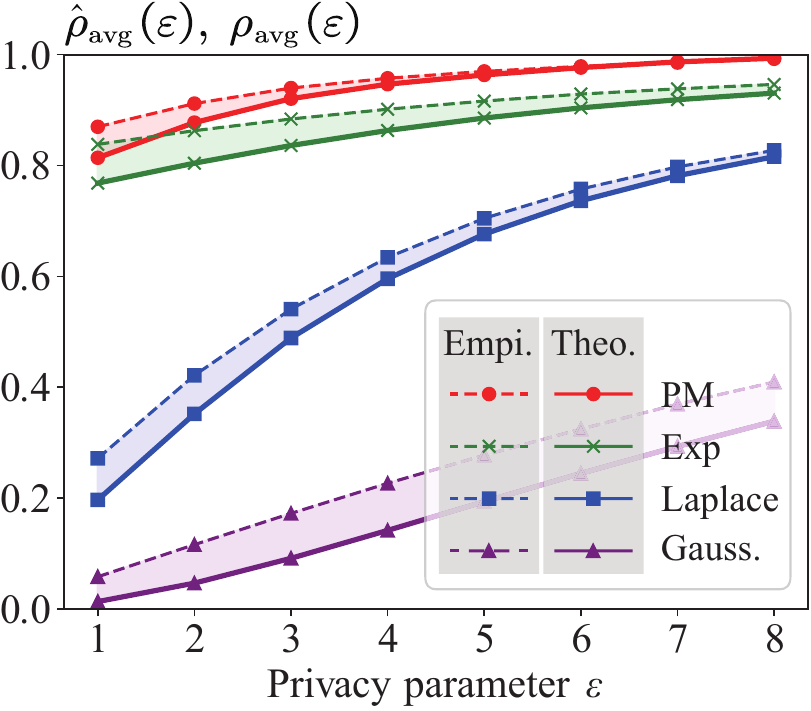}
        \caption{RF: Average-case utility.}
        \label{fig:exp:bank_rf_avg}
    \end{subfigure}
    \hfill
    \begin{subfigure}[b]{0.45\linewidth}
        \centering
        \includegraphics[width=0.98\textwidth]{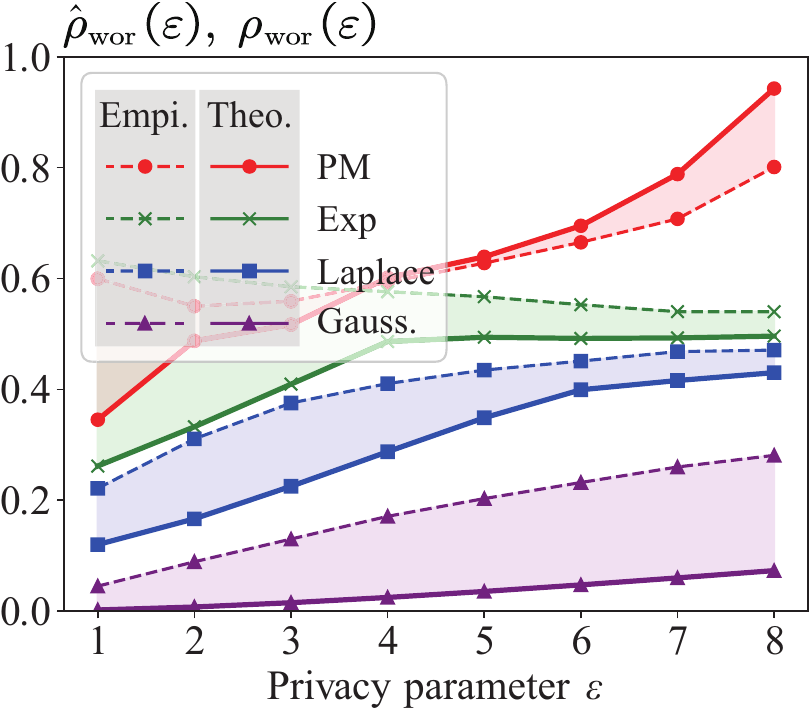}
        \caption{RF: Worst-case utility.}
        \label{fig:exp:bank_rf_wor}
    \end{subfigure}
    \caption{Average-case and worst-case utility for the Random Forest classifier trained on the Bank Customer Attrition dataset.}
    \label{fig:exp:bank_rf_avg_wor}
\end{figure}

\subsubsection{Average-case and Worst-case Utility} \label{appendix:bank:average_worst_case}

Figure~\ref{fig:exp:bank_lr_avg_wor} and Figure~\ref{fig:exp:bank_rf_avg_wor} summarize the average-case and worst-case utility for classifiers trained on the Bank Customer Attrition dataset.
The observed trends are consistent with those from the Stroke Prediction dataset:
(i) The theoretical utility closely matches the empirical utility, especially in the average-case scenario.
(ii) The PM mechanism consistently achieves the highest utility, followed by the Exponential mechanism, then the Laplace mechanism, and finally the Gaussian mechanism.

\subsection{More Details for the MNIST Classifier (Section~\ref{subsubsec:mnist})} \label{appendix:mnist}

\subsubsection{Monte Carlo Estimation of the Utility} \label{appendix:mnist:improvement}
We emphasize the conceptual connection between LDP and robustness while adopting robustness notions that largely align with the standard \emph{local robustness} used in prior work. 
Although these notions are clean and well established, they can yield conservative utility estimates in high-dimensional settings.

To obtain tighter utility quantification in high dimensions, one can consider more flexible robustness notions, e.g. robustness regions given by irregular sets rather than axis-aligned hyperrectangles (as induced by complex decision boundaries). 
Such regions can be estimated via Monte Carlo sampling in high-dimensional spaces. 
This estimation is performed once; afterward, for any given $\varepsilon$, the utility can be theoretically quantified by approximately integrating the $d$-dimensional LDP mechanism's probability distribution over the estimated robustness region.

Formally, let $h:[0,1]^d \to \{1,\ldots,K\}$ be a $d$-dimensional classifier and let $x\in[0,1]^d$ be an input with predicted (and correct) label $h(x)$.
Draw $\tilde{x}_1,\ldots,\tilde{x}_N$ i.i.d. uniformly from $[0,1]^d$, and for each sample check whether $h(\tilde{x}_i)=h(x)$ holds.
Assume that $m$ of these samples satisfy the check, and let $\mathcal{S}\subseteq[0,1]^d$ denote an estimated robustness region that contains the accepted samples.
Under a $d$-dimensional (continuous) LDP mechanism,\footnote{
    For discrete LDP mechanisms, the utility can be computed directly by summing the probabilities that the mechanism outputs samples in $\mathcal{S}$, eliminating the need for volume estimation.
    We omit this discussion for simplicity.
}
the utility can then be quantified as
\begin{equation*}
    \begin{split}
        \rho(\varepsilon, \mathcal{S}) =& \int_{\mathcal{S}} pdf[\mechanism_\varepsilon(x) = \tilde{x}] \mathrm{d}\tilde{x} \\
        \approx & \mathrm{Vol}(\mathcal{S}) \cdot \frac{1}{m} \sum_{i=1}^m pdf[\mechanism_\varepsilon(x) = \tilde{x}_i],
    \end{split}
\end{equation*}
where $\mathrm{Vol}(\mathcal{S})$ is the volume of the region $\mathcal{S}$, which can be estimated by $\mathrm{Vol}([0,1]^d) \cdot m / N$.
It is still a theoretical measure, since samples are fixed and $pdf[\mechanism_\varepsilon(x) = \tilde{x}_i]$
is determined analytically by the LDP mechanism and can be evaluated for any $\varepsilon$.

This approach follows the connection between concentration analysis and robustness analysis established in this paper, 
with the concentration analysis now performed over the estimated robustness region $\mathcal{S}$ rather than the hyperrectangle $\theta_\diamond$.
The error is now governed by the Monte Carlo estimation error (i.e. $\mathcal{O}(1/\sqrt{N})$)
and volume estimation error, instead of Hoeffding bound in Theorem~\ref{thm:hoeffding}.

We experimented with this approach on the MNIST-7$\times$7 dataset. 
While it is theoretically sound, in such a high-dimensional space the samples that pass the uniform-sampling check are sparse. 
As a result, for most accepted samples $\tilde{x}_i$, the density $pdf[\mechanism_\varepsilon(x)=\tilde{x}_i]$ is close to zero, 
which drives the estimated utility to an unrealistically small value.
Moreover, we cannot directly apply importance sampling using $pdf[\mechanism_\varepsilon(x)]$ as the proposal distribution, 
since this proposal distribution depends on $\varepsilon$ and would essentially reduce the procedure to empirical utility estimation.

\subsubsection{Average-case and Worst-case Utility} \label{appendix:mnist:average_worst_case}

\begin{figure}[t]
    \centering
    \begin{subfigure}[b]{0.45\linewidth}
        \centering
        \includegraphics[width=0.98\textwidth]{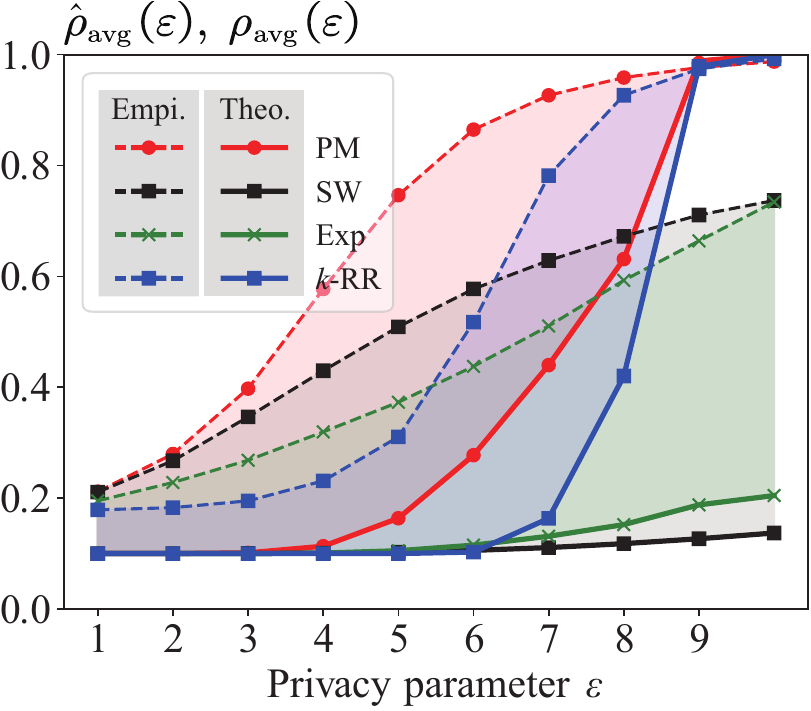}
        \caption{NN: Average-case utility.}
        \label{fig:exp:mnist_cnn_avg}
    \end{subfigure}
    \hfill
    \begin{subfigure}[b]{0.45\linewidth}
        \centering
        \includegraphics[width=0.98\textwidth]{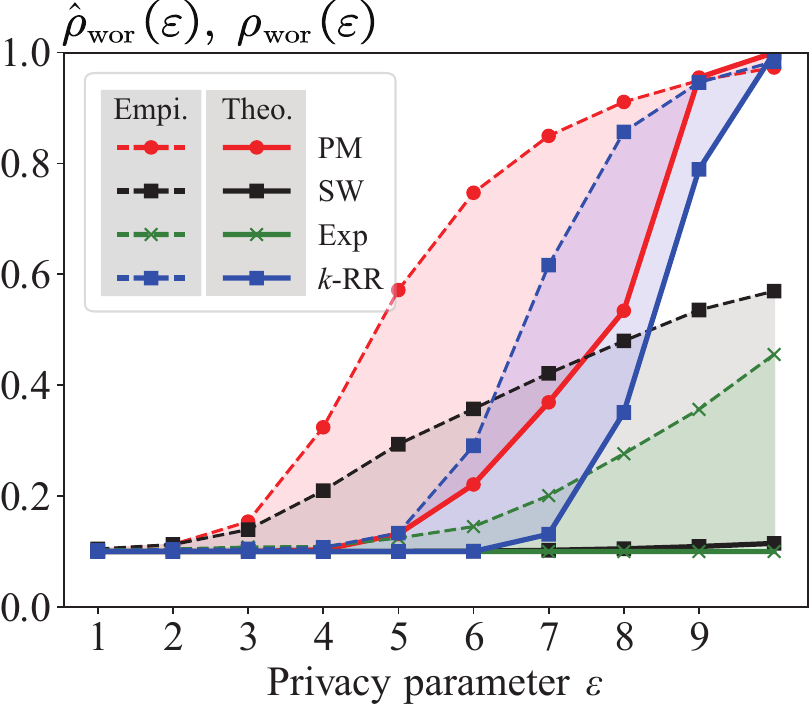}
        \caption{NN: Worst-case utility.}
        \label{fig:exp:mnist_cnn_wor}
    \end{subfigure}
    \caption{Average-case and worst-case utility for the Neural Network classifier trained on the MNIST-7$\times$7 dataset.}
    \label{fig:appendix:mnist_cnn_avg_wor}
\end{figure}

We use the first 50 images from the MNIST-7$\times$7 dataset as representative samples for analyzing the average-case and worst-case utility under LDP-perturbed inputs.
Figure~\ref{fig:appendix:mnist_cnn_avg_wor} presents the results of this analysis.
Both the average-case and worst-case utility show a similar trend to the point-wise utility quantification, with the PM mechanism consistently achieving the highest utility.

\end{document}